\DeclareMathOperator{\dep}{dep}
\DeclareMathOperator{\arr}{arr}
\let\epsilon\varepsilon
\newlength{\commentWidth}
\let\oldnl\nl
\newcommand{\nonl}{\renewcommand{\nl}{\let\nl\oldnl}}
\newtheorem{theorem}{Theorem}
\newtheorem{lemma}[theorem]{Lemma}
\newtheorem{fact}[theorem]{Fact}
\newcommand{\bcl}{\mbox{$\mathrm{\textsc{Fast}}$}}
\newcommand{\bmnr}{\mbox{$\mathrm{\textsc{bmnr}}$}}
\newcommand{\onbra}{\mbox{\textsc{Onbra}}}
\newcommand{\tcost}{\mbox{$\Theta$}}
\newcommand{\cost}{\mbox{$\Gamma$}}
\newcommand{\tcf}{\mbox{$\mathtt{TC}$}}
\newcommand{\fa}{{Fa}}
\newcommand{\fo}{{Fo}}
\newcommand{\sh}{{Sh}}
\newcommand{\sfa}{{SFa}}
\newcommand{\sfo}{{SFo}}
\newcommand{\iarr}{\overline}
\newcommand{\idep}{\underline}
\DeclareMathOperator{\processcosts}{\mathtt{Finalize}}
\newcommand{\intervs}[1]{{\mathcal{I}}_{#1}}
\title{Making Temporal Betweenness Computation Faster and Restless}
\author[1]{Filippo Brunelli}
\affil[1]{European Commission –- JRC, Seville, Spain}
\author[2]{Pierluigi Crescenzi}
\affil[2]{Gran Sasso Science Institute, L'Aquila, Italy}
\author[3]{Laurent Viennot}
\affil[3]{Inria, DI ENS, Paris, France}
\begin{document}

\maketitle

\begin{abstract}
Bu\ss\ et al [KDD 2020] recently proved that the problem of computing the betweenness of all nodes of a temporal graph is computationally hard in the case of foremost and fastest paths, while it is solvable in time $O(n^3T^2)$ in the case of shortest and shortest foremost paths, where $n$ is the number of nodes and $T$ is the number of distinct time steps.  A new algorithm for temporal betweenness computation is introduced in this paper. In the case of shortest and shortest foremost paths, it requires $O(n + M)$ space and runs in time $O(nM)=O(n^3T)$, where $M$ is the number of temporal edges, thus significantly improving the algorithm of Bu\ss\ et al in terms of time complexity (note that $T$ is usually large). Experimental evidence is provided that our algorithm performs between twice and almost 250 times better than the algorithm of Bu\ss\ et al. Moreover, we were able to compute the exact temporal betweenness values of several large temporal graphs with over a million of temporal edges. For such size, only approximate computation was possible by using the algorithm of Santoro and Sarpe [WWW 2022]. Maybe more importantly, our algorithm extends to the case of \textit{restless} walks (that is, walks with waiting constraints in each node), thus providing a polynomial-time algorithm (with complexity $O(nM)$) for computing the temporal betweenness in the case of several different optimality criteria. Such restless computation was known only for the shortest criterion (Rymar et al [JGAA 2023]), with complexity $O(n^2MT^2)$. We performed an extensive experimental validation by comparing different waiting constraints and different optimisation criteria. Moreover, as a case study, we investigate six public transit networks including Berlin, Rome, and Paris. Overall we find a general consistency between the different variants of betweenness centrality. However, we do measure a sensible influence of waiting constraints, and note some cases of low correlation for certain pairs of criteria in some networks.
\end{abstract}

\textbf{Keywords:} {node centrality, betweenness, temporal graphs, graph mining}



\section{Introduction}
\label{sec:introduction}

Social network analysis is usually considered to start with the book of Moreno~\cite{Moreno1934}, where \textit{sociograms}  (that is, graphs) are used to study the relationships between kids from kindergarten to the 8th grade. Successively, Bavelas~\cite{Bavelas50} used sociogram analysis (that is, graph mining) techniques to identify the most important members of a group. Several notions of node importance were successively introduced until Freeman~\cite{Freeman77} proposed precise different definitions of \textit{node centrality}, such as the degree centrality, the closeness centrality, and the \textit{betweenness centrality}. This latter centrality, which measures how often a node participates in an optimal path, has been repeatedly applied in different research contexts, such as, for example, the analysis of brain, collaboration, citation and, in general, social networks. From a computational point of view, Brandes~\cite{Brandes_2001} proposed an algorithm for computing the betweenness centrality of all nodes in time $O(nm)$, where $n$ is the number of nodes and $m$ is the number of edges. Due to the huge size of some real-world networks, several approximation algorithms for computing the betweenness centrality have also been proposed in the last twenty years (such as, for example, the ABRA and KADABRA algorithms described in~\cite{RiondatoU18,Borassi2019}), mostly based on sampling techniques.

More recently, the notion of betweenness centrality has been also applied to the case of temporal graphs. Indeed, many real-world complex networks evolve over time, in the sense that edges can appear and disappear at specific time instants. As observed in~\cite{Latapy_2018}, this is due to the fact that interactions between nodes take place over time, as it happens, for example, in the case of collaboration, communication, user-product, and transport networks. Many different modelizations of these evolving networks have been proposed in the literature, such as the \textit{evolving graph} model analysed in~\cite{Ferreira2004}, the \textit{time-dependent graph} model studied in~\cite{Foschini2014}, the time-varying graph model~\cite{Casteigts2012}, or the \textit{link stream} model introduced in~\cite{Latapy_2018}, 
and the temporal graph model~\cite{Michail2016}. A \textit{temporal graph} is a collection of \textit{temporal edges} over a fixed set of nodes. Each temporal edge is an edge $(u,v)$ with an associated \textit{availability} or \textit{departure} time $\tau$ and a \textit{traversal} or \textit{travel} time $\lambda$. A temporal edge $e=(u,v,\tau,\lambda)$ can then be traversed starting from $u$ at time $\tau$ and arriving in $v$ at time $\tau+\lambda$. The notion of walk (or of path, if node repetitions are not allowed) can be easily adapted to the case of temporal graphs by imposing the natural condition of traversing the temporal edges in ascending time: that is, for any two consecutive temporal edges $(u,v,\tau_{1},\lambda_{1})$ and $(v,w,\tau_{2},\lambda_{2})$ in the walk, $\tau_{2}\geq\tau_{1}+\lambda_{1}$.\footnote{In this paper, we assume that the traversal time is always positive, which corresponds to the case called \textit{strict} in the literature.} However, introducing the temporal dimension arises different notions of optimal walks or paths, the most common used being the \textit{shortest one} (with the fewest number of temporal edges), the \textit{foremost one} (with the earliest arrival time), and the fastest one (with the smallest total travel time). Once a notion of optimal walk or path is adopted, the corresponding notion of betweenness can be analysed in terms of its computational complexity. Indeed, this has been done in~\cite{BussMNR20} in the case of paths, where the authors proved that computing the shortest betweenness of all nodes can be done in time $O(n^{3}T^{2})$, where $n$ is the number of nodes and $T$ is the number of distinct time steps in the temporal graph, while computing the foremost and the fastest betweenness is \#P-hard, that is, most likely computationally intractable. These cases are also studied in~\cite{Rymar2023}, where, among other results, the authors proved that the restless shortest betweenness (in the case of walks) can be computed in time $O(n^{2}MT^{2})$, where $M$ is the number of temporal edges (interestingly, the \#P-hardness result does not hold when considering walks rather than paths). A \textit{restless} walk (or path) is a walk in which we do not ``wait'' at the same node more than $\beta$ time units, where $\beta$ is a constant: that is, for any two consecutive temporal edges $(u,v,\tau_{1},\lambda_{1})$ and $(v,w,\tau_{2},\lambda_{2})$ in the walk, $\tau_{2}\leq\tau_{1}+\lambda_{1}+\beta$ in addition to $\tau_{2}\geq\tau_{1}+\lambda_{1}$. Note that, in the restless case, even deciding whether there exists a path between two specific nodes is NP-complete~\cite{Casteigts2021}. Hence, in this case, we are forced to consider walks instead of paths. Our main contributions are the following.

\begin{itemize}
\item We propose an algorithm for computing the betweenness in time $O(nM)=O(n^{3}T)$ in the case of shortest and shortest foremost paths (note that, in these cases, optimal non-restless walks are always paths). Our algorithm significantly outperforms the previously known algorithm (whose complexity was $O(n^{3}T^{2})$) for computing the shortest and the shortest foremost betweenness~\cite{BussMNR20}.

\item We propose an algorithm for computing the betweenness in time $O(nM)$ in the case of fastest, foremost, shortest, shortest fastest, and shortest foremost restless walks. The only previously known polynomial-time algorithm (with complexity $O(n^{2}MT^{2})$) was in the case of shortest restless walks~\cite{Rymar2023}: our algorithm significantly improves this algorithm. As far as we know, in all the other cases our algorithm is the first polynomial-time one and we conjecture it is optimal. Indeed, the algorithm is based on a new way of counting the number of optimal walks from a given source in time $O(M)$. This complexity is clearly optimal.

\item We perform an extensive experimental evaluation of our non-restless algorithm, based on a diverse set of real-world networks that includes all publicly available networks from the works of~\cite{BussMNR20,Santoro_2022,Becker2023}. In particular, we compare the execution time of our algorithm and of the algorithm proposed in~\cite{BussMNR20} for computing the shortest and the shortest foremost betweenness. It turns out that our algorithm is between twice and almost 250 faster than the algorithm of~\cite{BussMNR20}.

\item By using our non-restless algorithm, we are able to compute in a reasonable amount of time the shortest betweenness of all nodes of three quite large temporal graphs analysed in~\cite{Santoro_2022}, for which only approximate values were available so far by making use of the ONBRA approximation algorithm proposed in that paper and based on a sampling technique. From the results reported in~\cite{Santoro_2022}, it also follows that our algorithm is almost always significantly faster than ONBRA.

\item We apply our restless algorithm for computing the fastest, foremost, shortest, shortest foremost, and shortest fastest betweenness of all nodes of the temporal graphs considered in the first experiment. By referring to the (weighted) Kendall's $\tau$ correlation and to the intersection of the top-50 node sets, we compare the node rankings produced by the different betweenness measures and by different waiting constraints and we observe that there exists a general consistency between the different variants of betweenness centrality. We do also measure a sensible influence of waiting constraints, and note some cases of low correlation for certain pairs of criteria in some networks.

\item As a case study, we apply our algorithm to the analysis of several public transport networks among the ones published in~\cite{Kujala2018,Crescenzi2019}. In particular, we compare the execution time, the (weighted) Kendall's $\tau$, and the size of the intersection of the top-100 node sets for two different notions betweenness, that is, the shortest fastest and the shortest foremost betweenness. We observe a strong consistency between the two different variants of betweenness centrality, and a much lower consistency with the rankings produced by the betweenness of the static underlying graph, thus suggesting that this latter measure cannot be used as a `proxy' of the shortest fastest and the shortest foremost betweenness.
\end{itemize}

Both our algorithm and the algorithm proposed in~\cite{BussMNR20} follow a two phase approach (that is a path counting forward phase and a betweenness accumulation backward phase) and are both inspired by Brandes' algorithm~\cite{Brandes_2001}. A first difference between our algorithm and the algorithm of~\cite{BussMNR20} is that the latter focuses on temporal vertices and explore their temporal neighbors, while our algorithm focuses on temporal edges and explore their temporal extensions. A second (and maybe, main) difference is that we leverage on two orderings of the temporal edges to overall consider each temporal edge (both in the forward phase and in the backward phase) a constant number of times, rather than considering a temporal neighbor for each of its predecessors. Third, our data structure allows to store predecessors in linear space with respect to the number of temporal edges. Finally, we note that the approach used in~\cite{Santoro_2022} is different as it is an approximation algorithm and it works through sampling. However, the way the paths are counted is similar to~\cite{BussMNR20}.
\section{Related work}

The literature on centrality measures being vast (as demonstrated by the clever periodic table of network centrality developed by David Schoch~\cite{Schoch2017,PeriodicWebSite}), we restrict our attention to approaches that are closest to ours, that is, to the realm of temporal graphs. Several introductions to temporal graphs also include surveys on temporal centrality measures (see, e.g., \cite{Holme_2015,Latapy_2018,Santoro_2011}). Clearly related to our work is the literature on the efficient computation of temporal paths and walks, such as the seminal paper of Bui-Xuan, Ferreira, and Jarry~\cite{XuanFJ03} and the more recent paper by Wu et al~\cite{Wu_2016} (the reader may also refer to the quite exhaustive analysis of this literature appeared in~\cite{BrunelliV2022}). Besides the references given in the introduction, our paper is mostly related to all work on the definition and computation of different temporal centrality measures, such as (in order of appearance) the temporal pagerank defined in~\cite{Rozenshtein_2016}, the temporal Katz centrality introduced in~\cite{Beres_2018}, the temporal reachability used in~\cite{Falzon2018}, the~$f$-PageRank centrality defined in~\cite{Lv2019}, the temporal betweenness centrality defined in~\cite{Tsalouchidou_2020}, the temporal closeness centrality treated in~\cite{Crescenzi2020,Oettershagen_2020}, the temporal walk centrality introduced in~\cite{Oettershagen_2022}, and the temporal betweenness centrality analysed in~\cite{Simard_2023}, just to mention the most recent ones. Finally, more ``local'' notions of centrality in temporal graphs have also been analysed such as the temporal version of ego betweenness introduced in Ghanem~\cite{Ghanem_2017} and the pass-through degree defined in~\cite{Becker2023}: these centralities are clearly more efficient in terms of execution time, but not always satisfying in terms of the quality of their rankings.

\section{Basic definitions and results} 
\label{sec:preliminaries}

\noindent\textbf{Temporal graphs.} A \textit{temporal graph} is a tuple ${G}=(V,{E},\beta)$, where $V$ is the set of \textit{nodes}, $E$ is the set of temporal edges, and $\beta\in \mathbb{N}\cup\{+\infty\}$ is the maximum waiting-time (we say that \textit{waiting is unrestricted} when $\beta=+\infty$). A \textit{temporal edge} $e$ is a quadruple $(u,v,\tau,\lambda)$, where $u\in V$ is the \textit{tail} of $e$, $v\in V$ is the \textit{head} of $e$, $\tau\in\mathbb{N}$ is the \textit{departure} (or \textit{availability}) \textit{time} of $e$, and $\lambda\in\mathbb{N}^{+}$ is the \textit{travel} (or \textit{traversal}) \textit{time} of $e$. We also define the \textit{arrival time} of $e$ as $\tau+\lambda$, and we let $\dep(e)=\tau$ and $\arr(e)=\tau+\lambda$ denote the departure time and arrival time of $e$, respectively. We let $n=|V|$ and $M=|E|$ denote the number of nodes and temporal edges, respectively, and $T$ denote the number of distinct availability times. Finally, for any node $v$, $E_{v}^\mathrm{h}$ will denote the set of temporal edges whose head is $v$.

\noindent\textbf{Temporal graph representation.} We use a \textit{doubly-sorted representation} of a temporal graph $(V,E,\beta)$, which consists of two lists $E^{\arr}$ and $E^{\dep}$, each containing $|E|$ quadruples representing the temporal edges in $E$: $E^{\arr}$ is a list sorted by non-decreasing arrival time and $E^{\dep}$ is a list sorted by non-decreasing departure time. More precisely, we assume that $E^{\dep}$ is specified through implicit pointers from $E^{\dep}$ to $E^{\arr}$, that link each (logical) quadruple in $E^{\dep}$ to the (physical) quadruple in $E^{\arr}$ representing the same temporal edge.

\noindent\textbf{Temporal walks.} Given a temporal graph ${G}=(V,{E},\beta)$, a \textit{walk} $W$ from (a source) $s$ to (a target) $t$, or a \textit{$st$-walk} for short, is a sequence of temporal edges $e_{i}=(u_{i},v_{i},\tau_{i},\lambda_{i})$ for $i\in[k]$, such that $s=u_1$, $v_k=t$, and, for each $i\in[k-1]$,\footnote{In the following, for any non-negative integer $n$, $[n]$ will denote the set $\{1,2,\ldots,n\}$, with $[0]=\emptyset$.} $u_{i+1}=v_{i}$ and $\arr(e_{i}) \le \tau_{i+1} \le \arr(e_{i}) + \beta$ ($W$ is also called a $se_{k}$-walk). A walk is said to be a \textit{path} if, for any $i,j\in [k]$ with $i\neq j$, $u_{i}\neq u_{j}$ and $u_{i}\neq v_{k}$.  Note that, since travel times are positive, walks are \textit{strict} in the sense that $\tau_{i} < \tau_{i+1}$, for $i\in[k-1]$. The \textit{departure time} $\dep(W)$ of $W$ is defined as $\dep(e_{1})$, while the \textit{arrival time} $\arr(W)$ of $W$ is defined as $\arr(e_{k})$. The \textit{duration} of $W$ is defined as $\arr(W)-\dep(W)$. We say that a temporal edge $e=(t,w,\tau,\lambda)$ \textit{extends} $W$ if $\arr(W)\le \tau\le \arr(W)+\beta$. When $e$ extends $W$, we can indeed define the $sw$-walk $W.e=\langle e_1,\ldots, e_k,e\rangle$ from $s$ to $w$. Moreover, we also say that $e$ extends $e_k$ as it indeed extends any $se_{k}$-walk. Finally, we say that a temporal edge $e$ is \textit{$s$-reachable} when there exists a $se$-walk.

Given a temporal graph ${G}=(V,{E},\beta)$, a $st$-walk $W$ is a \textit{shortest} (respectively, \textit{foremost}, \textit{latest}, and \textit{fastest}) \textit{walk}, if there is no $st$-walk that contains less temporal edges than $W$ (respectively, has an earlier arrival time, has a later departure time, and has a smaller duration). Moreover, a \textit{shortest foremost} (respectively, \textit{latest} and \textit{fastest}) \textit{walk} is a foremost (respectively, latest and fastest) temporal walk that is not longer than any other foremost (respectively, latest and fastest) temporal walk. In the following we will focus on shortest (\sh) and shortest foremost (\sfo) walks, since these walks will allow us to introduce our algorithms in an easier way, without hiding the generality of our approach. In the appendix, we show how our algorithms can be adapted to the other types of walks by introducing the notions of cost and target cost structure (for the sake of brevity, all the proofs are included in Appendix~\ref{sec:proofs1}).

\noindent\textbf{Two basic facts of \sh\ and \sfo\ walks.} The following two results are easy to be proved, since they immediately follows from the definition of \sh\ and \sfo\ walks.

\begin{fact}\label{fact:sfo_prefixoptimality}
Let $G=(V,E,\beta)$ be a temporal graph. For any node $s\in V$, if a walk $W$ with last temporal edge $f\in E$ is a \sh\ one among the $sf$-walks, and $e\in E$ is a temporal edge of $W$, then the prefix of $W$ up to the temporal edge $e$ is a \sh\ walk among the $se$-walks.
\end{fact}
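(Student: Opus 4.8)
The plan is to argue by contradiction via a standard cut-and-paste (exchange) argument, exploiting the fact that two $se$-walks necessarily terminate with the same edge $e$ and hence share the arrival time $\arr(e)$. First I would decompose $W$ as a concatenation $W = P \cdot S$, where $P$ is the prefix of $W$ that ends with the temporal edge $e$ (so that $P$ is itself a $se$-walk) and $S$ is the suffix consisting of the temporal edges of $W$ strictly after $e$. If $e = f$, then $S$ is empty and $P = W$, so the claim is immediate; hence I assume $e \neq f$ and let $g$ denote the first temporal edge of $S$, i.e., the edge following $e$ in $W$.

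Next, suppose for contradiction that $P$ is \emph{not} a \sh\ walk among the $se$-walks. Then there exists a $se$-walk $W'$ with strictly fewer temporal edges than $P$. The crucial observation is that $W'$ and $P$ both terminate with the same edge $e$, and therefore have a common arrival time $\arr(W') = \arr(e) = \arr(P)$ and a common head node for their last edge. Since the head of $e$ is exactly the tail of $g$, and since the condition that $g$ must satisfy to follow $e$ is $\arr(e) \le \dep(g) \le \arr(e) + \beta$, the same edge $g$ extends $W'$; moreover, every later timing constraint internal to $S$ is untouched by the replacement of $P$ by $W'$. Consequently $W' \cdot S$ is a well-defined walk from $s$ that ends with $f$, i.e., a valid $sf$-walk.

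Counting temporal edges then yields the contradiction: $W' \cdot S$ has $|W'| + |S| < |P| + |S| = |W|$ temporal edges, so it is a $sf$-walk strictly shorter than $W$, contradicting the assumption that $W$ is a \sh\ walk among the $sf$-walks. Therefore $P$ must be a \sh\ $se$-walk, which is exactly the claim.

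I expect the only delicate point to be verifying that the concatenation $W' \cdot S$ is genuinely a temporal walk (respecting both the time-ascending condition and, when $\beta$ is finite, the waiting bound). This, however, reduces entirely to the observation above: because $W'$ and $P$ agree on their final edge $e$, they agree on the arrival time and head node used to splice on $g$, so $g$ falls into the identical admissible window $[\arr(e), \arr(e) + \beta]$ in both walks, and the remainder of $S$ is copied verbatim. Everything else is routine bookkeeping of edge counts.
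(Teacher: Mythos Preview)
Your proof is correct and is essentially the same cut-and-paste/exchange argument the paper has in mind: the paper simply states that Fact~\ref{fact:sfo_prefixoptimality} ``immediately follows from the definition,'' and when it spells out the analogous argument for the general cost version (Fact~\ref{fact:prefixoptimality}) it proceeds exactly as you do---decompose $W$ into prefix and suffix at $e$, replace the prefix by a hypothetically shorter $se$-walk $W'$, and obtain a strictly shorter $sf$-walk, a contradiction. Your explicit verification that $W'\cdot S$ respects the waiting constraint (because $W'$ and $P$ share the same last edge $e$ and hence the same arrival time) is a nice touch that the paper leaves implicit.
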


\begin{fact}\label{fact:sfo_toptimalimpliescoptimal}
Given a temporal graph ${G}=(V,{E},\beta)$, let $W$ be a \sfo\ $st$-walk (for some $s,t\in V$) and let $e$ be the last temporal edge of $W$. Then, $W$ is a \sh\ walk among the $se$-walks.
\end{fact}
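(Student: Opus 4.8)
The plan is to argue by contradiction, exploiting the fact that the last temporal edge $e$ pins down both the endpoint and the arrival time of every $se$-walk. Write $e=(u,t,\tau,\lambda)$ (its head must be $t$ since $W$ is an $st$-walk ending with $e$), so that every $se$-walk is in particular an $st$-walk whose arrival time equals $\arr(e)=\tau+\lambda$.

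First I would assume, towards a contradiction, that $W$ is \emph{not} a \sh\ walk among the $se$-walks; that is, there exists an $se$-walk $W'$ containing strictly fewer temporal edges than $W$. Since $W'$ ends with the same temporal edge $e$, it reaches $t$ and satisfies $\arr(W')=\arr(e)=\arr(W)$.

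Next I would invoke the hypothesis that $W$ is \sfo, and hence in particular a foremost $st$-walk: thus $\arr(W)$ is the minimum arrival time over all $st$-walks. Combined with $\arr(W')=\arr(W)$, this shows that $W'$ is itself a foremost $st$-walk. But then $W'$ is a foremost $st$-walk with strictly fewer edges than $W$, contradicting the defining property of a \sfo\ walk, namely that $W$ is a shortest walk among the foremost $st$-walks. Therefore no such $W'$ exists and $W$ is a \sh\ walk among the $se$-walks.

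The only step requiring any care — and the closest thing to an obstacle — is the elementary bookkeeping that fixing the last edge $e$ forces an $se$-walk to be an $st$-walk with arrival time exactly $\arr(e)$; once this is recorded, the contradiction with foremostness is immediate, so I expect no genuine difficulty.
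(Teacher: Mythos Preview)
Your proof is correct and follows essentially the same contradiction argument the paper uses (the paper merely states that the fact follows immediately from the definitions, and its proof of the generalised version, Fact~\ref{fact:toptimalimpliescoptimal}, proceeds exactly as you do: take a better $se$-walk and observe it contradicts $\tcost$-optimality of $W$).
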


\noindent\textbf{\sfo\ betweenness.} Given a temporal graph ${G}=(V,{E},\beta)$, two nodes $s,t\in V$ with $s\neq t$, and a temporal edge $e\in E$, we let $\sigma^{*}_{s,e,t}$ denote the number of \sfo\ walks from $s$ to $t$ that contain $e$. We also denote by $\sigma^{*}_{s,t}=\sum_{e\in E_{t}^\mathrm{h}}\sigma^{*}_{s,e,t}$ the number of \sfo\ walks from $s$ to $t$.

We define the $s$-\sfo\ betweenness of a temporal edge $e$ as $b_{s,e} = \sum_{v\in V:\chi_{s,t}=1}\sigma^{*}_{s,e,t}/\sigma^{*}_{s,t}$, where $\chi_{s,t} = 1$ if $s \neq t$ and there exists a $st$-walk (and, hence, $\sigma^{*}_{s,t}\neq0$), and $\chi_{s,t} = 0$ otherwise. 

Given three pairwise-distinct nodes $s$, $u$, and $t$ we denote by $\sigma^{*}_{s,u,t}=\sum_{e\in E_{u}^\mathrm{h}}\sigma^{*}_{s,e,t}$ the number of \sfo\ walks from $s$ to $t$ that contain $u$. Note that a walk where $u$ appears $\mu$ times is counted with multiplicity $\mu$.\footnote{It is quite natural to take into account how many times a node appears in a walk when considering walks rather than paths in the betweenness definition.} The \sfo\ betweenness of a vertex $u$ is defined as $b_{u} = \sum_{s,t\in V\setminus\{u\}:\chi_{s,t}=1}\sigma^{*}_{s,u,t}/\sigma^{*}_{s,t}$. The \sfo\ betweenness of any vertex $u$ can easily be computed from the $s$-\sfo\ betweenness $b_{s,e}$ of all temporal edges $e$ entering $u$.

\begin{fact}\label{fact:betweennessformula}
Given a temporal graph ${G}=(V,{E},\beta)$ and a node $u\in V$, the following holds: $b_u = \sum_{s\in V\setminus\{u\}}\left(\sum_{e\in E^\mathrm{h}_u}b_{s,e}-\chi_{s,u}\right)$.
\end{fact}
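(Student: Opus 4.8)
The plan is to verify the identity by manipulating its right-hand side until it matches the definition of the \sfo\ betweenness $b_u$; the whole argument is a rearrangement of nested sums, with a single bookkeeping subtlety. First I would fix a source $s\in V\setminus\{u\}$, unfold the inner sum $\sum_{e\in E^\mathrm{h}_u}b_{s,e}$ using the definition of $b_{s,e}$, and exchange the order of summation to obtain
\[
\sum_{e\in E^\mathrm{h}_u}b_{s,e}=\sum_{t\in V:\,\chi_{s,t}=1}\ \sum_{e\in E^\mathrm{h}_u}\frac{\sigma^{*}_{s,e,t}}{\sigma^{*}_{s,t}} .
\]

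The key point — and the only step requiring care — is that the outer sum over targets $t$ ranges over all of $V$ (subject to reachability from $s$), whereas $b_u$ sums only over $t\in V\setminus\{u\}$. I would therefore split off the term $t=u$. For every $t\neq u$, the definition $\sigma^{*}_{s,u,t}=\sum_{e\in E^\mathrm{h}_u}\sigma^{*}_{s,e,t}$ lets the inner sum over $e$ collapse to $\sigma^{*}_{s,u,t}/\sigma^{*}_{s,t}$, which is exactly the contribution of the pair $(s,t)$ to $b_u$. For the term $t=u$, the definition $\sigma^{*}_{s,u}=\sum_{e\in E^\mathrm{h}_u}\sigma^{*}_{s,e,u}$ of the number of \sfo\ $su$-walks gives $\sum_{e\in E^\mathrm{h}_u}\sigma^{*}_{s,e,u}/\sigma^{*}_{s,u}=1$ whenever such walks exist, i.e.\ whenever $\chi_{s,u}=1$; and when $\chi_{s,u}=0$ the term $t=u$ is absent from the sum entirely. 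In both cases its contribution is exactly $\chi_{s,u}$, so subtracting it yields
\[
\sum_{e\in E^\mathrm{h}_u}b_{s,e}-\chi_{s,u}=\sum_{t\in V\setminus\{u\}:\,\chi_{s,t}=1}\frac{\sigma^{*}_{s,u,t}}{\sigma^{*}_{s,t}} .
\]

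Finally I would sum this identity over all sources $s\in V\setminus\{u\}$: the right-hand side becomes exactly the double sum defining $b_u$, since the condition $\chi_{s,t}=1$ already forces $s\neq t$, so that $s$, $u$, and $t$ are pairwise distinct as required. This closes the argument. I do not expect a genuine obstacle here: the content is purely combinatorial bookkeeping, and the one thing to get right is the diagonal term $t=u$, which contributes $1$ precisely when $u$ is reachable from $s$ and is canceled by the $\chi_{s,u}$ correction.
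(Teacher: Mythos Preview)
Your proof is correct and follows essentially the same approach as the paper's own argument: both rely on expanding the definitions, exchanging the order of summation over $e\in E^\mathrm{h}_u$ and $t$, and isolating the $t=u$ term, whose contribution is exactly $\chi_{s,u}$ thanks to the identity $\sigma^{*}_{s,u}=\sum_{e\in E^\mathrm{h}_u}\sigma^{*}_{s,e,u}$. The only difference is cosmetic---you start from the right-hand side and work toward $b_u$, while the paper starts from the definition of $b_u$ and derives the right-hand side.
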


Given a temporal graph ${G}=(V,{E},\beta)$ and a temporal edge $e\in E$, let $\sigma_{s,e}$ (respectively, $\sigma^{*}_{s,e}$) denote the number of \sh\ (respectively, \sfo) $se$-walks, where a $se$-walk $W$ is \sfo\ if there is no $se$-walk $X$ such that $(\arr(X),|X|)\vartriangleleft (\arr(W),|W|)$ (in the following, for any $a,b,c,d\in\mathbf{N}$, $(a,b)\vartriangleleft(c,d)$ if and only $(a<c)\vee(a=c\wedge b<d)$). Moreover, given two nodes $s,t\in V$ with $s\neq t$, let $\mathcal{W}_{s,e,t}$ denote the set of \sfo\ $st$-walk containing $e$ (hence, $|\mathcal{W}_{s,e,t}|=\sigma^*_{s,e,t}$). Each walk $W\in \mathcal{W}_{s,e,t}$ can be decomposed into a prefix $W_{1}$ (from $s$ to $v$) ending with $e$ and a suffix $W_{2}$ (from $v$ to $t$). Let $\theta_{s,e,t}$ denote the number of distinct suffixes of walks in $\mathcal{W}_{s,e,t}$ (eventually including the empty suffix). 

\begin{fact}\label{fact:sfo_prefixtimessuffix}
Given a temporal graph ${G}=(V,{E},\beta)$, two nodes $s,t\in V$ with $s\neq t$, and a temporal edge $e=(u,v,\tau,\lambda)$, the following hold: $\sigma^*_{s,e,t}=\sigma_{s,e}\cdot\theta_{s,e,t}$.
\end{fact}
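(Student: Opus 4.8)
The plan is to exhibit an explicit bijection between $\mathcal{W}_{s,e,t}$ and the Cartesian product $P\times S$, where $P$ is the set of \sh\ $se$-walks (so $|P|=\sigma_{s,e}$) and $S$ is the set of distinct suffixes occurring in $\mathcal{W}_{s,e,t}$ (so $|S|=\theta_{s,e,t}$). First I would record the elementary but crucial observation that, because travel times are positive and hence departure times strictly increase along any walk ($\tau_i<\tau_{i+1}$), the fixed temporal edge $e$, having a fixed departure time $\tau$, occurs \emph{at most once} in any walk. Consequently every $W\in\mathcal{W}_{s,e,t}$ splits uniquely into a prefix $W_1$ ending with $e$ and a suffix $W_2$ starting after $e$, with $e$ never reappearing inside $W_2$; this makes the decomposition map $\phi\colon W\mapsto(W_1,W_2)$ well defined and injective, since concatenation recovers $W$. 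The degenerate case is immediate: if $\mathcal{W}_{s,e,t}=\emptyset$ then $\theta_{s,e,t}=0$ and both sides vanish, so henceforth I would assume $\mathcal{W}_{s,e,t}\neq\emptyset$.

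Next I would check that $\phi$ lands in $P\times S$. The suffix component lies in $S$ by definition. For the prefix, let $f$ be the last temporal edge of $W$; since $W$ is \sfo, Fact~\ref{fact:sfo_toptimalimpliescoptimal} gives that $W$ is a \sh\ walk among the $sf$-walks, and then Fact~\ref{fact:sfo_prefixoptimality} applied to the edge $e$ of $W$ yields that its prefix $W_1$ up to $e$ is a \sh\ walk among the $se$-walks, i.e.\ $W_1\in P$. In particular all such prefixes share the common length $\ell_e$, the minimum number of temporal edges over all $se$-walks; and since all members of $\mathcal{W}_{s,e,t}$ share the common \sfo\ arrival time $A^*$ and the common \sfo\ length $L^*$, every occurring suffix has the same length $L^*-\ell_e$ and, when non-empty, the same arrival time $A^*$.

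The heart of the argument is the \emph{free recombination} property, which gives surjectivity: for any $W_1'\in P$ and any $W_2'\in S$, the concatenation $W_1'.W_2'$ again lies in $\mathcal{W}_{s,e,t}$ and decomposes back to $(W_1',W_2')$. Validity of the concatenation is the step where I would be most careful: both $W_1'$ and the prefix of the walk realising $W_2'$ end with the \emph{same} edge $e$, hence arrive at its head at the identical time $\arr(e)$, so the waiting constraint linking $e$ to the first edge of $W_2'$ holds verbatim for $W_1'.W_2'$. It then remains to verify optimality: by length additivity $|W_1'.W_2'|=\ell_e+(L^*-\ell_e)=L^*$, while $\arr(W_1'.W_2')=\arr(W_2')=A^*$ (or $=\arr(e)=A^*$ in the empty-suffix case), so $W_1'.W_2'$ attains the global \sfo\ optimum $(A^*,L^*)$ among $st$-walks and is therefore \sfo; as it visibly contains $e$, it belongs to $\mathcal{W}_{s,e,t}$. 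Because $e$ occurs only once, $\phi(W_1'.W_2')=(W_1',W_2')$, establishing surjectivity.

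Having shown that $\phi$ is a bijection between $\mathcal{W}_{s,e,t}$ and $P\times S$, I would conclude $\sigma^*_{s,e,t}=|\mathcal{W}_{s,e,t}|=|P|\cdot|S|=\sigma_{s,e}\cdot\theta_{s,e,t}$. The main obstacle is precisely the free-recombination step: one must simultaneously argue that every \sh\ prefix is realisable (which needs $\mathcal{W}_{s,e,t}\neq\emptyset$ together with the length and arrival constancy above) and that the temporal feasibility of the junction is inherited from the common last edge $e$. The strictness remark that $e$ appears exactly once is what keeps the decomposition unambiguous throughout.
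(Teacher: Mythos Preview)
Your proof is correct and follows essentially the same approach as the paper: both invoke Facts~\ref{fact:sfo_toptimalimpliescoptimal} and~\ref{fact:sfo_prefixoptimality} to show that every prefix ending at $e$ is a \sh\ $se$-walk of common length, and then argue that any such prefix can be freely recombined with any occurring suffix to yield another member of $\mathcal{W}_{s,e,t}$. Your version is slightly more explicit in two places---you spell out the bijection with $P\times S$ and you justify the uniqueness of the decomposition via the observation that $e$ can occur at most once in a strict walk---both of which the paper leaves implicit.
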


\paragraph*{Successors, predecessors, and edge betweenness recursive formulation.}
Given a temporal graph ${G}=(V,{E},\beta)$, two nodes $s,t\in V$ with $s\neq t$, and a temporal edge $e\in E$, let $\mathtt{succ}_{s,e,t}$ denote the set of temporal edges $f$ such that $e$ and $f$ are one after the other in a walk $W\in\mathcal{W}_{s,e,t}$. Moreover, let $\mathtt{succ}_{s,e}=\bigcup_{t\in V\setminus\{s\}}\mathtt{succ}_{s,e,t}$. In the following, if $f\in\mathtt{succ}_{s,e}$, we say that $f$ is \textit{successor} of $e$ and that $e$ is a \textit{predecessor} of $f$.

\begin{lemma}
\label{lem:backwardlemma}
Given a temporal graph ${G}=(V,{E},\beta)$ and a temporal edge $e=(u,v,\tau,\lambda)$, the following hold:
\[
b_{s,e} = \sigma_{s,e}\sum_{f\in\mathtt{succ}_{s,e}}\frac{b_{s,f}}{\sigma_{s,f}}+\left\{\begin{array}{ll}\frac{\sigma^{*}_{s,e}}{\sigma^{*}_{s,v}} & \mbox{if $\sigma^{*}_{s,e}>0$,}\\
0 & \mbox{otherwise.}\end{array}\right.   
\]
\end{lemma}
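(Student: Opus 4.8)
The plan is to follow the structure of Brandes' accumulation argument, transposed to temporal edges: start from the definition of $b_{s,e}$, split the \sfo\ $st$-walks through $e$ according to what happens immediately after $e$, and recognise the two resulting pieces as the base term and the successor sum. Writing $e=(u,v,\tau,\lambda)$ and recalling that $b_{s,e}=\sum_{t:\chi_{s,t}=1}\sigma^{*}_{s,e,t}/\sigma^{*}_{s,t}$ with $\sigma^{*}_{s,e,t}=|\mathcal{W}_{s,e,t}|$, the first step is to partition, for each target $t$, the set $\mathcal{W}_{s,e,t}$ into those walks in which $e$ is the last edge and those in which $e$ is followed by some temporal edge $f$. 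Since $e$ ends in $v$, the first class is non-empty only when $t=v$.

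For the base term, a walk of $\mathcal{W}_{s,e,v}$ in which $e$ is the last edge is precisely a \sfo\ $se$-walk (equivalently, a \sfo\ $sv$-walk ending in $e$), so there are $\sigma^{*}_{s,e}$ of them, and each contributes $1/\sigma^{*}_{s,v}$ to $b_{s,e}$ (note that $\chi_{s,v}=1$ whenever such a walk exists). This yields the term $\sigma^{*}_{s,e}/\sigma^{*}_{s,v}$, and it is absent exactly when $\sigma^{*}_{s,e}=0$, matching the case distinction in the statement.

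The successor term is the heart of the argument. For a fixed $t$ and a fixed $f$ following $e$ (so $f\in\mathtt{succ}_{s,e,t}$), every walk of $\mathcal{W}_{s,e,t}$ having $f$ immediately after $e$ decomposes uniquely as a prefix up to $e$, then $f$, then a suffix from the head of $f$ to $t$. By Facts~\ref{fact:sfo_prefixoptimality} and~\ref{fact:sfo_toptimalimpliescoptimal}, this prefix is a \sh\ $se$-walk, so there are $\sigma_{s,e}$ possible prefixes; I claim the admissible suffixes are exactly the $\theta_{s,f,t}$ distinct suffixes of walks in $\mathcal{W}_{s,f,t}$, so that this class has size $\sigma_{s,e}\,\theta_{s,f,t}$. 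The crux, and the step I expect to be the main obstacle, is the prefix-swap independence justifying this count: all \sh\ $se$-walks share the same length and arrival time $\arr(e)$, and all \sh\ $sf$-walks likewise share length and arrival time $\arr(f)$; replacing one \sh\ prefix by another therefore changes neither the arrival time nor the total number of edges of a completed $st$-walk, so \sfo-optimality is preserved and prefixes may be combined with suffixes independently. This is exactly the independence underlying Fact~\ref{fact:sfo_prefixtimessuffix}, now used one edge deeper.

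Finally I would assemble the pieces. Summing the successor class over $f$ and over $t$, factoring out $\sigma_{s,e}$, and exchanging the order of summation gives $\sigma_{s,e}\sum_{f\in\mathtt{succ}_{s,e}}\sum_{t}\theta_{s,f,t}/\sigma^{*}_{s,t}$, where the inner sum ranges over the targets $t$ with $f\in\mathtt{succ}_{s,e,t}$. The same prefix-swap argument shows that ``$e$ is a \sh-predecessor of $f$'' is a target-independent property of the pair $(e,f)$, so this set of targets coincides with $\{t:\theta_{s,f,t}>0\}$ and the inner sum extends without change to all reachable $t$; invoking Fact~\ref{fact:sfo_prefixtimessuffix} once more gives $\sum_{t:\chi_{s,t}=1}\theta_{s,f,t}/\sigma^{*}_{s,t}=b_{s,f}/\sigma_{s,f}$. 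Collecting the base term and the successor term then yields the claimed recursion.
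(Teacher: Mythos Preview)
Your proposal is correct and follows essentially the same route as the paper's proof: both start from the prefix--suffix factorisation of Fact~\ref{fact:sfo_prefixtimessuffix}, isolate the $t=v$ contribution as the base term, split the remaining (non-empty) suffixes according to their first edge $f$, swap the order of summation, and recognise the inner sum as $b_{s,f}/\sigma_{s,f}$. The paper packages the ``non-empty suffix'' bookkeeping into an auxiliary quantity $\hat{\theta}_{s,e,t}$ rather than arguing directly on the walk sets as you do, but the underlying prefix-swap argument is the same.
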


The above lemma will be used in the backward phase of our algorithms in order to compute the $s$-\sfo\ betweenness of a temporal edge $e$ having already computed the $s$-\sfo\ betweenness of all successors $f$ of $e$. In Appendix~\ref{sec:examplecentrality} we show an example of a temporal graph and of the centrality values of its nodes with respect to the different notions of optimal walks.

\begin{figure*}[ht]
    \centering
    \begin{adjustbox}{width=\textwidth}
        \SetVertexStyle[FillColor=white]
\SetEdgeStyle[Color=black]
\begin{tikzpicture}[x=2cm,y=2cm]
\Vertex[x=0,y=0,label=$v_{1}$]{v1}
\Vertex[x=2,y=0,label=$v_{2}$]{v2}
\Vertex[x=1,y=2.5,label=$v_{3}$]{v3}
\Vertex[x=3,y=2.5,label=$v_{4}$]{v4}
\Vertex[x=4,y=0,label=$v_{5}$]{v5}
\Vertex[x=6,y=0,label=$v_{6}$]{v6}
\Vertex[x=3,y=-2.5,label=$v_{7}$]{v7}
\Vertex[x=5,y=-2.5,label=$v_{8}$]{v8}
\Edge[Direct,label={$\stackrel{\iarr{1}}{1,1}$},bend=30](v1)(v2)
\Edge[Direct,label={$\stackrel{\iarr{3}}{4,1}$},bend=-30](v1)(v2)
\Edge[Direct,label={$\stackrel{\iarr{7}}{2,5}$},bend=-50](v1)(v5)
\Edge[Direct,label={$\stackrel{\iarr{5}}{4,2}$},bend=30](v2)(v5)
\Edge[Direct,label={$\stackrel{\iarr{9}}{6,3}$},bend=-30](v2)(v5)
\Edge[Direct,label={$\stackrel{\iarr{2}}{2,2}$}](v2)(v3)
\Edge[Direct,label={$\stackrel{\iarr{4}}{4,1}$}](v3)(v4)
\Edge[Direct,label={$\stackrel{\iarr{6}}{5,1}$}](v4)(v2)
\Edge[Direct,label={$\stackrel{\iarr{8}}{7,1}$}](v5)(v7)
\Edge[Direct,label={$\stackrel{\iarr{12}}{9,1}$}](v7)(v8)
\Edge[Direct,label={$\stackrel{\iarr{13}}{10,1}$}](v5)(v8)
\Edge[Direct,label={$\stackrel{\iarr{10}}{6,3}$},bend=30](v5)(v6)
\Edge[Direct,label={$\stackrel{\iarr{11}}{9,1}$},bend=-30](v5)(v6)
\node at (4.25,0) {
\begin{tabular}{l}
$E^{\mathrm{arr}}$\\
$\iarr{1}: (v_{1},v_{2},1,1)$\\
$\iarr{2}: (v_{2},v_{3},2,2)$\\
$\iarr{3}: (v_{1},v_{2},4,1)$\\
$\iarr{4}: (v_{3},v_{4},4,1)$\\
$\iarr{5}: (v_{2},v_{5},4,2)$\\
$\iarr{6}: (v_{4},v_{2},5,1)$\\
$\iarr{7}: (v_{1},v_{5},2,5)$\\
$\iarr{8}: (v_{5},v_{7},7,1)$\\
$\iarr{9}: (v_{2},v_{5},6,3)$\\
$\iarr{10}: (v_{5},v_{6},6,3)$\\
$\iarr{11}: (v_{5},v_{6},9,1)$\\
$\iarr{12}: (v_{7},v_{8},9,1)$\\
$\iarr{13}: (v_{5},v_{8},10,1)$\\
\end{tabular}
};
\node at (5.5,0) {
\begin{tabular}{l}
$E^{\mathrm{dep}}$\\
$1: \iarr{1}$\\
$2: \iarr{2}$\\
$3: \iarr{7}$\\
$4: \iarr{3}$\\
$5: \iarr{4}$\\
$6: \iarr{5}$\\
$7: \iarr{6}$\\
$8: \iarr{9}$\\
$9: \iarr{10}$\\
$10: \iarr{8}$\\
$11: \iarr{11}$\\
$12: \iarr{12}$\\
$13: \iarr{13}$\\
\end{tabular}
};
\node at (6.75,0.6) {
\begin{tabular}{l}
$E_\mathrm{node}^{\mathrm{dep}}$\\
$1: [\iarr{1},\iarr{7},\iarr{3}]$\\
$2: [\iarr{2},\iarr{5},\iarr{9}]$\\
$3: [\iarr{4}]$\\
$4: [\iarr{6}]$\\
$5: [\iarr{10},\iarr{8},\iarr{11},\iarr{13}]$\\
$6: []$\\
$7: [\iarr{12}]$\\
$8: []$
\end{tabular}
};
\node at (7.75,0) {
\begin{tabular}{l}
$E^{\mathrm{arr}}_{\mathrm{dep}}$\\
$\iarr{1}: \idep{1}$\\
$\iarr{2}: \idep{1}$\\
$\iarr{3}: \idep{3}$\\
$\iarr{4}: \idep{1}$\\
$\iarr{5}: \idep{2}$\\
$\iarr{6}: \idep{1}$\\
$\iarr{7}: \idep{2}$\\
$\iarr{8}: \idep{2}$\\
$\iarr{9}: \idep{3}$\\
$\iarr{10}: \idep{1}$\\
$\iarr{11}: \idep{3}$\\
$\iarr{12}: \idep{1}$\\
$\iarr{13}: \idep{4}$\\
\end{tabular}
};
\end{tikzpicture}    
    \end{adjustbox}
    \caption{An example of a temporal graph, where $n=8$, $M=13$, $T=11$. The lower part of the label of each temporal edge indicates its availability time $\tau$ and its traversal time $\lambda$ (hence, the arrival time of the temporal edge is $\tau+\lambda$). The (overlined) upper part of the label of each temporal edge indicates its position in the $E^{\mathrm{arr}}$ list. The $E^{\mathrm{dep}}$, $E_{\mathrm{node}}^{\mathrm{dep}}$, $E^{\mathrm{arr}}_{\mathrm{dep}}$ lists also refer to the (overlined) indexes of $E^{\mathrm{arr}}$. The underlined indices, instead, indicate the position of the corresponding edge in $E^{\mathrm{arr}}$ into the the list $E_{\mathrm{node}}^{\mathrm{dep}}$ of its tail.}
    \label{fig:patg1}
\end{figure*}
    
\subsection{An example of temporal graph}
\label{sec:examplecentrality}
Let us consider the temporal graph in the left part of Figure~\ref{fig:patg1}. The first list in the figure shows $E^{\arr}$ (that is, the list of temporal edges sorted by non-decreasing arrival time), while the second list in the figure shows $E^{\dep}$ (that is, the list of temporal edges sorted by non-decreasing departure time), which is specified by identifying the temporal edges by their position in the list $E^{\arr}$ (the third and four lists are used by our algorithms and are explained in the main text). By assuming $\beta=1$ and by identifying the temporal edges by their (overlined) position in the list $E^{\arr}$, we have that $\langle\iarr{1},\iarr{2},\iarr{4},\iarr{6},\iarr{9}\rangle$ is a $v_{1}v_{5}$-walk which can be extended by either the temporal edge $\iarr{11}$ or the temporal edge $\iarr{13}$. On the contrary, $\langle\iarr{1},\iarr{5}\rangle$ is not a $v_{1}v_{5}$-walk since $\tau_{2} = 4 > 3 = \arr(e_{1}) + \beta$. We also have that $\langle\iarr{7}\rangle$ is the only shortest $v_{1}v_{5}$-walk (with $1$ temporal edge) and the only (shortest) foremost $v_{1}v_{5}$-walk (with arrival time equal to $7$), and $\langle\iarr{3},\iarr{9}\rangle$ is the only (shortest) latest $v_{1}v_{5}$-walk (with departure time equal to $4$ and $2$ temporal edges). Finally, both $\langle\iarr{7}\rangle$ and $\langle\iarr{3},\iarr{9}\rangle$ are fastest $v_{1}v_{5}$-walks (with duration equal to $5$), and only $\langle\iarr{7}\rangle$ is also a shortest fastest $v_{1}v_{5}$-walk. By considering foremost walks, we have that $\sigma^{*}_{v_{1},\iarr{9},v_{6}}=2$, since the the temporal edge $(v_{2},v_{5},6,3)$ (whose index in $E^{\arr}$ is $\iarr{9}$) is contained in the two foremost $v_{1}v_{6}$-walks $\langle\iarr{3},\iarr{9},\iarr{11}\rangle$ and $\langle\iarr{1},\iarr{2},\iarr{4},\iarr{6},\iarr{9},\iarr{11}\rangle$. By setting $\beta=0$ and by considering latest walks, instead, we have that $\sigma^{*}_{v_{1},v_{2}}=\sigma^{*}_{v_{1},\iarr{1},v_{2}}+\sigma^{*}_{v_{1},\iarr{3},v_{2}}+\sigma^{*}_{v_{1},\iarr{6},v_{2}}=1+0+1=2$: as we can see, the latest $v_{1}v_{2}$-walk $\langle\iarr{1},\iarr{2},\iarr{4},\iarr{6}\rangle$ is counted with multiplicity $2$ since it passes twice through $v_{2}$, once via the temporal edge $(v_{1},v_{2},1,1)$ (whose index in $E^{\arr}$ is $\iarr{1}$) and once via the temporal edge $(v_{4},v_{2},5,1)$ (whose index in $E^{\arr}$ is $\iarr{6}$). 

\begin{table*}[ht]
\centering
\begin{adjustbox}{width=\textwidth}
    \begin{tabular}{r|c|c|c|c|c|c|c|c}
 & $v_{1}$ & $v_{2}$ & $v_{3}$ & $v_{4}$ & $v_{5}$ & $v_{6}$ & $v_{7}$ & $v_{8}$  \\
\hline
$v_{1}$ &  & $\{\langle\iarr{1}\rangle,\langle\iarr{3}\rangle\}$ & $\{\langle\iarr{1},\iarr{2}\rangle\}$ & $\{\langle\iarr{1},\iarr{2},\iarr{4}\rangle\}$ & $\{\langle\iarr{7}\rangle\}$ & $\{\langle\iarr{3},\iarr{9},\iarr{11}\rangle\}$ & $\{\langle\iarr{7},\iarr{8}\rangle\}$ & $\{\langle\iarr{7},\iarr{8},\iarr{12}\rangle,\langle\iarr{3},\iarr{9},\iarr{13}\rangle\}$\\
$v_{2}$ & $\emptyset$ &  & $\{\langle\iarr{2}\rangle\}$ & $\{\langle\iarr{2},\iarr{4}\rangle\}$ & $\{\langle\iarr{5}\rangle,\langle\iarr{9}\rangle\}$ & $\{\langle\iarr{5},\iarr{10}\rangle,\langle\iarr{9},\iarr{11}\rangle\}$ & $\{\langle\iarr{5},\iarr{8}\rangle\}$ & $\{\langle\iarr{9},\iarr{13}\rangle\}$  \\
$v_{3}$ & $\emptyset$ & $\{\langle\iarr{4},\iarr{6}\rangle\}$ &  & $\{\langle\iarr{4}\rangle\}$ & $\{\langle\iarr{4},\iarr{6},\iarr{9}\rangle\}$ & $\{\langle\iarr{4},\iarr{6},\iarr{9},\iarr{11}\rangle\}$ & $\emptyset$ & $\{\langle\iarr{4},\iarr{6},\iarr{9},\iarr{13}\rangle\}$ \\
$v_{4}$ & $\emptyset$ & $\{\langle\iarr{6}\rangle\}$ & $\emptyset$ &  & $\{\langle\iarr{6},\iarr{9}\rangle\}$ & $\{\langle\iarr{6},\iarr{9},\iarr{11}\rangle\}$ & $\emptyset$ & $\{\langle\iarr{6},\iarr{9},\iarr{13}\rangle\}$ \\
$v_{5}$ & $\emptyset$ & $\emptyset$ & $\emptyset$ & $\emptyset$ &  & $\{\langle\iarr{10}\rangle,\langle\iarr{11}\rangle\}$ & $\{\langle\iarr{8}\rangle\}$ & $\{\langle\iarr{13}\rangle\}$ \\
$v_{6}$ & $\emptyset$ & $\emptyset$ & $\emptyset$ & $\emptyset$ & $\emptyset$ &  & $\emptyset$ & $\emptyset$ \\
$v_{7}$ & $\emptyset$ & $\emptyset$ & $\emptyset$ & $\emptyset$ & $\emptyset$ & $\emptyset$ &  & $\{\langle\iarr{12}\rangle\}$ \\
$v_{8}$ & $\emptyset$ & $\emptyset$ & $\emptyset$ & $\emptyset$ & $\emptyset$ & $\emptyset$ & $\emptyset$ &  \\
\end{tabular}

\end{adjustbox}
\caption{The set of shortest $st$-walks in the graph of Figure~\ref{fig:patg1}, for any pair of nodes $s$ and $t$}
\label{tbl:setofwalks}
\end{table*}

The shortest walk from $v_1$ to $v_6$ varies in length depending on the value of $\beta$: with $\beta=0$ it is $6$ (which is the length of $\langle\iarr{1},\iarr{2},\iarr{4},\iarr{6},\iarr{9},\iarr{11}\rangle$), with $\beta=1$ it is $3$ (which is the length of $\langle\iarr{3},\iarr{9},\iarr{11}\rangle$), and with $\beta\geq2$ it is $2$ (which is the length of $\langle\iarr{7},\iarr{11}\rangle$). By setting $\beta=2$ and by considering shortest walks, let us compute the $v_{1}$-temporal betweenness of the temporal edge $(v_{4},v_{2},5,1)$ (whose index in $E^{\arr}$ is $\iarr{6}$). Since $\sigma^{*}_{v_{1},\iarr{6},v}=0$ for any $v\neq v_{1}$, we have that $b_{v_{1},\iarr{6}}=0$. However, if we set $\beta=0$, then we have that $\sigma^{*}_{v_{1},\iarr{6},v_{6}}=1$ (since the temporal edge is contained in the unique shortest $v_{1}v_{6}$-walk $\langle\iarr{1},\iarr{2},\iarr{4},\iarr{6},\iarr{9},\iarr{11}\rangle$), while $\sigma^{*}_{v_{1},\iarr{6},v}=0$ for any $v\not\in\{v_{1},v_{6}\}$: hence, in this case, $b_{v_{1},\iarr{6}}=1$. If $\beta=1$, we have that the temporal betweenness vectors are pairwise distinct depending on the chosen measure, as shown in Table~\ref{tab:centralityvalues}.

\begin{table}[ht]
    \centering
    \begin{tabular}{l|c|c|c|c|c|c|c|c}
 & $v_{1}$ & $v_{2}$ & $v_{3}$ & $v_{4}$ & $v_{5}$ & $v_{6}$ & $v_{7}$ & $v_{8}$  \\
\hline
Shortest & 0.0 & 9.5 & 2.0 & 4.0 & 10.0 & 0.0 & 0.5 & 0.0 \\
Foremost & 0.0 & 9.5 & 2.5 & 4.5 & 10.0 & 0.0 & 3.0 & 0.0 \\
Latest & 0.0 & 13.5 & 6.0 & 8.0 & 10.0 & 0.0 & 0.0 & 0.0 \\
Fastest & 0.0 & 10.5 & 2.0 & 4.0 & 10.0 & 0.0 & 0.0 & 0.0 \\
Shortest foremost & 0.0 & 9.0 & 2.0 & 4.0 & 10.0 & 0.0 & 3.0 & 0.0 \\
Shortest latest & 0.0 & 12.0 & 3.0 & 5.0 & 10.0 & 0.0 & 0.0 & 0.0 \\
Shortest fastest & 0.0 & 10.0 & 2.0 & 4.0 & 10.0 & 0.0 & 0.0 & 0.0
\end{tabular}

    \caption{The temporal betweenness vectors with different types of optimal walks for the temporal graph of Figure~\ref{fig:patg1}.}
    \label{tab:centralityvalues}
    \end{table}

Let us, for example, compute the temporal betweenness of node $v_{2}$ in the case of shortest walks. Table~\ref{tbl:setofwalks} shows, for any pair of nodes $s$ and $t$, the set of shortest $st$-walks. Node $v_{2}$ is contained as an inner node in the following walks: the $v_{1}v_{3}$-walk $\langle\iarr{1},\iarr{2}\rangle$, the $v_{1}v_{4}$-walk $\langle\iarr{1},\iarr{2},\iarr{4}\rangle$, the $v_{1}v_{6}$-walk $\langle\iarr{3},\iarr{9},\iarr{11}\rangle$, the $v_{1}v_{8}$-walk $\langle\iarr{3},\iarr{9},\iarr{13}\rangle$, the $v_{3}v_{5}$-walk $\langle\iarr{4},\iarr{6},\iarr{9}\rangle$, the $v_{3}v_{6}$-walk $\langle\iarr{4},\iarr{6},\iarr{9},\iarr{11}\rangle$, the $v_{3}v_{8}$-walk $\langle\iarr{4},\iarr{6},\iarr{9},\iarr{13}\rangle$, the $v_{4}v_{5}$-walk $\langle\iarr{6},\iarr{9}\rangle$, the $v_{4}v_{6}$-walk $\langle\iarr{6},\iarr{9},\iarr{11}\rangle$, and the $v_{4}v_{8}$-walk $\langle\iarr{6},\iarr{9},\iarr{13}\rangle$. All these $10$ walks are the only shortest paths from their corresponding sources to their corresponding destinations, apart from the $v_{1}v_{8}$-walk $\langle\iarr{3},\iarr{9},\iarr{13}\rangle$ (there is also the $v_{1}v_{8}$-walk $\langle\iarr{7},\iarr{8},\iarr{12}\rangle$). Hence, each of them contribute $1$ to the temporal betweenness of $v_{2}$ apart from the $v_{1}v_{8}$-walk which contributes $0.5$. In conclusion, the temporal betweenness of $v_{2}$ is $9.5$ (as shown in the Table~\ref{tab:centralityvalues}).

\section{Computing the \sfo\ betweenness}
\label{sec:algorithms}

In this section we describe an algorithm to compute the $s$-\sfo\ betweenness $b_{s,e}$ of all temporal edges $e$, for a given source node $s$, which runs in time linear in $M$, that is, the number of temporal edges. By repeating the computation for each source and by using Fact~\ref{fact:betweennessformula}, it is then possible to aggregate these $s$-\sfo\ betweennesses to obtain the \sfo\ betweenness of all nodes in time $O(nM)$. The algorithm consists of three phases, a forward, an intermediate, and a backward one. The goal of the forward phase is to count, for each temporal edge $e$, the number of \sh\ $se$-walks and, at the same time, to identify the set of its successors, that is, the set of edges $f$ that can follow $e$ in a \sh\ $st$-walk, for some target node $t$. During the intermediate phase we compute, for each node $v$, the number of \sfo\ $sv$-walks and their cost (that is, the pair including the arrival time and the number of edges in the walk). Finally, the goal of the backward phase is to report the $s$-\sfo\ betweenness of the edges following the successor dependencies.

In the following, given a temporal graph $G=(V,E,\beta)$, we assume that $G$ is represented through its $E^{\mathrm{arr}}$ and $E^{\mathrm{dep}}$ lists (recall that the temporal edges in $E^{\mathrm{dep}}$ are identified by their position in the list $E^{\arr}$). We also assume that, by using these two lists, the following two other lists have been pre-computed, where, once again, each temporal edge is identified by its position in $E^{\mathrm{arr}}$ (see also the examples in Appendices~\ref{sec:examplecentrality} and~\ref{sec:exampleexecution}): $E^{\mathrm{dep}}_\mathrm{node}$, which, for every $v\in V$, contains the list $E^{\mathrm{dep}}_\mathrm{node}[v]$ of temporal edges whose tail is $v$, sorted in non-decreasing order with respect to their departure time, and $E^{\mathrm{arr}}_{\mathrm{dep}}$, which, for every $e=(u,v,\tau,\lambda)\in E$, contains the position of $e$ in $E^{\mathrm{dep}}_\mathrm{node}[u]$ (more precisely, if $e$ is the $i$-th temporal edge in $E^{\mathrm{arr}}$ and the $j$-th temporal edge in $E^{\mathrm{dep}}_\mathrm{node}[u]$, then $E^{\mathrm{arr}}_{\mathrm{dep}}[i]=j$). Note that both $E^{\mathrm{dep}}_\mathrm{node}$ and $E^{\mathrm{arr}}_{\mathrm{dep}}$ can easily be computed in linear-time starting from $E^{\mathrm{arr}}$ and $E^{\mathrm{dep}}$.

\subsection{The non-restless case}
\label{sec:nonrestlessshortestforemost}

We first introduce the algorithm for computing the \sfo\ betweenness in the case $\beta=\infty$, that is, \textit{without waiting constraints} or \textit{non-restless} (see Algorithm~\ref{alg:sfononrestless}), whose forward phase is built upon the algorithm of~\cite{BrunelliV2022} for computing single-source minimum-cost walks.

\begin{algorithm}[ht]
\small
\Input{$G=(V,E,\infty)$ (represented by $E^{\mathrm{dep}}$ and $E^{\mathrm{arr}}$) and $s\in V$}
\Output{$s$-\sfo\ betweenness $b_{s,e}$, for all $e\in E$}

Compute the lists $E^{\mathrm{dep}}_{\mathrm{node}}$ and $E^{\mathrm{arr}}_{\mathrm{dep}};$\\
\lForEach{$v\in V$}{$l[v]:=1;c[v]:=\infty;\sigma[v]:=0;$} 
\lForEach{$e\in E$}{$L[e]:=0;C[e]:=\infty;\Sigma[e]:=0;$}
\ForEach{$e = (u,v,\tau,\lambda)\in E^{\mathrm{arr}}$}{\label{alg2:forwardstart}
    \lIf{$E^{\mathrm{arr}}_{\mathrm{dep}}[e]\geq l[u]$}{$\processcosts(u,E^{\mathrm{arr}}_{\mathrm{dep}}[e]);$}
    \lIf{$u=s$}{$C[e]:=1; \Sigma[e]:=1;$}
    \If{$C[e]\neq\infty$}{
        \If{$C[e]\leq c[v]$}{
            $a=l[v];D := E^{\mathrm{dep}}_{\mathrm{node}}[v];$\\
            \lWhile{$a\leq|D|\wedge\dep(E^{\mathrm{arr}}[D[a]]) < \tau+\lambda$}{$a:=a+1;$}
            $\processcosts(v,a-1);$\\
            \lIf{$C[e]<c[v]$}{$c[v]:=C[e]; \sigma[v]:=0;$}
            $\sigma[v] := \sigma[v] + \Sigma[e]$; $L[e]:=a;$\\
        }
    }
\label{alg2:forwardend}}
\lForEach{$v\in V$}{$c^{*}[v]:=[\infty,\infty];\sigma^{*}[v]:=0;\delta[v]:=0;$} 
\lForEach{$e\in E$}{$\Sigma^{*}[e]:=0;b[e]:=0;$}
\ForEach{$e = (u,v,\tau,\lambda)\in E^{\mathrm{arr}}$\label{alg2:pre-backward1}}{\If{$C[e]<\infty\wedge [\arr(e),C[e]]\vartriangleleft c^{*}[v]$}{$c^{*}[v]:=[\arr(e),C[e]];$\label{alg2:pre-backward2}}}
\ForEach{$e = (u,v,\tau,\lambda)\in E^{\mathrm{arr}}$\label{alg2:pre-backward3}}{\If{ $[\arr(e),C[e]]=c^{*}[v]$}{$\Sigma^{*}[e]:=\Sigma[e]$; $\sigma^{*}[v]:=\sigma^{*}[v]+\Sigma[e];$\label{alg2:pre-backward4}}}
\ForEach{$e = (u,v,\tau,\lambda)\in \mathrm{reverse}(E^{\mathrm{arr}}):L[e]>0$}{\label{alg2:backwardstart}
    \If{$c^{*}[v]\vartriangleleft [\arr(e),C[e]]$ }{$\delta[v]:=0; c^{*}[v]:=[\arr(e),C[e]];$ }\label{alg2:zerodelta}
    \lFor{$f\in E_{\mathrm{node}}^{\mathrm{dep}}[v][L[e]:l[v]-1]$}{$\delta[v] := \delta[v] + b[f]/\Sigma[f];$}
    $l[v]:=L[e]; b[e]:=\Sigma[e]\delta[v];$\\
    \lIf{$\Sigma^{*}[e]>0$}{$b[e]:=b[e]+\Sigma^{*}[e]/\sigma^{*}[v];$}
\label{alg2:backwardend}}
\Return $b$ 
\BlankLine
\SetArgSty{textbf}
\Finalize{$u,j$}:\\
\Indp
    \If{$c[u]\neq\infty$}{\label{alg2:finalizestart}
        \lForEach{$f\in E^{\mathrm{dep}}_{\mathrm{node}}[l[u],j]$}{$C[f]:=c[u]+1; \Sigma[f]:=\sigma[u];$}
    }
    $l[u]:=j+1;$\label{alg2:finalizeend}
\caption{Compute non-restless \sfo\ $b_{s,e}$, for all $e\in E$}
\label{alg:sfononrestless}
\end{algorithm}

\noindent\textbf{Forward phase (lines~\ref{alg2:forwardstart}-\ref{alg2:forwardend}).} Let $G_{k}=(V,E_{k},\infty)$ be the temporal graph containing only the first $k$ temporal edges in $E^{\mathrm{arr}}$. The algorithm scans the edges in $E^{\mathrm{arr}}$ one after the other and after scanning $k$ edges, for each node $v\in V$, it updates the following three values: the length $c[v]$ of any \sh\ $sv$-walk in $G_{k}$, the number $\sigma[v]$ of these walks in $G_{k}$, and the position $l[v]$ in $E^{\mathrm{dep}}_\mathrm{node}[v]$ such that all temporal edges in $E^{\mathrm{dep}}_\mathrm{node}[v]$ starting from this position can extend a \sh\ $sv$-walk in $G_{k}$. Note that $c[v]$ can only decrease as $k$ increases, while $l[v]$ can only increase.

At the beginning, for each node $v$, $c[v]=\infty$, $\sigma[v]=0$, and $l[v]=1$. Suppose that the first $k-1$ temporal edges have been scanned, and that the edge $e_{k}=(u,v,\tau,\lambda)$ has now to be analysed (the reader can also refer to the figure shown in Appendix~\ref{sec:nrtsfob}). Let us first analyse $e_{k}$ from the point of view of its tail, that is, $u$. Clearly, $e_{k}$ is included in $E^{\mathrm{dep}}_\mathrm{node}[u]$: suppose that it appears in position $i$. If $i\geq l[u]$, then all temporal edges between position $l[u]$ and $i$, which have a departure time not greater than $\mathrm{dep}(e_{k})$, cannot extend any $su$-walk ending with a temporal edge following $e_{k}$ in $E^{\mathrm{arr}}$, since such a temporal edge has arrival time greater than $\mathrm{dep}(e_{k})$. Each such edge $f$ can then be ``finalised'' (lines~\ref{alg2:finalizestart}-\ref{alg2:finalizeend}), that is, the length $C[f]$ of any \sh\ $sf$-walk (in $G_{k}$) can be set equal to $c[u]+1$ and the number $\Sigma[f]$ of these walks (in $G_{k}$) can be set equal to $\sigma[u]$. We use the term ``finalise'' to emphasise that $C[f]$ will be the same in $G_{k'}$ for $k'\ge k$.  We can also set $l[u]=i+1$, since all temporal edges in $E^{\mathrm{dep}}_\mathrm{node}[u]$ starting from position $i+1$ can still extend a \sh\ $su$-walk in $G_{k'}$ for $k'\ge k$. Moreover, if $u=s$, then we can set $C[e_{k}]$ equal to $1$ (since we are considering \sh\ walks from $s$) and the number of \sh\ walks ending with $e_{k}$ is equal to $1$ (since there is only one such walk, that is, $\langle e_{k}\rangle$).

Let us now analyse the temporal edge $e_{k}$ from the point of view of its head, that is, $v$, by assuming that there exists at least one \sh\ walk ending with $e_{k}$ in $G_{k}$. If $C[e_{k}]$ is not greater than $c[v]$ (that is, $e_{k}$ ends a \sh\ $sv$-walk), we first compute the first position $a$ in $E^{\mathrm{dep}}_\mathrm{node}[v]$ of a temporal edge whose departure time is at least equal to $\tau+\lambda$ (that is, a temporal edge which can extend an $se$-walk in $G_{k}$). All the temporal edges in $E^{\mathrm{dep}}_\mathrm{node}[v]$ between the position $l[v]$ and the position $a-1$ can now be finalised, since they cannot be the successor of any temporal edge of $E^{\mathrm{arr}}$ following $e_{k}$. We then set $l[v]=a$, since all temporal edges in $E^{\mathrm{dep}}_\mathrm{node}[v]$ starting from position $a$ can still extend a \sh\ $sv$-walk in $G_{k'}$ for $k'\ge k$ (in particular, they extend \sh\ $se$-walks). Moreover, if adding the temporal edge $e_{k}$ to $G_{k-1}$ reduces the length $c[v]$ of the \sh\ $sv$-walks (that is, $C[e_{k}] < c[v]$), then we have to update $c[v]$, by setting it equal to $C[e_{k}]$ (that is, the length of the \sh\ $se_{k}$-walks), and $\sigma[v]$ by setting it equal to the number $\Sigma[e_{k}]$ of \sh\ $se_{k}$-walk in $G_{k}$. Otherwise (that is, $C[e_{k}] = c[v]$ and adding the temporal edge $e_{k}$ to $G_{k-1}$ does not change the length $c[v]$ of the \sh\ $sv$-walks), the number $\Sigma[e_{k}]$ of \sh\ $se_{k}$-walk in $G_{k}$ has to be added to $\sigma[v]$ (since all \sh\ $se_{k}$-walks are also \sh\ $sv$-walks). Finally, we store in $L[e_{k}]$ the position $a$ in $E^{\mathrm{dep}}_\mathrm{node}[v]$, which is the first position of the successors of $e$ (to be used in the backward phase).

\noindent\textbf{Intermediate phase (lines~\ref{alg2:pre-backward1}-\ref{alg2:pre-backward4}).} Once we have computed, for each $e\in E$, the length $C[e]$ of any \sh\ $se$-walk and the number $\Sigma[e]$ of these walks, it is easy to compute, for each $v\in V$, $c^*[v]$, where $c^*[v]$ specifies both the arrival time $c^*[v][1]$ and the length $c^*[v][2]$ of any \sfo\ $sv$-walk. Indeed, it suffices to scan the temporal edges in $E^{\mathrm{arr}}$ and, for each edge $e=(u,v,\tau,\lambda)$, to verify whether $(\arr(e),C[e])\vartriangleleft c^*[v]$ in which case $c^*[v]$ has to be set equal to $(\arr(e),C[e])$ (lines~\ref{alg2:pre-backward1}-\ref{alg2:pre-backward2}). Once $c^*[v]$ has been computed for each $v\in V$, the number $\sigma^*[v]$ of \sfo\ $sv$-walks can also be computed. Indeed, it suffices to scan again the temporal edges in $E^{\mathrm{arr}}$ and, for each edge $e=(u,v,\tau,\lambda)$, to verify whether $(\arr(e),C[e])=c^*[v]$ in which case $\sigma^*[v]$ has to be increased by the value  $\Sigma[e]$ (lines~\ref{alg2:pre-backward3}-\ref{alg2:pre-backward4}). Note that, at the same time, we can also compute the number $\Sigma^*[e]$ of \sfo\ $se$-walks.

\noindent\textbf{Backward phase (lines~\ref{alg2:backwardstart}-\ref{alg2:backwardend}).} The backward phase simply applies Lemma~\ref{lem:backwardlemma} in a ``reverse'' way, by scanning the temporal edges in $E^{\mathrm{arr}}$ from the last to the first one and, for each scanned edge $e$, by accumulating on its head $v$ the contribution to $b_{s,e}$ of each successor of $e$. More precisely, we store in $\delta[v]$ the partial sum $\sum_{f\in E_{\mathrm{node}}^{\mathrm{dep}}[v][l[v]:z]}\frac{b[f]}{\Sigma[f]}$ where $z$ denotes the last index when the sum was zeroed. It can be updated in constant time per edge $f\in E_{\mathrm{node}}^{\mathrm{dep}}[v]$ each time we encounter an edge $e$ with head $v$. Note that each such edge $e$ has successors $E_{\mathrm{node}}^{\mathrm{dep}}[v][L[e]:r]$, where $r$ is the position of the last edge $f$ in $E_{\mathrm{node}}^{\mathrm{dep}}$ such that $C[f]=C[e]+1$ and the index $L[e]$ of the first successor can only decrease as we scan edges $e$ with lower arrival times.\footnote{As it is common in several programming languages, given a sequence $A$ and two positive integers $l$ and $r$ both not greater than the length of $A$ and such that $l\leq r$, we denote by $A[l:r]$ the sub-sequence of $A$ from position $l$ to position $r$, both included. Moreover, $A[l:]=A[l:|A|]$.} As we scan the $k$th edge $e$ in $E^{\mathrm{arr}}$, we also maintain in $c^*[v]$ the arrival time and the length of any \sfo\ $sv$-walk in $G_k$.  Whenever $c^*[v]\vartriangleleft(\arr(e),C[e])$, the index $r$ for $e$ is indeed $l[v]-1$ and the quantity accumulated on $v$ has to be zeroed while $c^*[v]$ has to be updated to $(\arr(e),C[e])$.

\begin{theorem}\label{thm:sfo_alg1analysis}
For any temporal graph $G=(V,E,\infty)$ and for any $s\in V$, Algorithm~\ref{alg:sfononrestless} correctly computes the $s$-\sfo\ betweenness $b_{s,e}$ in time $O(M)$.
\end{theorem}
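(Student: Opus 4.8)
The plan is to prove correctness and the $O(M)$ running time separately, since these are essentially independent concerns. For \textbf{correctness}, I would argue phase by phase, maintaining an invariant tied to the prefix graphs $G_k=(V,E_k,\infty)$ consisting of the first $k$ edges of $E^{\mathrm{arr}}$. The key invariant for the forward phase is: immediately after the iteration that scans the $k$-th edge, for every node $v$ the value $c[v]$ equals the length of a \sh\ $sv$-walk in $G_k$, $\sigma[v]$ counts such walks, and $l[v]$ is the first position in $E^{\mathrm{dep}}_{\mathrm{node}}[v]$ whose edge can still extend a \sh\ $sv$-walk in some future $G_{k'}$ (i.e.\ has departure time exceeding the arrival time of every edge scanned so far). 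I would prove this by induction on $k$, and the crucial substep is to justify that the edges ``finalised'' via \processfunction{} in positions $l[u]$ through $E^{\mathrm{arr}}_{\mathrm{dep}}[e]$ genuinely can never extend a \sh\ $su$-walk again: because $E^{\mathrm{arr}}$ is scanned in non-decreasing arrival order, any later edge extending them would have arrival time at least $\arr(e)$, but their departure time is at most $\dep(e)<\arr(e)$ (strictness), so no later extension is possible. This legitimises setting $C[f]:=c[u]+1$ and $\Sigma[f]:=\sigma[u]$ permanently, which is exactly Fact~\ref{fact:sfo_prefixoptimality} applied to the prefix ending just before $f$.

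For the \textbf{intermediate phase}, correctness is more direct: I would argue that after the first scan $c^*[v]$ holds the lexicographically minimal pair $(\arr,\text{length})$ over all $s$-reachable edges entering $v$, which by Fact~\ref{fact:sfo_toptimalimpliescoptimal} is precisely the \sfo\ cost, and that the second scan aggregates $\Sigma[e]$ exactly over the edges realising that minimal cost, yielding $\sigma^*[v]$ and the per-edge counts $\Sigma^*[e]$. This uses $C[e]=\sigma_{s,e}$-length already established and Fact~\ref{fact:sfo_prefixtimessuffix} implicitly, since \sfo\ counts factor through \sh\ prefix counts.

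For the \textbf{backward phase}, the goal is to show that when edge $e$ with head $v$ is processed in reverse, the accumulator $\delta[v]$ equals $\sum_{f\in\mathtt{succ}_{s,e}} b[f]/\Sigma[f]$, so that the assignment $b[e]:=\Sigma[e]\,\delta[v]$ plus the \sfo-source term realises Lemma~\ref{lem:backwardlemma} verbatim. The subtle point is the amortised maintenance of $\delta[v]$: the set of successors of $e$ is the contiguous block $E^{\mathrm{dep}}_{\mathrm{node}}[v][L[e]:r]$, and because $L[e]$ is monotonically non-increasing as arrival times decrease, each edge $f$ is added into $\delta[v]$ exactly once over the whole scan. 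I would verify that the right endpoint $r$ is correctly captured by the ``zeroing'' mechanism driven by $c^*[v]$: when a strictly better \sfo\ cost appears at lower arrival time, the successors beyond the current frontier no longer lie on \sfo\ walks through $e$, so $\delta[v]$ is reset and $c^*[v]$ advanced. This is where I expect the main obstacle: carefully formalising that the block structure of successors, the monotonicity of $L[e]$, and the reset condition together reconstruct exactly $\mathtt{succ}_{s,e}$ and no more. It requires relating the \emph{contiguity} of successors in $E^{\mathrm{dep}}_{\mathrm{node}}[v]$ (they share the same length $C[e]+1$ and lie in a departure-time window) to the definition of $\mathtt{succ}_{s,e,t}$, and checking the off-by-one bookkeeping of $l[v]$, $L[e]$, and $r$.

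Finally, for the \textbf{time bound}, the forward, intermediate, and backward phases each scan $E^{\mathrm{arr}}$ once, contributing $O(M)$ directly. The only non-obviously-linear work is inside the \texttt{while} loop computing $a$ and inside the \processfunction{}/backward loops over ranges of $E^{\mathrm{dep}}_{\mathrm{node}}$. I would discharge these by an amortisation argument: the pointer $l[v]$ (and, in the backward phase, the symmetric pointer) only advances forward across $E^{\mathrm{dep}}_{\mathrm{node}}[v]$ and never retreats, so the total work spent in all these inner loops over all iterations, summed over all nodes, telescopes to $\sum_{v}|E^{\mathrm{dep}}_{\mathrm{node}}[v]|=M$. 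Together with the $O(M)$ preprocessing to build $E^{\mathrm{dep}}_{\mathrm{node}}$ and $E^{\mathrm{arr}}_{\mathrm{dep}}$, this gives the overall $O(M)$ bound.
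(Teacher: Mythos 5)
Your overall plan is the same as the paper's proof in Appendix~\ref{sec:proofs2}: a forward invariant on the prefix graphs $G_k$ proved by induction (using positive travel times and the arrival-sorted scan to argue that finalisation is permanent), the backward phase reduced to Lemma~\ref{lem:backwardlemma}, and an amortisation over the monotone pointers $l[v]$ for the $O(M)$ bound. The forward, intermediate and timing parts of your sketch are sound, and in places more explicit than the paper's own one-line treatments.

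The genuine gap is exactly the step you postpone, and it is not a mere formalisation chore: the claim that ``the right endpoint $r$ is correctly captured by the zeroing mechanism driven by $c^*[v]$'' cannot be verified for the test on line~\ref{alg2:zerodelta} of Algorithm~\ref{alg:sfononrestless} as printed. Along the backward scan the arrival times of processed in-edges of $v$ are non-increasing, so once $c^*[v]$ holds the pair $(\arr(e'),C[e'])$ of a processed edge $e'$, the condition $c^*[v]\vartriangleleft(\arr(e),C[e])$ can fire again only when $\arr(e)=\arr(e')$; it never fires when $\arr(e)<\arr(e')$. But the reset is required (by the paper's own backward invariant) whenever $C[e]$ strictly exceeds the length of the previously processed relevant in-edge, which can perfectly well happen at a strictly smaller arrival time. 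Concretely, take $g=(s,u,1,1)$, $e=(u,v,2,1)$, $f=(v,w,3,1)$, $e'=(s,v,10,1)$, $f'=(v,x,20,1)$, so that $C[e']=1$, $C[e]=2$, $C[f']=2$, $C[f]=3$, and the unique \sfo\ $sx$-walk is $\langle e',f'\rangle$; hence $f'$ is a successor of $e'$ but not of $e$. Processing backward, $e'$ triggers the reset ($(3,2)\vartriangleleft(11,1)$) and leaves $\delta[v]=b[f']/\Sigma[f']=1$; then at $e$ the test $(11,1)\vartriangleleft(3,2)$ is false, no reset occurs, $\delta[v]$ grows to $2$, and the algorithm outputs $b[e]=3$ (and $b[g]=4$) instead of the correct $b_{s,e}=2$ (and $b_{s,g}=3$). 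So the monotonicity of $L[e]$ plus this reset condition do \emph{not} reconstruct $\mathtt{succ}_{s,e}$; what the induction actually needs is the length-only test used in the \sh\ variant (line~\ref{alg1:zerodelta} of Algorithm~\ref{alg:shnonrestless}, with a tracker initialised to the forward-phase value $c[v]$ and updated on each reset), under which your block-by-block argument on equal $C$-values does go through. As it stands, your proposal leaves the crux unproven, and carrying it out faithfully would in fact refute the backward invariant for the pseudo-code as written rather than establish it — a point the paper's one-sentence ``easy induction'' also glosses over.
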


In Appendix~\ref{sec:proofs2} we prove the above theorem, while in Appendix~\ref{sec:exampleexecution} we describe an example of execution of Algorithm~\ref{alg:sfononrestless} on a simple temporal graph, and in Appendix~\ref{sec:shortestnonrestless} we show how this algorithm can be easily simplified in order to compute the non-restless \sh\ betweenness.

\subsection{The restless case}
\label{sec:general}

We now briefly describe how Algorithm~\ref{alg:sfononrestless} has to be modified in order to deal with the general case (that is, $\beta\leq\infty$): the new algorithm manages the increased complexity of the restless constraint through appropriate lists of interval quintuples which correspond to windows of time with temporal edges from a node that extend the same optimal walks. Once again, we execute a forward phase (in order to compute, for each temporal edge $e$, the length $C[e]$ of any \sh\ $se$-walk and the number $\Sigma[e]$ of these walks), followed by the same intermediate phase and a backward phase in which Lemma~\ref{lem:backwardlemma} is applied. Note that in Algorithm~\ref{alg:sfononrestless}, during the forward phase, in order to correctly apply the lemma in the backward phase, it sufficed to memorize, for each temporal edge $e$, the position $L[e]$ in the list $E^{\mathrm{dep}}_\mathrm{node}[v]$ of the first temporal edge which could extend a \sh\ $se$-walk. This was due to the fact that, in the non-restless case, if an en edge $f$ in $E^{\mathrm{dep}}_\mathrm{node}[v]$ extends a $se$-walk, then all edges following $f$ in $E^{\mathrm{dep}}_\mathrm{node}[v]$ also extend the walk. In the general case, this is not true anymore (because of the waiting constraints) and the forward phase of the algorithm has to maintain additional information to be used during the backward phase. In particular, given a temporal graph $G=(V,E,\beta)$, for each node $v\in V$, the general algorithm maintains a list $\intervs{v}$ of \textit{interval quintuples} $Q=(l,r,c,P,\eta)$ where $1\leq l,r\leq \left|E^{\mathrm{dep}}_\mathrm{node}[v]\right|$, $c\in\mathbf{N}$, $\eta\in\mathbf{N}$, and $P\subseteq E$ is a list of predecessor edges. The semantic of an interval quintuple $Q$ is the following.

\begin{itemize}
    \item $l$ and $r$ are the left and right extremes of an interval $Q.I$ of edges in $E^{\mathrm{dep}}_\mathrm{node}[v]$ (that is, $Q.I=E^{\mathrm{dep}}_\mathrm{node}[v][l:r]$).
    
    \item $c$ is the length of any \sh\ walk from the source $s$ (in the temporal graph induced by the edges scanned so far), such that edges in $Q.I$ extend it.
    
    \item The edges ending these walks are predecessors of edges in $Q.I$ and are stored in the ordered list $P$ (sorted by arrival times). More precisely, for any edge $f=Q.I[i]$, with $1\leq i\leq r-l+1$, the set of edges $P_{i}$, that precede $f$ in the above mentioned \sh\ walks, is a subset of $P$. We rely on the fact that edges with the same predecessors partition $Q.I$ into consecutive intervals. Indeed, $P_{i+1}$ is included in $P_{i}$ if $i\le r-l$. The ordering of $P$ respect the inclusion ordering $P_1\supseteq P_2\supseteq\cdots$ so that each $P_i$ is a suffix of $P$.
    
    \item $\eta$ is the total number of \sh\ walks that end with an edge in $P$ (they all have length $c$).
\end{itemize}
The forward and backward phase of the general algorithm as well as the finalisation of an edge are more complicated than in the case of the non-restless case, in order to deal with the list of interval quintuples. A detailed description of these modifications along with the pseudo-code of the algorithm and the correctness and complexity analysis are presented in Appendix~\ref{sec:sforestless}.

\section{Computing other betweennesses}

In Appendix~\ref{sec:generalalg} we introduce a general cost framework and we show how it can  encompass the following optimality criteria: fastest (\fa), foremost (\fo), \sh, and shortest fastest (\sfa). Other criteria (such as latest) can also be dealt with by the framework (we leave to the reader the freedom of choosing other such criteria). In the same appendix we then prove that the algorithm for the restless \sfo\ case can be extended to any criteria that fits in the framework.

\section{The experiments}

In this section we perform several experimental analysis in order to compare the performance of our algorithms with respect to the ones previously proposed in the literature, and in order to apply the algorithms themselves to a specific case study in the field of public transport networks. Our experimental study includes the following three algorithms.

\begin{itemize}
\item \bmnr: this is the algorithm proposed in~\cite{BussMNR20} to compute the exact values of the \sh\ and \sfo\ betweenness of all nodes.

\item \onbra: this is the approximation algorithm proposed in~\cite{Santoro_2022}, which is based on a sampling technique for obtaining an absolute approximation of the \sh\ betweenness values (both in the non restless and in the restless case).

\item \bcl: this is the algorithm described in the previous sections, that we use here to compute the exact values of the \sh\ (Algorithm~\ref{alg:shnonrestless}) and \sfo\ (Algorithm~\ref{alg:sfononrestless}) betweenness of all nodes.
\end{itemize}

In order to analyse the correlation between different centrality rankings, we used three different metrics: the Kendall's $\tau$ correlation coefficient~\cite{Kendall38}, a weighted version of this coefficient~\cite{Vigna15}, and the intersection of the top-$1000$ ranked nodes (note that the latter metric is directly translatable into the Jaccard similarity of the top ranked nodes). For the weighted Kendall's $\tau$ coefficient, we used the hyperbolic weighting scheme, that gives weights to the positions in the ranking which decay harmonically with the ranks, i.e., the weight of rank $r$ is $1/(r+1)$. Both the Kendall's $\tau$ correlation coefficient and its weighted version has been computed by using the Java code available at the Laboratory for Web Algorithmics~\cite{LAWWebSite}.

\subsection{Comparing algorithms execution times}
\label{sec:experiment1}

First, we compare the running time of \bmnr, of \onbra, and of \bcl. We present here the results concerning only the computation of the \sh\ betweenness: however, the results are similar in the case of the \sfo\ betweenness (see the table in Section~\ref{sec:experiment1sfb} in the appendix). We used the Julia implementation of \bmnr\ associated to~\cite{Becker2023} and available at~\cite{TSBProxyWebSite}.\footnote{As stated in~\cite{Becker2023}, the original C implementations of \bmnr\ caused overflow (indicated by negative centralities) and out of memory errors.} We also implemented \bcl\ in Julia: our code is available at \url{https://github.com/piluc/TWBC/}. A Rust implementation is also available at \url{https://github.com/lviennot/tempograph/}. Finally, we made use of the results reported in \cite{Santoro_2022} which, in turn, made use of the C implementation of \onbra\ available at~\cite{ONBRAWebSite}. We executed the experiments on a server running Ubuntu 20.04.5 LTS 112 with processors Intel(R) Xeon(R) Gold 6238R CPU @ 2.20GHz and 112GB RAM.

\noindent\textbf{Dataset.} In this first experiment, we execute the algorithms on the set of temporal graphs used in~\cite{Becker2023}, which includes almost all the networks of~\cite{BussMNR20} and of~\cite{Santoro_2022}. As stated in~\cite{Becker2023}, this set does not include one temporal graph used in~\cite{BussMNR20}, because it does not appear to be available anymore, and it replaces one temporal graph used in~\cite{Santoro_2022} by a bigger one from a different domain to make the set of analyzed temporal graphs more diverse. This latter network excluded in~\cite{Becker2023} and the two larger temporal graphs analyzed in~\cite{Santoro_2022}, which were excluded in~\cite{Becker2023} because of the excessive amount of time needed to compute their exact temporal betweenness values, will be analysed in our second experiment. The properties of these networks are summarized in the first five columns of Table~\ref{tab:experiment1}.

\begin{table}[t]
\centering
\begin{tabular}{@{}lrrrcrrr@{}}\toprule
\textbf{Network}    & \multicolumn{1}{c}{$\mathbf{n}$} & \multicolumn{1}{c}{$\mathbf{M}$} & \multicolumn{1}{c}{$\mathbf{T}$} & \textbf{URL} & \multicolumn{1}{c}{$\mathbf{t}_{\mathbf{\textsc{BMNR}}}$} & \multicolumn{1}{c}{$\mathbf{t}_{\mathbf{\textsc{Fast}}}$} & \multicolumn{1}{c}{$\frac{\mathbf{t}_{\mathbf{\textsc{BMNR}}}}{\mathbf{t}_{\mathbf{\textsc{Fast}}}}$} \\\midrule
\texttt{Infectious} & 10972 & 831824 & 76944 & \cite{SocioPatternsWebSite} & 3111.19 & 1603.53 & 1.94 \\
\texttt{DiggReply} & 30360 & 86203 & 82641 & \cite{TemporalNetworkRepositoryWebsite} & 1190.41 & 506.53 & 2.35 \\
\texttt{FacebookWall} & 35817 & 198028 & 194904 & \cite{TemporalNetworkRepositoryWebsite} & 3410.98 & 1317.76 & 2.59 \\
\texttt{SMS} & 44090 & 544607 & 467838 & \cite{TemporalNetworkRepositoryWebsite} & 12476.04 & 4721.05 & 2.64 \\
\texttt{SlashdotReply} & 51083 & 139789 & 89862 & \cite{TemporalNetworkRepositoryWebsite} & 4489.14 & 1643.77 & 2.73 \\
\texttt{WikiElections} & 7115 & 106985 & 101012 & \cite{SnapnetsWebSite} & 521.11 & 113.68 & 4.58 \\
\texttt{CollegeMsg} & 1899 & 59798 & 58911 & \cite{SnapnetsWebSite} & 242.81 & 23.20 & 10.47 \\
\texttt{Topology} & 16564 & 198038 & 32823 & \cite{KonectWebSite} & 8704.15 & 792.82 & 10.98 \\
\texttt{Hypertext09} & 113  & 41636 & 5246 & \cite{SocioPatternsWebSite} & 83.63 & 0.80 & 104.37 \\
\texttt{HighSchool11} & 126 & 57078 & 5609 & \cite{SocioPatternsWebSite} & 132.33 & 1.18 & 111.86 \\
\texttt{HighSchool12} & 180 & 90094 & 11273 & \cite{SocioPatternsWebSite} & 394.70 & 2.93 & 134.89 \\
\texttt{PrimarySchool} & 242 & 251546 & 3100 & \cite{SocioPatternsWebSite} & 1895.45 & 12.92 & 146.67 \\
\texttt{EmailEU} & 986 & 327336 & 207880 & \cite{SnapnetsWebSite} & 11942.84 & 72.24 & 165.32 \\
\texttt{HighSchool13} & 327 & 377016 & 7375 & \cite{SocioPatternsWebSite} & 7131.86 & 29.63 & 240.67 \\
\texttt{HospitalWard} & 75 & 64848 & 9453 & \cite{SocioPatternsWebSite} & 204.43 & 0.83 & 247.49 \\
\bottomrule
\end{tabular}
\caption{The temporal graphs used in our first experiment, where $n$ denotes the number of nodes, $M$ the number of temporal edges, $T$ the number of unique time steps, $\mathrm{t_{\textsc{BMNR}}}$ the execution time of \textsc{BMNR}, and $\mathrm{t_{\textsc{Fast}}}$ the execution time of \textsc{Fast}.}
\label{tab:experiment1}
\end{table}

\noindent\textbf{Results.} The execution times of \bmnr\ and of \bcl\ are shown in the 6th and 7th columns of Table~\ref{tab:experiment1}. As it can be seen, \bcl\ is between approximately two and almost 250 times faster than \bmnr\ (see the eighth column). It is worth observing that the execution time of our Julia implementation of \bcl\ is significantly lower than the execution time (reported in~\cite{Santoro_2022}) of the C implementation of \onbra, which was executed on an architecture not very different from ours. Indeed, on the four temporal graphs of our dataset that have also been used in the experimental analysis of~\cite{Santoro_2022} (that is, \texttt{College msg}, \texttt{Email EU}, \texttt{Facebook wall}, and \texttt{SMS}), the execution time of \onbra\ is approximately 6, 25, 2, and 3 times slower than our algorithm (note that \onbra\ computed the estimates of the \sh\ betweenness values, by using a sample of node pairs whose size was less than $1\%$ of the number of all node pairs).

\begin{table*}
\centering
\begin{adjustbox}{width=\textwidth}
\begin{tabular}{@{}lrrrcrrrrrr@{}}\toprule
\textbf{Temporal graph}    & \multicolumn{1}{c}{$\mathbf{n}$} & \multicolumn{1}{c}{$\mathbf{M}$} & \multicolumn{1}{c}{$\mathbf{T}$} & \textbf{Source} & \multicolumn{1}{c}{$\mathbf{t}_{\bmnr}$} & \multicolumn{1}{c}{$\mathbf{t}_{\bcl}$} & \multicolumn{1}{c}{$\frac{\mathbf{t}_{\bmnr}}{\mathbf{t}_{\bcl}}$} & \multicolumn{1}{c}{$\mathbf{t}_{\onbra}$} & \multicolumn{1}{c}{Sample size} & \multicolumn{1}{c}{Weighted $\tau$}\\\midrule
\texttt{MathOverflow} & 24759 & 390414 & 389952 & \cite{SnapnetsWebSite} & 46594 & 2117 & 22.01 & 36983 & $30650$ & $0.88$\\
\texttt{AskUbuntu} & 157222 & 726639 & 724715 & \cite{SnapnetsWebSite} & 421781 & 32280 &  13.07 & 35585 & $14831$ & $0.86$\\
\texttt{SuperUser} & 192409 & 1108716 & 1105102 & \cite{SnapnetsWebSite} & 972104 & 63553 & 15.30 & 41856 & $11106$ & $0.86$\\
\bottomrule
\end{tabular}
\end{adjustbox}
\caption{The temporal graphs used in our second experiment, where $t_{\bmnr}$ denotes the execution time of \textsc{BMNR}, $\mathrm{t_{\textsc{Fast}}}$ the execution time of \textsc{Fast}, and $t_{\onbra}$ the execution time of \onbra\ reported in~\cite{Santoro_2022}. The last two columns show the sample size used by \onbra\ and the weighted Kendall $\tau$ coefficient of the ranking produced by \onbra\ with this sample size, respectively.}
\label{tab:experiment2}
\end{table*}

\subsection{Analysing three larger temporal graphs}

In the experimental analysis of~\cite{Santoro_2022}, the authors consider three other temporal graphs, whose properties are summarised in the first five columns of Table~\ref{tab:experiment2}. According to the authors, on these temporal graphs the algorithm \bmnr\ was not able to conclude the computation on their machine since it required too much memory, while \onbra\ could provide estimates of the \sh\ betweenness centrality values in the time indicated in the eighth column of the table. The memory problems of \bmnr\ have been already solved in the Julia implementation of \bmnr, by using dictionaries instead of matrices. Hence, we have been able to execute the algorithm \bmnr\ with input these three temporal graphs. By using \bcl, the computation of the \textit{exact} \sh\ betweenness values requires approximately 0.6, 9, and 17.7 hours, thus significantly improving over \bmnr\ (very similar results hold in the case of the \sfo\ betweenness). Moreover, the execution time of our Julia implementation of \bcl\ on the three networks is significantly less than, comparable with, and approximately 1.5 bigger than the reported execution time of the C implementation of \onbra\ (recall that \onbra\ computes estimates of the betweenness values). The second to last column of Table~\ref{tab:experiment2} shows the sample size used by \onbra, which is the one reported in~\cite{Santoro_2022}, while the last column shows the weighted Kendall $\tau$ coefficient  between the ranking produced by \bcl\ and the one produced by \onbra\ with this sample size. As it can be seen, in order to be competitive in terms of execution time, \onbra\ produces centrality values which are quite imprecise in terms of rankings.

\begin{figure}[ht]
\centering
\begin{adjustbox}{width=0.7\textwidth}
\includegraphics{./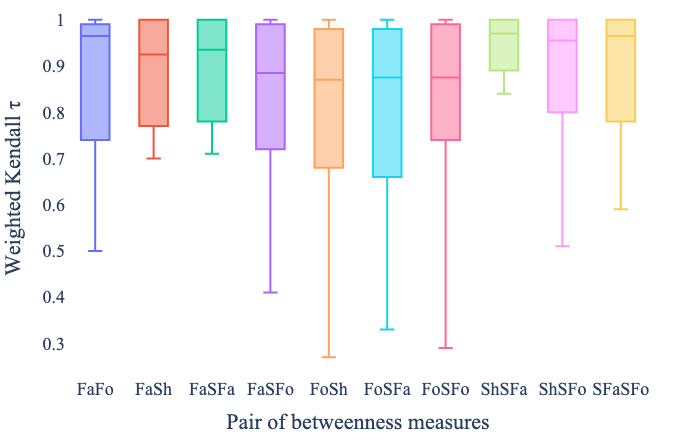}
\end{adjustbox}
\caption{The quartiles of the weighted Kendall $\tau$ over 14 networks, for all pairs of betweenness measures, with $\beta=2400$.}
\label{fig:boxplot3}
\end{figure}

\noindent\textbf{Comparing \sh\ and \sfo\ betweennesses.} Once we have computed the \sh\ and \sfo\ betweenness values of the nodes of the above three larger temporal graphs, we analysed the correlation between these two measures. These correlations are very high, especially if we consider the weighted Kendall's $\tau$ coefficient or the intersection of the top-$1000$ ranked nodes, whose values are $0.97$, $0.98$, and $0.98$, and $949$, $960$, and $962$, respectively. In other words, if we are interested in the top nodes in the rankings, then there is not so much difference between using the \sh\ and the \sfo\ betweenness measure.

\subsection{Analysing ranking correlations}

\begin{figure}[ht]
\centering
\begin{adjustbox}{width=0.7\textwidth}
\includegraphics{./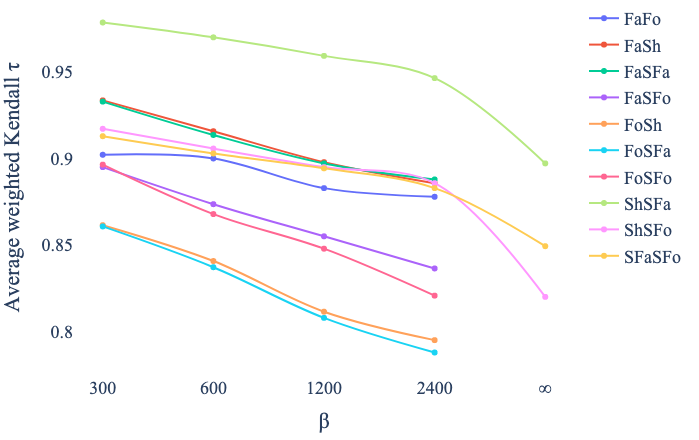}
\end{adjustbox}
\caption{The average weighted Kendall $\tau$ over 14 networks, for all pairs of betweenness measures, as a function of the value of $\beta$.}
\label{fig:avgwktvsbeta}
\end{figure}

Our third experiment consists of comparing the rankings of the nodes of a temporal graph when sorted according to their betweenness values, computed with different waiting constraints and for different walk optimality criteria. In particular, we considered the following values of $\beta$: $300$, $600$, $1200$, $2400$, and $\infty$. Moreover, we computed the \fa, \fo, \sh, \sfa, and \sfo\ betweenness values (by using the general \bcl\ algorithm, that is, Algorithm~\ref{alg:restless} in the appendix). However, we did not compute the \fo\ and the \fa\ betweenness values in the case $\beta=\infty$, due to the huge number of optimal walks: this involves using Julia number data structures which causes our algorithm to take an excessive amount of time needed to calculate the exact betweenness values. For the very same reason (for different values of $\beta$), we excluded the \texttt{Topology} network from the dataset used in this experiment, which otherwise is the same as the one used in the first experiment.

The values of the Kendall $\tau$ correlation, of its weighted variant, and of the top-50 intersection are shown in the tables included in Appendices~\ref{sec:rankingcorrelation}: we here discuss the weighted Kendall $\tau$ results only. From the tables in the appendix, we can conclude that all the betweenness measures are highly correlated, with the \sh\ and the \sfa\ betweenness being the two more correlated (see the box plot of Figure~\ref{fig:boxplot3}, which show the quartiles of the correlation values for $\beta=2400$). In particular, in half of the networks in the dataset the weighted Kendall $\tau$ for this pair of betweenness measures with $\beta=2400$ is at least $0.97$ and the minimum weighted Kendall $\tau$ is $0.84$. This minimum value is reached in correspondence of the \texttt{Primary school} network, for which the weighted Kendall $\tau$ is the minimum one for all the pairs of betweenness measures.

Another interesting observation is that the weighted Kendall $\tau$ values seem to depend on the waiting constraints. In particular, we can observe that these values tend to decrease as the value of $\beta$ increases, as shown in the left part of Figure~\ref{fig:avgwktvsbeta} where the average weighted Kendall $\tau$ values over the fourteen networks is shown as a function of the value of $\beta$. This behaviour might be justified by the fact that when $\beta$ increases the set of optimal walks with respect to different betweenness measures may be quite different: that is, the more stringent are the waiting constraints the more similar are the sets of optimal walks.

\subsection{Analysing public transport networks}

All the networks analysed so far are ``uniform'' temporal graphs, in the sense that the traversal time of all temporal edges is equal to $1$. In this last experiment, instead, we use a subset of the dataset published in~\cite{Kujala2018} and used~\cite{Crescenzi2019}. This dataset includes 25 cities’ public transport networks and is available in multiple formats including the temporal edge list for a specific working day (note that each temporal edge has a travel time usually much greater than $1$). The list of the cities that we used in our experiment is summarized in Table~\ref{tab:cities}, where, for each city, we provide, in the first two columns, the number of nodes (that is, the number of stops) and the number of temporal edges. In this experiment, we focus on the \sfa\ and the \sfo\ betweenness values, because of two main reasons. First, each temporal edge of the temporal graph relates to the connection between two stations of a transport trip: hence, counting the number of temporal edges in a walk does not indicate the number of transfers (which instead should be more interesting to analyse in the case of a public transport network). For this reason, we have not analysed the \sh\ betweenness measure. Secondly, focusing on walks which are the shortest among the fastest and the foremost ones allows us to analyse walks which are the closest to being paths in the case of waiting constraints (which are both desirable properties in the case of public transport trips).

\begin{table}[b]
\centering
\begin{tabular}{lrrrrrrr}
\toprule
\multicolumn{1}{c}{\textbf{City}} & \multicolumn{1}{c}{$\mathbf{n}$} & \multicolumn{1}{c}{$\mathbf{M}$} & \multicolumn{1}{c}{$\mathbf{m}$} & \multicolumn{1}{c}{$\mathbf{M/m}$} & \multicolumn{1}{c}{$\mathbf{t_{\bcl}}$} & \multicolumn{1}{c}{$\mathbf{t_{B}}$}\\
\midrule
Berlin & $4601$ & $1048209$ & $12359$ & $85$ & $35555$ & $8.03$\\
Bordeaux & $3435$ & $236595$& $4040$ & $59$ & $2336$ & $2.95$\\
Kuopio & $549$ & $32117$ & $979$  & $33$ & $126$ & $0.07$\\
Paris & $11950$ & $1823871$ & $16704$ & $109$ & $121157$ & $33.51$\\
Rome & $7869$ & $1051211$ & $10143$ & $104$ & $70236$ & $18.77$ \\
Venice & $1874$ & $118480$ & $3464$ & $34$ & $439$ & $0.67$\\
\bottomrule
\end{tabular}
\caption{The six cities analysed in our case study. The value $m$ denotes the number of edges in the underlying graph, while the value $t_{\mathrm{B}}$ denotes the execution time of the Brandes' algorithm on this graph.}
\label{tab:cities}
\end{table}

Given a temporal graph $G=(V,E,\beta)$, the \textit{underlying graph} of $G$ is the graph whose set of nodes is $V$ and whose set of edges contains all pairs $(u,v)$ such that $(u,v,\tau,\lambda)\in E$, for some $\tau$ and $\lambda$. The main goal of this experiment is to verify how much the \sfa\ and the \sfo\ betweenness measures of a temporal graph are correlated to the (classical) betweenness measure of the corresponding underlying graph. In the second to last column of the table we show the average execution time of the \bcl\ algorithm computing the \sfa\ and the \sfo\ betweenness with $\beta=300,600,1200,2400,\infty$, while the last column shows the execution time for computing the betweenness values of the underlying graph by using the Brandes' algorithm~\cite{Brandes_2001} (as implemented in the Julia \texttt{Graphs} package~\cite{GraphsWebSite}). As it can be seen, this latter algorithm is significantly faster than the \bcl\ algorithm. Actually, we might expect an additional multiplicative factor close to the ratio $M/m$, where $m$ denotes the number of edges in the underlying graph. In practice, this factor is between 20 and 50 times bigger because of two main reasons: on the one hand, our code for the general case is not as optimised as the code for the non-restless case thus leading to a code around ten times slower, on the other hand we are forced to use big data structures in order to deal with the huge number of optimal walks (which is not the case with the underlying graphs).

\begin{figure}
\centering
\begin{adjustbox}{width=0.7\textwidth}
\includegraphics{./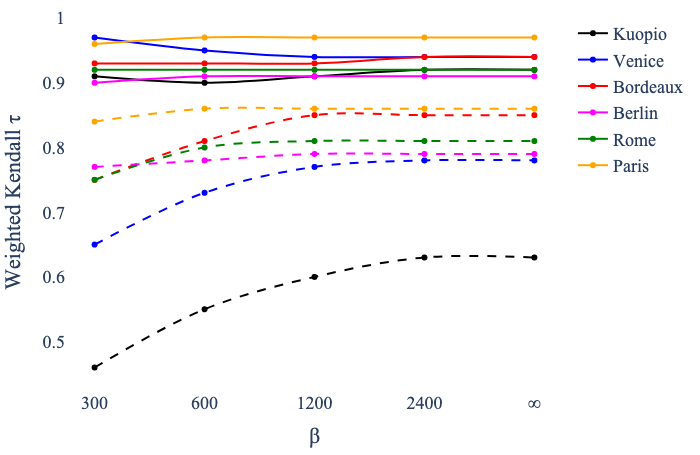}
\end{adjustbox}
\caption{The weighted Kendall $\tau$ between the \sfa\ and the \sfo\ betweenness rankings (solid) and between the \sfo\ betweenness and the betweenness rankings (dashed), for each public transport network and for $\beta=300,600,1200,2400,\infty$.}
\label{fig:wktbrandes}
\end{figure}

Since computing the betweenness of the underlying graph is significantly faster than our algorithm, it is worth determining the correlation between this betweenness and the \sfa\ and the \sfo\ betweenness. As it is shown in Figure~\ref{fig:wktbrandes}, the weighted Kendall $\tau$ between the \sfa\ and the \sfo\ betweenness rankings is very high for all values of $\beta$. On the contrary, the weighted Kendall $\tau$ between the \sfo\ betweenness and the betweenness rankings is significantly lower, especially when $\beta=300$ or $\beta=600$, which correspond to five and ten minutes of waiting time, respectively (similar results hold for the \sfa\ betweenness). These results thus suggest that the betweenness of the underlying graph cannot be used as a proxy of the \sfa\ and the \sfo\ betweenness, whenever waiting constraints have to be satisfied. This contrasts with what has been shown in~\cite{Becker2023}, where the authors found high correlations between the \sh\ betweenness with $\beta=\infty$ (that is, when no waiting constraint are used) and the betweenness of the underlying graph.

\section{Further research}

As we follow the framework of~\cite{BrunelliV2022}, our algorithm can support (with the same complexity and with exact computation) more general waiting constraints (such as minimum waiting time and maximum waiting time varying for each node), and edge weights/costs with appropriate cost and target-cost structures. From a theoretical point of view, it would be interesting to look for conditional lower bound on the computation of the betweenness in the restless case. From an experimental point of view, instead, it would be worth enriching the temporal graph model and appropriately modify our algorithms  in order to take into account the number of transfers in public transport networks, and, thus, focusing on the betweenness values based on the walks with the minimum number of transfers. Finally, we think it would be interesting to experimentally compare our algorithm with more recent papers that propose new exact~\cite{Zhang2024} and approximation~\cite{Cruciani2024} algorithms for computing the temporal betweenness in the non-restless case.

\part*{Appendix}
\section{Proofs of Section~\ref{sec:preliminaries}}
\label{sec:proofs1}

The proofs of Facts~\ref{fact:sfo_prefixoptimality} and~\ref{fact:sfo_toptimalimpliescoptimal} immediately follows from the definition of \sh\ and \sfo\ walks.

\paragraph{Proof of Fact~\ref{fact:betweennessformula}} From the previous definitions, we have that
\begin{align*}
b_{u} & = \sum_{s,t\in V\setminus\{u\}:\chi_{s,t}=1}\frac{\sigma^*_{s,u,t}}{\sigma^*_{s,t}} = \sum_{s\in V\setminus\{u\}}\sum_{t\in V\setminus\{u\}:\chi_{s,t}=1}\frac{\sigma^*_{s,u,t}}{\sigma^*_{s,t}} \\
& = \sum_{s\in V\setminus\{u\}}\sum_{t\in V\setminus\{u\}:\chi_{s,t}=1}\frac{1}{\sigma^*_{s,t}}\sum_{e\in E_{u}^\mathrm{h}}\sigma^{*}_{s,e,t}\\
& = \sum_{s\in V\setminus\{u\}}\sum_{e\in E^\mathrm{h}_u}\sum_{t\in V\setminus\{u\}:\chi_{s,t}=1}\frac{\sigma^*_{s,e,t}}{\sigma^*_{s,t}}\\
& = \sum_{s\in V\setminus\{u\}}\sum_{e\in E^\mathrm{h}_u}\left(\sum_{t\in V:\chi_{s,t}=1}\frac{\sigma^*_{s,e,t}}{\sigma^*_{s,t}}-\chi_{s,u}\frac{\sigma^*_{s,e,u}}{\sigma^*_{s,u}}\right) \\
& = \sum_{s\in V\setminus\{u\}}\sum_{e\in E^\mathrm{h}_u}\left(b_{s,e}-\chi_{s,u}\frac{\sigma^*_{s,e,u}}{\sigma^*_{s,u}}\right) = \\
& = \sum_{s\in V\setminus\{u\}}\left(\sum_{e\in E^\mathrm{h}_u}b_{s,e}-\chi_{s,u}\sum_{e\in E^\mathrm{h}_u}\frac{\sigma^*_{s,e,u}}{\sigma^*_{s,u}}\right) \\
& = \sum_{s\in V\setminus\{u\}}\left(\left(\sum_{e\in E^\mathrm{h}_u}b_{s,e}\right)-\chi_{s,u}\right),
\end{align*}
where the last equality is due to the fact that $\sigma^*_{s,u}=\sum_{e\in E^\mathrm{h}_u}\sigma^*_{s,e,u}$. The fact thus follows.\qed

\paragraph{Proof of Fact~\ref{fact:sfo_prefixtimessuffix}}

Since the $st$-walks in $\mathcal{W}_{s,e,t}$ are \sfo, from Facts~\ref{fact:sfo_toptimalimpliescoptimal} and~\ref{fact:sfo_prefixoptimality} it follows that any $W\in \mathcal{W}_{s,e,t}$, whose last temporal edge is $f$, is a shortest $sf$-walk, and that the prefix of $W$ up to the temporal edge $e$ is a shortest $se$-walk. As a consequence, the total number of distinct prefixes of walks in $\mathcal{W}_{s,e,t}$ is equal to $\sigma_{s,e}$ and each such prefix has the same length, say $c$, which is the length of a \sh\ $se$-walk. Let $W_{1}$ be the prefix of a walk $W\in \mathcal{W}_{s,e,t}$ up to the temporal edge $e$, $X_2=\langle e_{1}e_{2}\cdots e_{k}\rangle$ be the suffix of a walk $X\in \mathcal{W}_{s,e,t}$ following the temporal edge $e$, and $Y=W_{1}.e_{1}.e_{2}.\cdots.e_{k}$. Then $\gamma(Y)=\gamma(X)$ since all prefixes have length $c$. Moreover, both $X$ and $Y$ ends with the temporal edge $e_{k}$. Hence, $\tcf(e_{k},\gamma(X))=^{F}\tcf(e_{k},\gamma(Y))$. Since $X\in \mathcal{W}_{s,e,t}$, then also $Y\in \mathcal{W}_{s,e,t}$. This implies that we can combine any prefix with any suffix and obtain a \sfo\ $st$-walk. We thus have $\sigma^*_{s,e,t}=\sigma_{s,e}\cdot\theta_{s,e,t}$ and the fact follows.\qed

\paragraph{Proof of Lemma~\ref{lem:backwardlemma}}

From Fact~\ref{fact:sfo_prefixtimessuffix} it follows that
\begin{equation}\label{eq:betweenness}
b_{s,e} = \sum_{t\in V:\chi_{s,t}=1}\frac{\sigma^{*}_{s,e,t}}{\sigma^{*}_{s,t}} = \sigma_{s,e}\sum_{t\in V:\chi_{s,t}=1}\frac{\theta_{s,e,t}}{\sigma^{*}_{s,t}}
\end{equation}
By distinguishing the case in which $t=v$, we have 
\begin{eqnarray*}
b_{s,e} & = & \sigma_{s,e}\sum_{t\in V:\chi_{s,t}=1}\frac{\theta_{s,e,t}}{\sigma^{*}_{s,t}} = \sigma_{s,e}\left(\sum_{t\in V\setminus\{v\}:\chi_{s,t}=1}\frac{\theta_{s,e,t}}{\sigma^{*}_{s,t}}+\frac{\theta_{s,e,v}}{\sigma^{*}_{s,v}}\right)\\
 & = & \sigma_{s,e}\left(\sum_{t\in V\setminus\{v\}:\chi_{s,t}=1}\frac{\theta_{s,e,t}}{\sigma^{*}_{s,t}} + \frac{\hat{\theta}_{s,e,v}}{\sigma^{*}_{s,v}}\right)+\left\{\begin{array}{ll}\frac{\sigma^{*}_{s,e}}{\sigma^{*}_{s,v}} & \mbox{if $\sigma^{*}_{s,e}>0$,}\\
0 & \mbox{otherwise,}\end{array}\right.\\
\end{eqnarray*}
where
\[
\hat{\theta}_{s,e,t} = \left\{\begin{array}{ll}
\theta_{s,e,t} & \mbox{if $v\neq t\vee\sigma^{*}_{s,e}=0$,}\\
\theta_{s,e,t}-1 & \mbox{otherwise,}\end{array}\right.
\]
and the second inequality is due to the fact that $\sigma_{s,e}=\sigma^{*}_{s,e}$ whenever $\sigma^{*}_{s,e}>0$. We hence have that

\begin{eqnarray*}
b_{s,e}  = \sigma_{s,e}\sum_{t\in V:\chi_{s,t}=1}\frac{\hat{\theta}_{s,e,t}}{\sigma^{*}_{s,t}}+\left\{\begin{array}{ll}\frac{\sigma^{*}_{s,e}}{\sigma^{*}_{s,v}} & \mbox{if $\sigma^{*}_{s,e}>0$,}\\
0 & \mbox{otherwise.}\end{array}\right.   
\end{eqnarray*}
Note that $\hat{\theta}_{s,e,t}$ counts only non-empty suffixes of $\tcost$-optimal $st$-walks containing $e$ (which exist only if $v=t$ and $\sigma^{*}_{s,e}>0$). In such walks the first temporal edge $f$ of the non-empty suffix is a successor of $e$. This implies 
\[
\hat{\theta}_{s,e,t}=\sum_{f\in\mathtt{succ}_{s,e,t}}\theta_{s,f,t},
\]
which yields:

\begin{eqnarray*}
b_{s,e} & = & \sigma_{s,e}\sum_{t\in V:\chi_{s,t}=1}\frac{\sum_{f\in\mathtt{succ}_{s,e,t}}\theta_{s,f,t}}{\sigma^{*}_{s,t}}+\left\{\begin{array}{ll}\frac{\sigma^{*}_{s,e}}{\sigma^{*}_{s,v}} & \mbox{if $\sigma^{*}_{s,e}>0$,}\\
0 & \mbox{otherwise}\end{array}\right.\\
& = & \sigma_{s,e}\sum_{f\in\mathtt{succ}_{s,e}}\sum_{t\in V:\chi_{s,t}=1}\frac{\theta_{s,f,t}}{\sigma^{*}_{s,t}}+\left\{\begin{array}{ll}\frac{\sigma^{*}_{s,e}}{\sigma^{*}_{s,v}} & \mbox{if $\sigma^{*}_{s,e}>0$,}\\
0 & \mbox{otherwise,}\end{array}\right.\\
& = & \sigma_{s,e}\sum_{f\in\mathtt{succ}_{s,e}}\frac{b_{s,f}}{\sigma_{s,f}}+\left\{\begin{array}{ll}\frac{\sigma^{*}_{s,e}}{\sigma^{*}_{s,v}} & \mbox{if $\sigma^{*}_{s,e}>0$,}\\
0 & \mbox{otherwise,}\end{array}\right.
\end{eqnarray*}
where the second equality is due to the fact the $f\in\mathtt{succ}_{s,e}$ if and only there exists $t\in V\setminus\{s\}$ such that $f\in\mathtt{succ}_{s,e,t}$, and the last equality follows from Equation~\ref{eq:betweenness}. The lemma is thus proved.\qed

\section{An iteration of Algorithm~\ref{alg:sfononrestless}}
\label{sec:nrtsfob}

In Figure~\ref{fig:nrtsfob} (where we use over-lined integers to denote the position of a temporal edge in $E^{\mathrm{arr}}$), we show the iteration of the forward phase of Algorithm~\ref{alg:sfononrestless}, in which the edge $e_{\iarr{k}}=(u,v,\tau,\lambda)$ is analysed. The gray temporal edges in $E^{\mathrm{arr}}$ are the temporal edges already scanned, while the gray temporal edges in $E^{\mathrm{dep}}_\mathrm{node}[u]$ and in $E^{\mathrm{dep}}_\mathrm{node}[v]$ are the temporal edges which have been finalised. The figure shows these two lists at the beginning of the iteration (upper copy) and at the end of the iteration (lower copy): between the two copies, the finalisation of the involved temporal edges is shown.

\begin{figure*}[t]
\begin{adjustbox}{width=\textwidth}
\begin{tikzpicture}
\begin{scope}[yshift=0.5cm]
\draw (4.5,1.5) +(-.5,-.5) node {$E^{\mathrm{arr}}$};
\draw[fill=lightgray] (1,0) +(-1,0) rectangle ++(0,.5);
\draw (1,0.5) +(-.5,-.25) node {$e_{\iarr{1}}$};
\draw[fill=lightgray] (2,0) +(-1,0) rectangle ++(0,.5);
\draw (2,0.5) +(-.5,-.25) node {$e_{\iarr{2}}$};
\draw[fill=lightgray] (3,0) +(-1,0) rectangle ++(0,.5);
\draw (3,0.5) +(-.5,-.25) node {$\cdots$};
\draw[fill=lightgray] (4,0) +(-1,0) rectangle ++(0,.5);
\draw (4,0.5) +(-.5,-.25) node {$e_{\iarr{k-1}}$};
\draw (5,0) +(-1,0) rectangle ++(0,.5);
\draw (5,0.5) +(-.5,-.25) node {$e_{\iarr{k}}$};
\draw (6,0) +(-1,0) rectangle ++(0,.5);
\draw (6,0.5) +(-.5,-.25) node {$e_{\iarr{k+1}}$};
\draw (7,0) +(-1,0) rectangle ++(0,.5);
\draw (7,0.5) +(-.5,-.25) node {$\cdots$};
\draw (8,0) +(-1,0) rectangle ++(0,.5);
\draw (8,0.5) +(-.5,-.25) node {$e_{\iarr{M}}$};
\end{scope}
\begin{scope}[xshift=-4cm,yshift=-1.5cm]
\draw[fill=lightgray] (0,0) rectangle ++(1,.5);
\draw (0.5,0.25) node {$\iarr{j_{1}}$};
\draw[fill=lightgray] (1,0) rectangle ++(1,.5);
\draw (1.5,0.25) node {$\iarr{j_{2}}$};
\draw[fill=lightgray] (2,0) rectangle ++(0.6,.5);
\draw (2.3,0.25) node {$\cdots$};
\draw (2.6,0) rectangle ++(1,.5);
\draw (3.1,0.25) node {$\iarr{j_{l[u]}}$};
\draw (3.6,0) rectangle ++(0.6,.5);
\draw (3.9,0.25) node {$\cdots$};
\draw (4.2,0) rectangle ++(1,.5);
\draw (4.7,0.25) node {$\iarr{j_{i}}$};
\draw (5.2,0) rectangle ++(1,.5);
\draw (5.7,0.25) node {$\iarr{j_{i+1}}$};
\draw (6.2,0) rectangle ++(0.6,.5);
\draw (6.5,0.25) node {$\cdots$};
\draw (6.8,0) rectangle ++(1,.5);
\draw (7.3,0.25) node {$\iarr{j_{d[u]}}$};
\draw (0.65,1.0) node {$E^{\mathrm{dep}}_{\mathrm{node}}[u]$};
\end{scope}
\begin{scope}[xshift=-4cm,yshift=-3cm]
\draw[fill=lightgray] (0,0) rectangle ++(1,.5);
\draw (0.5,0.25) node {$\iarr{j_{1}}$};
\draw[fill=lightgray] (1,0) rectangle ++(1,.5);
\draw (1.5,0.25) node {$\iarr{j_{2}}$};
\draw[fill=lightgray] (2,0) rectangle ++(2.2,.5);
\draw (3.1,0.25) node {$\cdots$};
\draw[fill=lightgray] (4.2,0) rectangle ++(1,.5);
\draw (4.7,0.25) node {$\iarr{j_{i}}$};
\draw (5.2,0) rectangle ++(1,.5);
\draw (5.7,0.25) node {$\iarr{j_{l[u]}}$};
\draw (6.2,0) rectangle ++(0.6,.5);
\draw (6.5,0.25) node {$\cdots$};
\draw (6.8,0) rectangle ++(1,.5);
\draw (7.3,0.25) node {$\iarr{j_{d[u]}}$};
\draw (3.85,1.0) node{$\forall f\in[l[u],i](C[\iarr{j_{f}}]\leftarrow c[u]+1, \Sigma[\iarr{j_{f}}]:=\sigma[u])$};
\end{scope}
\begin{scope}[xshift=+4cm,yshift=-1.5cm]
\draw[fill=lightgray] (0,0) rectangle ++(1,0.5);
\draw (0.5,0.25) node {$\iarr{h_{1}}$};
\draw[fill=lightgray] (1,0) rectangle ++(1,.5);
\draw (1.5,0.25) node {$\iarr{h_{2}}$};
\draw[fill=lightgray] (2,0) rectangle ++(0.6,.5);
\draw (2.3,0.25) node {$\cdots$};
\draw (2.6,0) rectangle ++(1,.5);
\draw (3.1,0.25) node {$\iarr{h_{l[v]}}$};
\draw (3.6,0) rectangle ++(0.6,.5);
\draw (3.9,0.25) node {$\cdots$};
\draw (4.2,0) rectangle ++(1,.5);
\draw (4.7,0.25) node {$\iarr{h_{a-1}}$};
\draw (5.2,0) rectangle ++(1,.5);
\draw (5.7,0.25) node {$\iarr{h_{a}}$};
\draw (6.2,0) rectangle ++(0.6,.5);
\draw (6.5,0.25) node {$\cdots$};
\draw (6.8,0) rectangle ++(1,.5);
\draw (7.3,0.25) node {$\iarr{h_{d[v]}}$};
\draw (7.2,1.0) node {$E^{\mathrm{dep}}_{\mathrm{node}}[v]$};
\end{scope}
\begin{scope}[xshift=+4cm,yshift=-3cm]
\draw (3.85,1.0) node{$\forall f\in[l[v],a-1](C[\iarr{j_{f}}]\leftarrow c[v]+1, \Sigma[\iarr{j_{f}}]\leftarrow\sigma[v])$};
\draw[fill=lightgray] (0,0) rectangle ++(1,0.5);
\draw (0.5,0.25) node {$\iarr{h_{1}}$};
\draw[fill=lightgray] (1,0) rectangle ++(1,.5);
\draw (1.5,0.25) node {$\iarr{h_{2}}$};
\draw[fill=lightgray] (2,0) rectangle ++(2.2,.5);
\draw (3.1,0.25) node {$\cdots$};
\draw[fill=lightgray] (4.2,0) rectangle ++(1,.5);
\draw (4.7,0.25) node {$\iarr{h_{a-1}}$};
\draw (5.2,0) rectangle ++(1,.5);
\draw (5.7,0.25) node {$\iarr{h_{l[v]}}$};
\draw (6.2,0) rectangle ++(0.6,.5);
\draw (6.5,0.25) node {$\cdots$};
\draw (6.8,0) rectangle ++(1,.5);
\draw (7.3,0.25) node {$\iarr{h_{d[v]}}$};
\end{scope}
\draw[->] (4.5,0.5) -- (4.5,-0.5) -- node[above] {$\iarr{k}=\iarr{j_{i}}$} (0.75,-0.5) -- (0.75,-1.00);
\draw[->] (4.5,0.0) -- (4.5,-0.5) -- node[above] {$\mathrm{dep}(e_{\iarr{h_{a}}})\geq\mathrm{arr}(e_{\iarr{k}})$} (9.75,-0.5) -- (9.75,-1.00);
\end{tikzpicture}
\end{adjustbox}
\caption{}
\label{fig:nrtsfob}
\end{figure*}

\section{Proof of Theorem~\ref{thm:sfo_alg1analysis}}
\label{sec:proofs2}

As it can be seen in Algorithm~\ref{alg:sfononrestless}, our algorithm executes a constant number of cycles on all nodes and on all temporal edges. More precisely, both the forward phase and the backward phase scan once each edge in $E^{\mathrm{arr}}$ and each index in $E^{\mathrm{dep}}_\mathrm{node}$. Hence, overall the time complexity of the algorithm is linear in the number of temporal edges. For what concerns the correctness of the algorithm, let us consider the forward and the backward phase separately.

\begin{itemize}
\item During the forward phase, the following invariant is maintained after scanning $k$th edge in $E^{\mathrm{arr}}$: $\forall v\in V$, $c[v]$ is the shortest length of an $sv$-walk in $G_k$, $\sigma[v]$ is the number of shortest $sv$-walks in $G_{k}$ (with length $c[v]$), and all edges in $E^{\mathrm{dep}}_\mathrm{node}[v][l[v]:]$ extend such walks. The proof follows easily by induction on $k$ and the fact that any walk in $G_k$ which is not in $G_{k-1}$ must end with $e_k$. This comes from the fact that any temporal walk is strict since we assume positive travel times, and its temporal edges must appear in order in $E^{\mathrm{arr}}$ which is sorted by arrival times.

\item During the backward phase, the following invariant is maintained after scanning backward the $k$th edge in $E^{\mathrm{arr}}$: $\delta[v]=\sum_{f\in E_{\mathrm{node}}^{\mathrm{dep}}[v][l[v]:r]}\frac{b_{s,f}}{\sigma_{s,f}}$, where $r$ denotes the last index of an edge $f\in E_{\mathrm{node}}^{\mathrm{dep}}[v]$ such that a shortest $sf$-walk in $G$ can be obtained by extending a walk in $G_k$. This follows from Lemma~\ref{lem:backwardlemma} and an easy induction on $k$ as the index $r$ can only decrease as $k$ decreases. Similarly as in the forward phase, we use the fact that $E^{\mathrm{arr}}$ is a topological ordering for the successor relation: if $f$ is successor of $e$, then $f$ appears after $e$ in $E^{\mathrm{arr}}$.
\end{itemize}
The theorem is then proved.\qed

\section{Execution of Algorithm~\ref{alg:sfononrestless}}
\label{sec:exampleexecution}
Let us consider the temporal graph of Figure~\ref{fig:patg2}, where the four lists $E^{\mathrm{arr}}$, $E^{\mathrm{dep}}$, $E^{\mathrm{dep}}_\mathrm{node}$, and $E^{\mathrm{arr}}_{\mathrm{dep}}$ are also depicted. In Table~\ref{tab:alg2exefor2}, the execution of the forward phase (lines~\ref{alg2:forwardstart}-\ref{alg2:forwardend}), with source node $v_{1}$, is represented by showing the values of the main variables of the algorithm at the end of the analysis of each temporal edge in $E^{\mathrm{arr}}$. Let us analyse two specific consecutive iterations of this phase.

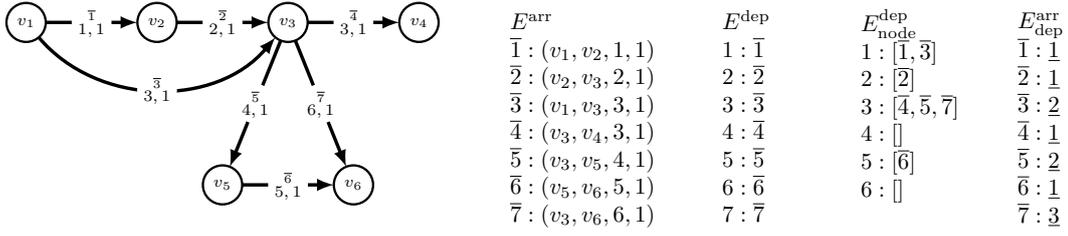
\begin{figure*}[ht]
\centering
\begin{adjustbox}{width=\textwidth}
    \SetVertexStyle[FillColor=white]
\SetEdgeStyle[Color=black]
\begin{tikzpicture}[x=2cm,y=2cm]
\Vertex[x=0,y=0,label=$v_{1}$]{v1}
\Vertex[x=2,y=0,label=$v_{2}$]{v2}
\Vertex[x=4,y=0,label=$v_{3}$]{v3}
\Vertex[x=6,y=0,label=$v_{4}$]{v4}
\Vertex[x=3,y=-2.5,label=$v_{5}$]{v5}
\Vertex[x=5,y=-2.5,label=$v_{6}$]{v6}
\Edge[Direct,label={$\stackrel{\overline{1}}{1,1}$}](v1)(v2)
\Edge[Direct,label={$\stackrel{\overline{2}}{2,1}$}](v2)(v3)
\Edge[Direct,label={$\stackrel{\overline{3}}{3,1}$},bend=-50](v1)(v3)
\Edge[Direct,label={$\stackrel{\overline{4}}{3,1}$}](v3)(v4)
\Edge[Direct,label={$\stackrel{\overline{5}}{4,1}$}](v3)(v5)
\Edge[Direct,label={$\stackrel{\overline{6}}{5,1}$}](v5)(v6)
\Edge[Direct,label={$\stackrel{\overline{7}}{6,1}$}](v3)(v6)
\node at (4.25,-0.75) {
\begin{tabular}{l}
$E^{\mathrm{arr}}$\\
$\overline{1}: (v_{1},v_{2},1,1)$\\
$\overline{2}: (v_{2},v_{3},2,1)$\\
$\overline{3}: (v_{1},v_{3},3,1)$\\
$\overline{4}: (v_{3},v_{4},3,1)$\\
$\overline{5}: (v_{3},v_{5},4,1)$\\
$\overline{6}: (v_{5},v_{6},5,1)$\\
$\overline{7}: (v_{3},v_{6},6,1)$\\
\end{tabular}
};
\node at (5.5,-0.75) {
\begin{tabular}{l}
$E^{\mathrm{dep}}$\\
$1: \overline{1}$\\
$2: \overline{2}$\\
$3: \overline{3}$\\
$4: \overline{4}$\\
$5: \overline{5}$\\
$6: \overline{6}$\\
$7: \overline{7}$\\
\end{tabular}
};
\node at (6.75,-0.65) {
\begin{tabular}{l}
$E_\mathrm{node}^{\mathrm{dep}}$\\
$1: [\overline{1},\overline{3}]$\\
$2: [\overline{2}]$\\
$3: [\overline{4},\overline{5},\overline{7}]$\\
$4: []$\\
$5: [\overline{6}]$\\
$6: []$\\
\end{tabular}
};
\node at (7.75,-0.75) {
\begin{tabular}{l}
$E^{\mathrm{arr}}_{\mathrm{dep}}$\\
$\overline{1}: \underline{1}$\\
$\overline{2}: \underline{1}$\\
$\overline{3}: \underline{2}$\\
$\overline{4}: \underline{1}$\\
$\overline{5}: \underline{2}$\\
$\overline{6}: \underline{1}$\\
$\overline{7}: \underline{3}$\\
\end{tabular}
};
\end{tikzpicture}
\end{adjustbox}
\caption{An example of a temporal graph, where $n=6$, $M=7$, $T=6$.}
\label{fig:patg2}
\end{figure*}

\begin{description}
\item[The temporal edge $e=(v_{1},v_{3},3,1)$ is scanned.] This is the second and last temporal edge in $E^{\mathrm{dep}}_\mathrm{node}[v_{1}]$. Since $l[v_{1}]$ has been set equal to $\idep{2}$ while scanning the first temporal edge, all temporal edges in $E^{\mathrm{dep}}_\mathrm{node}[v_{1}]$ are now finalised. However, $c[v_{1}]=\infty$, so that the effect of the finalisation is just to set $l[v_{1}]=\idep{3}$ (indeed, there are no more temporal edges with the tail equal to $v_{1}$). The tail of $e$ is the source node: hence, $C[e]$ and $\Sigma[e]$ are set equal to $1$ because $\langle e\rangle$ is, indeed, the unique shortest $se$-walk. Since, now, $C[e]=1<2=c[v_{3}]$ and since the first temporal edge in $E^{\mathrm{dep}}_\mathrm{node}[v_{3}]$ that can extend $e$ is the second one (that is, $e_{\iarr{5}}$), the first edge of $E^{\mathrm{dep}}_\mathrm{node}[v_{3}]$ (that is, $e_{\iarr{4}}$) is finalised, by setting $C[e_{\iarr{4}}]=c[v_{3}]+1=3$ and $\Sigma[e_{\iarr{4}}]=\sigma[v_{3}]=1$: as we will see, this will allow the algorithm to assign the correct value to $c[v_{4}]$ in the next iteration. Finally, since $C[e]<c[v_{3}]$, this implies that the algorithm has found a shorter $v_{1}v_{3}$-walk: for this reason, $c[v_{3}]$ is set equal to $C[e]=1$ and $\sigma[v_{3}]$ is set equal to $\Sigma[e]=1$. Finally, the algorithm sets $L[e]$ equal to $\idep{2}$, in order to remember, during the backward phase, that the first temporal edge in $E^{\mathrm{dep}}_\mathrm{node}[v_{3}]$, that can extend $e$ and result in a \sh\ walk, is the second one.

\begin{table*}[ht]
\centering
\begin{adjustbox}{width=\textwidth}
    \begin{tabular}{||c||c|c|c||c|c|c|c||c|c|c|c||}
\hline
$e=(u,v,\tau\lambda)$ & $L[e]$ & $C[e]$ & $\Sigma[e]$ & $E_\mathrm{node}^{\mathrm{dep}}[u]$ & $l[u]$ & $c[u]$ & $\sigma[u]$ & $E_\mathrm{node}^{\mathrm{dep}}[v]$ & $l[v]$ & $c[v]$ & $\sigma[v]$\\
\hline\hline
$\iarr{1}=(v_{1}, v_{2}, 1, 1)$ & $\idep{1}$ & $1$ & $1$ & $[\iarr{1}, \iarr{3}]$ & $\idep{2}$ & $\infty$ & $0$ & $[\iarr{2}]$ & $\idep{1}$ & $1$ & $1$\\
$\iarr{2}=(v_{2}, v_{3}, 2, 1)$ & $\idep{1}$ & $2$ & $1$ & $[\iarr{2}]$ & $\idep{2}$ & $1$ & $1$ & $[\iarr{4}, \iarr{5}, \iarr{7}]$ & $\idep{1}$ & $2$ & $1$\\
$\iarr{3}=(v_{1}, v_{3}, 3, 1)$ & $\idep{2}$ & $1$ & $1$ & $[\iarr{1}, \iarr{3}]$ & $\idep{3}$ & $\infty$ & $0$ & $[\iarr{4}, \iarr{5}, \iarr{7}]$ & $\idep{2}$ & $1$ & $1$\\
$\iarr{4}=(v_{3}, v_{4}, 3, 1)$ & $\idep{1}$ & $3$ & $1$ & $[\iarr{4}, \iarr{5}, \iarr{7}]$ & $\idep{2}$ & $1$ & $1$ & $[]$ & $\idep{1}$ & $3$ & $1$\\
$\iarr{5}=(v_{3}, v_{5}, 4, 1)$ & $\idep{1}$ & $2$ & $1$ & $[\iarr{4}, \iarr{5}, \iarr{7}]$ & $\idep{3}$ & $1$ & $1$ & $[\iarr{6}]$ & $\idep{1}$ & $2$ & $1$\\
$\iarr{6}=(v_{5}, v_{6}, 5, 1)$ & $\idep{1}$ & $3$ & $1$ & $[\iarr{6}]$ & $\idep{2}$ & $2$ & $1$ & $[]$ & $\idep{1}$ & $3$ & $1$\\
$\iarr{7}=(v_{3}, v_{6}, 6, 1)$ & $\idep{1}$ & $2$ & $1$ & $[\iarr{4}, \iarr{5}, \iarr{7}]$ & $\idep{4}$ & $1$ & $1$ & $[]$ & $\idep{1}$ & $2$ & $1$\\
\hline
\end{tabular}
\end{adjustbox}
\caption{The evolution of the main variables of Algorithm~\ref{alg:sfononrestless} during the forward phase.}
\label{tab:alg2exefor2}
\end{table*}

\begin{table*}[ht]
\centering
\begin{tabular}{||c||c|c|c|c|c||c|c|c|c|c||}
\hline
$e=(u,v,\tau\lambda)$ & $L[e]$ & $C[e]$ & $\Sigma[e]$ & $\Sigma^{*}[E]$ & $b[e]$ & $E_\mathrm{node}^{\mathrm{dep}}[v]$ & $l[v]$ & $c[v]$ & $\sigma^{*}[v]$ & $\delta[v]$\\
\hline\hline
$\iarr{7}=(v_{3}, v_{6}, 6, 1)$ & \idep{1} & $2$ & $1$ & $0$ & $0.0$ & $[]$ & \idep{1} & $2$ & $1$ & $0.0$\\
$\iarr{6}=(v_{5}, v_{6}, 5, 1)$ & \idep{1} & $3$ & $1$ & $1$ & $1.0$ & $[]$ & \idep{1} & $2$ & $1$ & $0.0$\\
$\iarr{5}=(v_{3}, v_{5}, 4, 1)$ & \idep{1} & $2$ & $1$ & $1$ & $2.0$ & $[\iarr{6}]$ & \idep{1} & $2$ & $1$ & $1.0$\\
$\iarr{4}=(v_{3}, v_{4}, 3, 1)$ & \idep{1} & $3$ & $1$ & $1$ & $1.0$ & $[]$ & \idep{1} & $3$ & $1$ & $0.0$\\
$\iarr{3}=(v_{1}, v_{3}, 3, 1)$ & \idep{2} & $1$ & $1$ & $0$ & $2.0$ & $[\iarr{4}, \iarr{5}, \iarr{7}]$ & \idep{2} & $1$ & $1$ & $2.0$\\
$\iarr{2}=(v_{2}, v_{3}, 2, 1)$ & \idep{1} & $2$ & $1$ & $1$ & $2.0$ & $[\iarr{4}, \iarr{5}, \iarr{7}]$ & \idep{1} & $1$ & $1$ & $1.0$\\
$\iarr{1}=(v_{1}, v_{2}, 1, 1)$ & \idep{1} & $1$ & $1$ & $1$ & $3.0$ & $[\iarr{2}]$ & \idep{1} & $1$ & $1$ & $2.0$\\
\hline
\end{tabular}
\caption{The evolution of the main variables of Algorithm~\ref{alg:sfononrestless} during the backward phase.}
\label{tab:alg2exebac2}
\end{table*}

\item[The temporal edge $e=(v_{3},v_{4},3,1)$ is scanned.] This is the first temporal edge in $E^{\mathrm{dep}}_\mathrm{node}[v_{3}]$. Since $l[v_{3}]$ has been set equal to $\idep{2}$ in the previous iteration, no finalisation has to be executed (indeed, the cost $c[v_{3}]=1$ refers to the path $\langle e_{\iarr{3}}\rangle$ which is not extendable by $e$). Recall that, in the previous iteration, the algorithm has set $C[e]$ equal to $3$, which is less than $\infty=c[v_{4}]$. Since $E^{\mathrm{dep}}_\mathrm{node}[v_{4}]$ is empty, no finalisation takes place, but, due to the fact that $C[e]<c[v_{4}]$, $c[v_{4}]$ is set equal to $3$ and $\sigma[v_{4}]$ is set equal to $1$. That is, the algorithm has correctly computed the length of the unique \sh\ $v_{1}v_{4}$-walk. Finally, the algorithm sets $L[e]$ equal to $\idep{1}$: since $\left|E^{\mathrm{dep}}_\mathrm{node}[v_{4}]\right|=0$, this implies that no temporal edge can extend $e$. 
\end{description}

Note how, at the end of the forward phase, the value of $l[v]$ is equal to $\idep{\left|E^{\mathrm{dep}}_\mathrm{node}[v]\right|+1}$, for each node $v$ (since all indices of $E^{\mathrm{dep}}_\mathrm{node}[v]$ have been scanned for finalizing the corresponding temporal edges). This guarantees that, in the following backward phase, all temporal edges in $E^{\mathrm{dep}}_\mathrm{node}[v]$ will be considered. Before starting the backward phase, the algorithm computes, for each node $v$, the values $c^{*}[v]$ (lines~\ref{alg2:pre-backward1}-\ref{alg2:pre-backward2}): to this aim, it just look for the temporal edge with head $v$ such that the pair $(\arr(e),C[e])$ is `lexicographically` minimum (that is, minimum with respect to the $\vartriangleleft$ relation). In lines~\ref{alg2:pre-backward3}-\ref{alg2:pre-backward4}, the algorithm also computes, for each temporal edge $e=(u,v,\tau,\lambda)$, the number $\Sigma^{*}[e]$ of \sfo\ $sv$-walks ending with $e$ and, for each node $v$, the number $\sigma^{*}[v]$ of \sfo\ $sv$-walks (which is the sum of the values $\Sigma^{*}[e]$ for all temporal edges $e$ with head $v$).

In Table~\ref{tab:alg2exebac2}, the execution of the backward phase (lines~\ref{alg2:backwardstart}-\ref{alg2:backwardend}), with source node $v_{1}$, is represented by showing the values of the main variables of the algorithm at the end of the analysis of each temporal edge in $E^{\mathrm{arr}}$ (in reverse order). Let us analyse one specific iteration of this phase, that is, the one corresponding to the temporal edge $e=(v_{2}, v_{3}, 2, 1)$, which is the terminal temporal edge of a \sfo\ $v_{1}v_{3}$-walk and an intermediate temporal edge of the \sfo\ $v_{1}v_{4}$-walk. Note that, at the end of the previous iteration, node $v_{3}$ has accumulated in $\delta[v_{3}]=2$ the two \sfo\ walks from $v_{1}$ to $v_{5}$ and from $v_{1}$ to $v_{6}$. However, these walks do not use the temporal edge $e$: hence, $\delta[v_{3}]$ should not be ``transmitted'' to $e$. That is why, at line~\ref{alg2:zerodelta}, the algorithm checks whether $c^*[v_3]\vartriangleleft (\arr(e),C[e])$. Since this is the case, the value of $\delta[v_{3}]$ is zeroed and the current value of $c^*[v_3]$ is set equal to $(\arr(e),C[e])$. The algorithm then continues, as we already said, by simply applying Lemma~\ref{lem:backwardlemma}. Note that, at each iteration, the algorithm ``moves'' left the right extreme of the sub-list of $E^{\mathrm{dep}}_\mathrm{node}[v]$ to be considered by the following iterations, by assigning to $l[v]$ the value $L[e]$ of the currently scanned temporal edge $e$.

\section{Computing the \sh\ betweenness}
\label{sec:shortestnonrestless}

It is easy to see that Fact~\ref{fact:sfo_toptimalimpliescoptimal} can adapted to the case in which we consider \sh\ walks as follows.

\begin{fact}\label{fact:sh_toptimalimpliescoptimal}
Given a temporal graph ${G}=(V,{E},\beta)$, let $W$ be a \sh\ $st$-walk (for some $s,t\in V$) and let $e$ be the last temporal edge of $W$. Then, $W$ is a \sh\ walk among the $se$-walks.
\end{fact}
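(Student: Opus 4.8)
The plan is to observe that this statement is the shortest-walk counterpart of Fact~\ref{fact:sfo_toptimalimpliescoptimal}, and that it follows directly from unpacking the definition of a $se$-walk together with the minimality of $W$. First I would fix notation: let $W$ be a \sh\ $st$-walk and write its last temporal edge as $e=(u,v,\tau,\lambda)$. Since $W$ ends at $t$ and its last temporal edge is $e$, the head of $e$ must be $t$; that is, $v=t$.

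The key observation is that the set of $se$-walks coincides with the set of $st$-walks whose last temporal edge is $e$. Indeed, by definition a $se$-walk is a walk starting at $s$ and ending with the temporal edge $e$; since the head of $e$ is $t=v$, every such walk arrives at $t$ and is therefore an $st$-walk (and, conversely, any $st$-walk ending with $e$ is an $se$-walk). In particular $W$ itself, being an $st$-walk that ends with $e$, is an $se$-walk, so the family of $se$-walks is nonempty and contains $W$.

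I would then argue by contradiction (equivalently, directly). Suppose some $se$-walk $X$ satisfied $|X|<|W|$. By the observation above, $X$ is an $st$-walk, so its existence would contradict the assumption that $W$ is a \sh\ (that is, minimum-length) $st$-walk. Hence no $se$-walk is strictly shorter than $W$, which means $W$ is a \sh\ walk among the $se$-walks, as claimed.

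The only point requiring care is purely definitional: the identification $v=t$ and the resulting containment of the family of $se$-walks inside the family of $st$-walks. Unlike the \sfo\ case of Fact~\ref{fact:sfo_toptimalimpliescoptimal}, no arrival-time comparison or tie-breaking between equally long walks is needed here, which is precisely why the argument amounts to a one-step restriction of the optimality of $W$ to the smaller family of walks ending with $e$; there is no genuine obstacle beyond this bookkeeping.
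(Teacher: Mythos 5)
Your proof is correct and matches the paper's treatment: the paper dismisses this fact as following immediately from the definitions (it is the \sh-adaptation of Fact~\ref{fact:sfo_toptimalimpliescoptimal}, and the general version, Fact~\ref{fact:toptimalimpliescoptimal}, is proved by exactly your contradiction argument — an $se$-walk of smaller cost would itself be an $st$-walk, contradicting optimality of $W$). Your only addition is to spell out the definitional bookkeeping (head of $e$ equals $t$, so $se$-walks are precisely the $st$-walks ending with $e$), which is fine and changes nothing in substance.
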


Moreover, all the definitions concerning the \sfo\ betweenness can be appropriately adapted to the \sh\ case and all the results proved for the \sfo\ betweenness can be proved also for the \sh\ betweenness. The forward phase of the algorithm for computing the non-restless \sh\ betweenness can then be the same as in the case of the \sfo\ betweenness, as shown in lines~\ref{alg1:forwardstart}-\ref{alg1:forwardend} of Algorithm~\ref{alg:shnonrestless} (note that also the finalisation function is exactly the same). The intermediate phase of the algorithm can be even simplified since the values $c^{*}[v]$ are now integers values representing the length of \sh\ $sv$-walks (we are not interested anymore in the arrival time of the walks).  Finally, the backward phase is exactly the same as in the case of the \sfo\ betweenness apart from the fact that the $\delta$-value of a node is zeroed only if the currently scanned edge with head $v$ is not a terminal edge of a \sh\ $sv$-walk. In particular, at line~\ref{alg1:zerodelta}, the algorithm checks whether, for the currently scanned edge $e$ with head $v$, $C[e]>c[v]$: if this the case, then $e$ is not a terminal edge of a \sh\ $sv$-walk, but, since $L[e]>0$, $e$ is an intermediate edge of a \sh\ $su$-walk, for some other node $u$.

Similarly to what we have done in Appendix~\ref{sec:proofs2}, we can then prove the following result.

\begin{theorem}\label{thm:sh_alg1analysis}
For any temporal graph $G=(V,E,\infty)$ and for any $s\in V$, Algorithm~\ref{alg:shnonrestless} correctly computes the $s$-\sh\ betweenness $b_{s,e}$ in time $O(M)$.
\end{theorem}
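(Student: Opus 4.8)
The plan is to reuse, almost verbatim, the argument of Appendix~\ref{sec:proofs2} for Theorem~\ref{thm:sfo_alg1analysis}, replacing the \sfo\ ingredients by their \sh\ counterparts — in particular Fact~\ref{fact:sh_toptimalimpliescoptimal} in place of Fact~\ref{fact:sfo_toptimalimpliescoptimal}, and the \sh\ versions of Fact~\ref{fact:sfo_prefixtimessuffix} and Lemma~\ref{lem:backwardlemma} wherever the \sfo\ statements were used (these \sh\ analogues are available since, as remarked before Fact~\ref{fact:sh_toptimalimpliescoptimal}, every result proved for the \sfo\ betweenness can be reproved for the \sh\ betweenness). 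First I would dispose of the running time exactly as for Theorem~\ref{thm:sfo_alg1analysis}: Algorithm~\ref{alg:shnonrestless} makes a bounded number of passes over $E^{\mathrm{arr}}$ (the forward, the simplified intermediate, and the backward pass), and over the whole execution each index of each list $E^{\mathrm{dep}}_{\mathrm{node}}[v]$ is handled a constant number of times, because the pointer $l[v]$ only advances during finalisation in the forward phase and only recedes in the backward phase. Hence the total cost is $O(M)$.

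For correctness I would treat the two phases separately, using throughout that walks are strict (travel times are positive), so that $E^{\mathrm{arr}}$ is at once a valid scanning order for the forward phase and a topological order of the successor relation: any walk of $G_k$ that is not already in $G_{k-1}$ must end with $e_k$. Since the forward phase and the finalisation routine of Algorithm~\ref{alg:shnonrestless} coincide with those of Algorithm~\ref{alg:sfononrestless}, the forward invariant carries over unchanged and is proved by the same induction on $k$: after the $k$th edge of $E^{\mathrm{arr}}$ has been scanned, for every $v$ the value $c[v]$ is the length of a \sh\ $sv$-walk in $G_k$, $\sigma[v]$ counts these walks, and $E^{\mathrm{dep}}_{\mathrm{node}}[v][l[v]:]$ lists exactly the edges that extend them. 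For the simplified intermediate phase I would only observe that, since length is now the sole cost, $c^*[v]$ is the minimum of $C[e]$ over edges $e$ with head $v$ and $\sigma^*[v]$ is the sum of $\Sigma[e]$ over the edges attaining this minimum; the correctness of $\Sigma^*[e]$ and $\sigma^*[v]$ then follows from the forward invariant together with Fact~\ref{fact:sh_toptimalimpliescoptimal}.

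The only step that is not a literal copy — and the one I expect to be the main obstacle — is the backward phase, where the reset of $\delta[v]$ is governed by the length-only test $C[e]>c[v]$ of line~\ref{alg1:zerodelta} rather than by the lexicographic test of Algorithm~\ref{alg:sfononrestless}. Invoking the \sh\ form of Lemma~\ref{lem:backwardlemma}, whose base term is present exactly when $e$ terminates a \sh\ $sv$-walk, I would maintain the same invariant as in the \sfo\ proof, namely that after the reverse scan reaches the $k$th edge $\delta[v]=\sum_{f\in E^{\mathrm{dep}}_{\mathrm{node}}[v][l[v]:r]} b_{s,f}/\sigma_{s,f}$, where $r$ is the last index of an edge $f$ whose \sh\ $sf$-walk in $G$ is obtainable by extending a walk of $G_k$; as $r$ can only decrease as $k$ decreases, each index of $E^{\mathrm{dep}}_{\mathrm{node}}$ is visited once, which also preserves the $O(M)$ bound. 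The delicate part is to prove that the length test drives this invariant correctly: using Fact~\ref{fact:sh_toptimalimpliescoptimal} and the \sh\ analogue of the prefix-optimality Fact~\ref{fact:sfo_prefixoptimality}, an edge $e$ with head $v$ and $L[e]>0$ satisfies $C[e]=c[v]$ precisely when it terminates a \sh\ $sv$-walk — in which case the accumulated $\delta[v]$ together with the base term $\Sigma^*[e]/\sigma^*[v]$ equals $b_{s,e}$ — whereas $C[e]>c[v]$ forces $e$ to be a mere intermediate edge of a \sh\ $su$-walk for some $u\neq v$, so that the suffixes already accumulated into $\delta[v]$ must not be charged to $e$ and the reset is exactly what the invariant demands. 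Matching the reset branch and the base-term branch to the two cases of the \sh\ Lemma~\ref{lem:backwardlemma} and closing the downward induction on $k$ then gives $b[e]=b_{s,e}$ for every edge, and the theorem follows.
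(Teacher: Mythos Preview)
Your proposal is correct and matches the paper's approach exactly—the paper's own proof is literally the one-line remark ``Similarly to what we have done in Appendix~\ref{sec:proofs2}, we can then prove the following result,'' and you spell out precisely that transfer using the \sh\ analogues of Facts~\ref{fact:sfo_prefixoptimality}, \ref{fact:sfo_toptimalimpliescoptimal}, \ref{fact:sfo_prefixtimessuffix} and Lemma~\ref{lem:backwardlemma}. One small wording caveat: in the backward phase $c[v]$ is a mutable variable (it is overwritten to $C[e]$ at every reset), so the biconditional ``$C[e]=c[v]$ iff $e$ terminates a \sh\ $sv$-walk'' holds only before the first reset; what actually justifies line~\ref{alg1:zerodelta} is that the values $C[e]$ over edges with head $v$ and $L[e]>0$ are non-decreasing in the reverse scan, so resetting exactly when $C[e]$ strictly exceeds the previous value guarantees that $\delta[v]$ accumulates only the $f$'s with $C[f]=C[e]+1$, which (up to terms with $b_{s,f}=0$) is exactly $\mathtt{succ}_{s,e}$.
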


\begin{algorithm}[ht]
\small
\Input{$G=(V,E,\infty)$ (represented by $E^{\mathrm{dep}}$ and $E^{\mathrm{arr}}$) and $s\in V$}
\Output{$s$-\sh\ betweenness $b_{s,e}$, for all $e\in E$}

Compute the lists $E^{\mathrm{dep}}_{\mathrm{node}}$ and $E^{\mathrm{arr}}_{\mathrm{dep}};$\\
\lForEach{$v\in V$}{$l[v]:=1;c[v]:=\infty;\sigma[v]:=0;$} 
\lForEach{$e\in E$}{$L[e]:=0;C[e]:=\infty;\Sigma[e]:=0;$}
\ForEach{$e = (u,v,\tau,\lambda)\in E^{\mathrm{arr}}$}{\label{alg1:forwardstart}
    \lIf{$E^{\mathrm{arr}}_{\mathrm{dep}}[e]\geq l[u]$}{$\processcosts(u,E^{\mathrm{arr}}_{\mathrm{dep}}[e]);$}
    \lIf{$u=s$}{$C[e]:=1; \Sigma[e]:=1;$}
    \If{$C[e]\neq\infty$}{
        \If{$C[e]\leq c[v]$}{
            $a=l[v]; D := E^{\mathrm{dep}}_{\mathrm{node}}[v];$\\
            \lWhile{$a\leq|D|\wedge\dep(E^{\mathrm{arr}}[D[a]]) < \tau+\lambda$}{$a:=a+1;$}
            $\processcosts(v,a-1);$\\
            \lIf{$C[e]<c[v]$}{$c[v]:=C[e]; \sigma[v]:=0;$}
            $\sigma[v] := \sigma[v] + \Sigma[e]$; $L[e]:=a;$\\
        }
    }
\label{alg1:forwardend}}
\lForEach{$v\in V$}{$c^{*}[v]:=c[v];\sigma^{*}[v]:=\sigma[v];\delta[v]:=0;$} 
\lForEach{$e\in E$}{$\Sigma^{*}[e]:=0;b[e]:=0;$}
\ForEach{$e = (u,v,\tau,\lambda)\in E^{\mathrm{arr}}$} {\lIf{$C[e]=c^{*}[v]<\infty$}{$\Sigma^{*}[e]:=\Sigma[e];$}}\label{alg1:pre-backward2}
\ForEach{$e = (u,v,\tau,\lambda)\in \mathrm{reverse}(E^{\mathrm{arr}}):L[e]>0$}{\label{alg1:backwardstart}
    \lIf{$c[v]<C[e]$}{$\delta[v]:=0; c[v]:=C[e];$}\label{alg1:zerodelta}
    \lFor{$f\in E_{\mathrm{node}}^{\mathrm{dep}}[v][L[e]:l[v]-1]$}{$\delta[v] := \delta[v] + b[f]/\Sigma[f];$}
    $l[v]:=L[e]; b[e]:=\Sigma[e]\delta[v];$\\
    \lIf{$\Sigma^{*}[e]>0$}{$b[e]:=b[e]+\Sigma^{*}[e]/\sigma^{*}[v];$}
\label{alg1:backwardend}}
\Return $b$ 
\BlankLine
\SetArgSty{textbf}
\Finalize{$u,j$}:\\
\Indp
    \If{$c[u]\neq\infty$}{
        \lForEach{$f\in E^{\mathrm{dep}}_{\mathrm{node}}[l[u],j]$}{$C[f]:=c[u]+1; \Sigma[f]:=\sigma[u];$}}
    $l[u]:=j+1;$
\caption{Compute non-restless \sh\ $b_{s,e}$ for all $e\in E$}
\label{alg:shnonrestless}
\end{algorithm}

\begin{figure*}[ht]
\centering
\begin{tikzpicture}[x=0.75pt,y=0.75pt,yscale=-1,xscale=1,every node/.append style={text height=2ex,text depth=0.5ex}]
\draw   (72.03,29.5) -- (290.3,29.5) -- (290.3,49.7) -- (72.03,49.7) -- cycle ;
\draw    (109.97,29.2) -- (109.87,49.6) ;
\draw    (210.63,29.2) -- (210.53,49.6) ;
\draw  [fill={lightgray}] (72.03,29.5) -- (109.87,29.5) -- (109.87,49.6) -- (72.03,49.6) -- cycle ;
\draw  [color={red},draw opacity=1 ] (110,26) -- (210.58,26) -- (210.58,54) -- (110,54) -- cycle ;
\draw[dashed]    (90.77,29.2) -- (90.67,49.6) ;
\draw[dashed]    (130.37,29.2) -- (130.27,49.6) ;
\draw[dashed]    (189.97,29.6) -- (189.87,50) ;

\draw (91,5) node {$E_\mathrm{node}^{\mathrm{dep}}[v]$};
\draw (160,15) node {\textcolor[rgb]{0.82,0.01,0.11}{$Q_{1}.I$}};
\draw (98,63) node {$p$};
\draw (120,63) node {$l_{1}$};
\draw (200.33,63) node {$r_{1}$};

\begin{scope}[xshift=200]
\draw   (72.03,29.5) -- (290.3,29.5) -- (290.3,49.7) -- (72.03,49.7) -- cycle ;
\draw    (210.63,29.2) -- (210.53,49.6) ;
\draw  [color={red}] (150.6,26) -- (210.58,26) -- (210.58,54) -- (150.6,54) -- cycle ;
\draw  [fill={lightgray}] (72.03,29.6) -- (150.6,29.6) -- (150.6,49.7) -- (72.03,49.7) -- cycle ;
\draw  [color={blue}] (210.58,26) -- (270.27,26) -- (270.27,54) -- (210.58,54) -- cycle ;
\draw  [color={green}] (111.32,26) -- (150.6,26) -- (150.6,54) -- (111.32,54) -- cycle ;\draw[dashed]    (130.1,29.2) -- (130.06,37.07) -- (130,49.6) ;
\draw[dashed]    (170.37,29.53) -- (170.31,40.41) -- (170.27,49.93) ;
\draw[dashed]    (189.97,29.6) -- (189.87,50) ;
\draw[dashed]    (229.97,28.93) -- (229.87,49.33) ;
\draw[dashed]    (250.3,29.6) -- (250.2,50) ;

\draw (91,5) node {$E_\mathrm{node}^{\mathrm{dep}}[v]$};
\draw (180.67,15) node {\textcolor{red}{$Q_{1}.I$}};
\draw (140,63) node {$p$};
\draw (165,63) node {$l_{1}'$};
\draw (200,63) node {$r_{1}$};
\draw (240.33,15) node {\textcolor{blue}{$Q_{2}.I$}};
\draw (220,63) node {$l_{2}$};
\draw (260,63) node {$r_{2}$};
\draw (130,15) node {\textcolor{green}{$F$}};
\end{scope}
\end{tikzpicture}
\caption{Status of the list $E_v^{dep}$ before and after the update in the restless algorithm.}
\label{fig:edepvstatus}
\end{figure*}
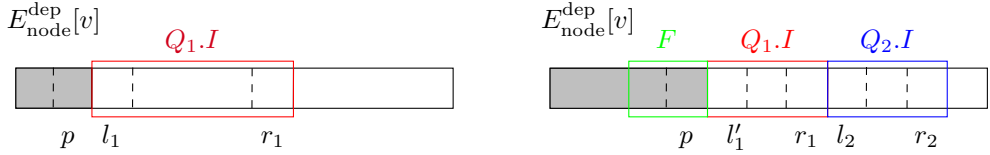

\section{The restless algorithm for \sfo}
\label{sec:sforestless}

To better understand the data structure introduced in Section~\ref{sec:general} and why we need a list of interval quintuples, let us suppose that the list $E^{\mathrm{dep}}_\mathrm{node}[v]$ has been processed in previous iterations up to its index $p$, and that $\intervs{v}$ contains only one interval quintuple $Q_{1}=(l_{1},r_{1},c_{1},\{e_{1}',e_{1}''\},\eta_{1})$ (see the left part of Figure~\ref{fig:edepvstatus}). This means that, for each edge $f\in Q_{1}.I=E^{\mathrm{dep}}_\mathrm{node}[v][l_{1},r_{1}]$, the length of the \sh\ $sv$-walk found so far, which is extended by $f$, is $c_{1}$, that there are $\eta_{1}$ $sv$-walks with length $c_{1}$ that are extended by $f$, and that the edges preceding $f$ in these walks are $e_{1}'$ or $e_{1}''$. Let us suppose that the next temporal edge to be scanned is $e=(u,v,\tau,\lambda)$ and that the minimum length of any $se$-walk is $c$. The edges in $Q_{1}.I$ with departure time earlier than $\arr(e)$ do not extend $e$ nor any other temporal edge that will scanned in the next iterations. Thus they can be finalized, if they have not been already (we will describe the finalisation process below). These edges correspond to those in the interval $F$ in the right part of of Figure~\ref{fig:edepvstatus}, and have index in $E^{\mathrm{dep}}_\mathrm{node}[v]$ from $l_{1}$ to $l_{1}'-1$, where  $l_{1}'$ is the index of the first temporal edge in $Q_{1}.I$ whose departure time is greater than or equal to $\arr(e)$. This means that $Q_{1}$ is now updated to $Q_{1}=(l_{1}',r_{1},c_{1},\{e_{1}',e_{1}''\},\eta_{1})$, and $p$ is updated to $l_{1}'-1$. Now we also have to consider the temporal edges with index greater than $r_{1}$. These edges were not reachable before this iteration. However, those of them that extends $e$ are now reachable and do extend an $se$-walk with length $c$ (these edges are those that have index between $l_{2}$ and $r_{2}$ in the figure). If $c$ is equal to $c_{1}$, then the minimum length of a walk from $s$ that reaches all the temporal edges with index between $l_{1}'$ and $r_{2}$ would still be $c_{1}$ but $e$ has to be added to set of predecessors and the number $\Sigma[e]$ of $se$-walk with length $c$ has to be added to $\eta_{1}$: that is, $Q_{1}$ is updated to $Q_{1}=(l_{1}',r_{2},c_{1},\{e_{1}',e_{1}'',e\},\eta_{1}+\Sigma[e])$. If, instead, $c$ is less than $c_{1}$, then the minimum length of a walk from $s$ that reaches all the temporal edges with index between $l_{1}'$ and $r_{2}$ would be $c$ and these temporal edges can be grouped into a single interval: that is, $Q_{1}$ is updated to $Q_{1}=(l_{1}',r_{2},c,\{e\},\Sigma[e])$. Finally, if $c$ is greater than $c_{1}$ (which is the case represented in the figure), we need to add an interval quintuple $Q_{2}=(l_{2},r_{2},c,\{e\},\Sigma[e])$, where $l_{2}=r_{1}+1$. We also store in the variables $L[e]$ and $R[e]$ the left and right extreme of the last interval quintuple in $\intervs{v}$ (either $Q_1$ in the two first cases, or $Q_2$ in the third case): these values correspond to the interval of edges having $e$ as predecessor and will be used in the backward phase. When $\intervs{v}$ contains more quintuples, we can proceed similarly and need to inspect only quintuples at the beginning or at the end of $\intervs{v}$.

The finalisation of an edge is also more complicated than in the case of the non-restless \sh\ and \sfo\  case. Suppose that we want to finalise all edges in $E^{\mathrm{dep}}_\mathrm{node}[v]$ from the position $l[v]$ to the position $j$. Since these positions are now partitioned into intervals corresponding to the interval quintuples in $\intervs{v}$, we scan these interval quintuples, starting from the first one, until we find an interval quintuple $Q$ such that $Q.l>j$. For each scanned interval quintuple $(l,r,c,P,\eta)$ and for each edge $e\in P$ such that $R[e]\leq j$, we can finalize all edges $f$ in $E^{\mathrm{dep}}_\mathrm{node}[v][l:R[e]]$, that is, we can set the minimum length $C[f]$ of a $sf$-walk equal to the minimum length $c$ of a \sh\ walk which can be extended by $f$ plus one, and the number of these $sf$-walks equal to the number of \sh\ walks that can be extended by $f$ (that is, we set $\Sigma[f]=\eta$). Note that, once all edges $f$ have been finalised, the number of \sh\ walks that can now be extended has to be reduced by subtracting to it the number of \sh\ $se$-walks (that is, we set $\eta=\eta-\Sigma[e]$), since all these latter walks have now been indeed extended. If a scanned quintuple has been emptied (that is, if $j\geq r$), then we can remove it from $\intervs{v}$ and continue with the next one. Otherwise,  we can update it by setting its left extreme equal to $j+1$. Finally, at the end of the scanning of $\intervs{v}$, we can also set $l[v]$ equal to $j+1$ (as we already did in the non-restless cases).
 
Once computed (similarly to the non-restless case), for any node $v$, the earliest arrival time among all $sv$-walks together with the minimum length of such \fo\ walks, the number of such \sfo\ $sv$-walks, and, for any temporal edge $e$, the number of \sfo\ $se$-walks, the backward phase of the general algorithm applies, once again, Lemma~\ref{lem:backwardlemma}. As in the case of the non-restless case, the contribution of the successors of a currently scanned temporal edge $e$ is accumulated in the head of $e$ (more precisely, in the variable $\delta[v]$). In order to avoid to count more than once each contribution, the contributions of the temporal edges in $E^{\mathrm{dep}}_\mathrm{node}[v]$ on the right of position $R[e]$ is subtracted to $\delta[v]$. Subsequently, the contributions of the temporal edges in $E^{\mathrm{dep}}_\mathrm{node}[v]$ from position $L[e]$ and position $R[e]$ which are not already included are added to $\delta[v]$: the new value of $\delta[v]$ can then be used to compute the temporal betweenness of $e$, according to Lemma~\ref{lem:backwardlemma}.

The pseudo-code of the restless algorithm for the \sfo\ betweenness is shown in Algorithm~\ref{alg:sfo_restless}. It has to be observed that the actual code includes, in the backward phase, a control structure which allows us to deal with numerical approximation problems. This structure is not needed if big integer and big rational data structures are used, which, on the other hand, may slower the execution of the algorithm by a factor between two and four.

\begin{algorithm*}[t]
\small
\Input{$G=(V,E,\beta)$ (represented by $E^{\mathrm{dep}}$ and $E^{\mathrm{arr}}$) and $s\in V$}
\Output{$s$-\sfo\ betweenness $b_{s,e}$, for all $e\in E$}

Compute the lists $E^{\mathrm{dep}}_{\mathrm{node}}$ and $E^{\mathrm{arr}}_{\mathrm{dep}};$\\
\lForEach{$v\in V$}{$l[v]:=1;r[v]:=0;\intervs{}[v]:=\emptyset;$} 
\lForEach{$e\in E$}{$L[e]:=\left|E^{arr}\right|+1;R[e]:=0;C[e]:=\infty;\Sigma[e]:=0;$}
\ForEach{$e = (u,v,\tau,\lambda)\in E^{\mathrm{arr}}$}{\label{alg3:forwardstart}
    \lIf{$E^{\mathrm{arr}}_{\mathrm{dep}}[e]\geq l[u]$}{$\processcosts(u,E^{\mathrm{arr}}_{\mathrm{dep}}[e]);$}
    \lIf{$u=s$}{$C[e]:=1; \Sigma[e]:=1;$}
    \If{$C[e]\neq\infty$}{
            $a:=l[v];$ \lWhile{$a\leq|E^{\mathrm{dep}}_{\mathrm{node}}[v]|\wedge\dep(E^{\mathrm{arr}}[E^{\mathrm{dep}}_{\mathrm{node}}[v][a]]) < \tau+\lambda$}{$a:=a+1;$}
            $b:=r[v];$ \lWhile{$b<|E^{\mathrm{dep}}_{\mathrm{node}}[v]|\wedge\dep(E^{\mathrm{arr}}[E^{\mathrm{dep}}_{\mathrm{node}}[v][b+1]]) \leq \tau+\lambda+\beta$}{$b:=b+1;$}
            $\processcosts(v,a-1); l_{c} := \max(a, r[v]+1);$\\
            \lWhile{$|\intervs{}[v]|>0\wedge C[e] \prec \mathrm{last}(\intervs{}[v]).c$}{$Q := \mathrm{poplast}(\intervs{}[v])$; $l_{c} := Q.l$;  \lForEach*{$f\in Q.P$}{$R[f] := a-1;$}}
            \lIf{$|\intervs{}[v]|>0\wedge \mathrm{last}(\intervs{}[v]).c = C[e]$}{$Q := \mathrm{last
            }(\intervs{}[v])$; $Q.r := b$; $Q.\eta := Q.\eta+\Sigma[e]$; $Q.P := Q.P\cup\{e\}$; $L[e] := Q.l;$}
            \lElse{\lIf*{$l_{c}\leq b$}{$L[e] := l_{c}$; $\mathrm{pushlast
            }(\intervs{}[v], (l_{c}, b, C[e], \{e\},\Sigma[e]))$;}}\label{alg3:forwardend}
            $R[e]:=b$; $r[v]:=b$;
    }
}
\lForEach{$v\in V$}{$c^{*}[v]=[\infty,\infty];\sigma^{*}[v]=0;\delta[v]:=0;$}
$c^{*}[s]:=[0,0];\sigma^{*}[s]:=1;$ \lForEach{$e\in E$}{$\Sigma^{*}[e]:=0;b[e]:=0;$}
\lForEach{$e = (u,v,\tau,\lambda)\in E^{\mathrm{arr}}$\label{alg3:pre-backward1}}{\lIf*{$(C[e]<\infty\wedge(\arr(e),C[e])\vartriangleleft c^{*}[v])$}{$c^{*}[v]:=(\arr(e),C[e]);$}}
\lForEach{$e = (u,v,\tau,\lambda)\in E^{\mathrm{arr}}$\label{alg3:pre-backward2}}{\lIf*{$(\arr(e),C[e])=c^{*}[v]$}{$\Sigma^{*}[e]:=\Sigma[e]$; $\sigma^{*}[v]:=\sigma^{*}[v]+\Sigma[e];$}}
\ForEach{$e = (u,v,\tau,\lambda)\in \mathrm{reverse}(E^{\mathrm{arr}})$}{\label{alg3:backwardstart}
    \If{$0<L[e]\leq R[e]$}{
        \lForEach{$f\in E_{\mathrm{node}}^{\mathrm{dep}}[v][\max(l[v],R[e]+1):r[v]]$}{\lIf*{$v\neq s\vee C[f] = C[e]+1$}{$\delta[v] := \delta[v] - b[f]/\Sigma[f];$}}
        $r[v]:=R[e];$ \lForEach{$f\in E_{\mathrm{node}}^{\mathrm{dep}}[v][L[e]:\min(R[e],l[v]-1)]$}{\lIf*{$v\neq s\vee C[f]=C[e]+1$}{$\delta[v] := \delta[v] + b[f]/\Sigma[f];$}}
        $l[v]:=L[e];b[e]:=\Sigma[e]\delta[v];$ 
    }
    \lIf*{$\Sigma^{*}[e]>0$}{$b[e]:=b[e]+\Sigma^{*}[e]/\sigma^{*}[v];$}
\label{alg3:backwardend}}
\Return $b$ 
\BlankLine
\SetArgSty{textbf}
\Finalize{$v,j$}:\\
\Indp
    \While{$|\intervs{}[v]|>0\wedge\mathrm{first}(\intervs{}[v]).l\leq j$}{
        $Q := \mathrm{first}(\intervs{}[v]); ql:=Q.l; q\eta:=Q.\eta;$\\
        \While{$|Q.P|>0\wedge R[\mathrm{first}(Q.P)]\leq j$}{
            $e:=\mathrm{popfirst}(Q.P)$; \lForEach{$f\in E_{\mathrm{node}}^{\mathrm{dep}}[v][ql:R[e]]$}{$C[f]:=Q.c+1;\Sigma[f]:=q\eta;$}
            $q\eta:=q\eta-\Sigma[e];ql:=R[e]+1;$ \lForEach{$f\in E_{\mathrm{node}}^{\mathrm{dep}}[v][ql:\min(j,Q.r)]$}{$C[f]:=Q.c+1;\Sigma[f]:=Q.\eta;$}
        }
        \leIf{$j\geq Q.r$}{$\mathrm{popfirst}(\intervs{}[v]);$}{$Q.l:=j+1;Q.\eta=q\eta;\mathbf{break};$}
    }
    $l[v]:=j+1;$
\caption{compute \sfo\ $b_{s,e}$, for all $e\in E$}
\label{alg:sfo_restless}
\end{algorithm*}

\section{The general algorithm}
\label{sec:generalalg}

In this section we show how Algorithm~\ref{alg:sfo_restless} can be generalised in order to deal with \fa, \fo, \sh\ and \sfa\ walks. To this aim, we first introduce the notion of cost and target cost structures and we extend Facts~\ref{fact:sfo_prefixoptimality} and~\ref{fact:sfo_toptimalimpliescoptimal} to these structures.

\subsection{Walk cost and target cost structures}
\label{sec:walkstructures}

In order to generalise Algorithm~\ref{alg:sfo_restless} to other centrality measures, we integrate a temporal graph $G=(V,E,\beta)$ with an algebraic \textit{cost structure} $(\mathcal{C},\gamma,\oplus,\preceq)$, where $\mathcal{C}$ is the set of possible \textit{cost values}, $\gamma$ is a \textit{cost function} $\gamma:E\rightarrow\mathcal{C}$, $\oplus$ is a \textit{cost combination function} $\oplus:\mathcal{C}\times\mathcal{C}\rightarrow\mathcal{C}$, and $\preceq$ is a \textit{cost total order} with $\preceq\ \subseteq\ \mathcal{C}\times\mathcal{C}$. For any two elements $c_{1}$ and $c_{2}$ of $\mathcal{C}$, we say that $c_{1}=c_{2}$ if $c_{1}\preceq c_{2}$ and $c_{2}\preceq c_{1}$ both hold. We also define the relation $\prec$ between the elements of $\mathcal{C}$ as $c_{1}\prec c_{2}$ if and only if $c_{1}\preceq c_{2}$ and $c_{1}\neq c_{2}$. For any walk $W=\langle e_1,\ldots,e_k\rangle$, the \textit{cost function} of $W$ is recursively defined as follows: $\gamma(W) = \gamma(\langle e_1,\ldots,e_{k-1}\rangle)\oplus\gamma(e_k)$, with $\gamma(\langle e_1\rangle)=\gamma(e_1)$ (in other words, the costs combine along the walk according to the cost combination function). The cost structure is supposed to satisfy the following \textit{strict right-isotonicity property}~\cite{BrunelliCV2021,Sobrinho2005,Griff2010} (\textit{isotonicity} for short): for any $c_1,c_2,c\in \mathcal{C}$ such that $c_1\prec c_2$, we have  $c_1\oplus c\prec c_2\oplus c$. This property implies the following \textit{walk extension} property: for any two walks $W$ and $X$ such that $\gamma(W)\prec \gamma(X)$ and for any temporal edge $e$ which can extend both $W$ and $X$, we have $\gamma(W.e)\prec\gamma(X.e)$ (that is, if several walks are extended by a given temporal edge $e$, then the best cost is obtained only by extending a walk with minimum cost). The isotonicity property also implies the following \textit{prefix} property, which generalizes Fact~\ref{fact:sfo_prefixoptimality} and is similar to the prefix-optimality property introduced in \cite{Rymar2023}.

\begin{fact}\label{fact:prefixoptimality}
Let $G=(V,E,\beta)$ be a temporal graph and $(\mathcal{C},\gamma,\oplus,\preceq)$ be a cost structure satisfying the isotonicity property. For any node $s\in V$, if a walk $W$ with last temporal edge $f\in E$ has minimum cost among the $sf$-walks, and $e\in E$ is a temporal edge of $W$, then the prefix of $W$ up to the temporal edge $e$ has minimum cost among the $se$-walks.
\end{fact}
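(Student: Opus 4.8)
The plan is to argue by contradiction, mimicking the proof of Fact~\ref{fact:sfo_prefixoptimality} but replacing ``fewest edges'' by ``minimum cost'' and using isotonicity to propagate a strict cost improvement along the suffix. Write $W = W_1.W_2$, where $W_1$ is the prefix of $W$ up to (and including) the temporal edge $e$, and $W_2$ is the remaining suffix, so that $W_1$ is an $se$-walk and $W_1.W_2$ is the given minimum-cost $sf$-walk. Suppose, for contradiction, that $W_1$ does not have minimum cost among the $se$-walks; then there is an $se$-walk $W_1'$ with $\gamma(W_1')\prec\gamma(W_1)$.

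The first key observation is that $W_2$ can legitimately extend $W_1'$. Indeed, both $W_1$ and $W_1'$ end with the same temporal edge $e$, hence they share the same arrival time $\arr(e)$; since the condition for appending the first edge of $W_2$ to a walk depends only on the arrival time of that walk's last edge (both the reachability condition $\arr(e)\le\tau$ and the waiting constraint $\tau\le\arr(e)+\beta$), the suffix $W_2$ attaches to $W_1'$ exactly as it does to $W_1$. Consequently $W_1'.W_2$ is a valid $sf$-walk.

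The second step is to propagate the strict inequality $\gamma(W_1')\prec\gamma(W_1)$ through the suffix. Writing $W_2=\langle g_1,\ldots,g_m\rangle$, I would apply the walk extension property edge by edge: from $\gamma(W_1')\prec\gamma(W_1)$ and the fact that $g_1$ extends both walks, obtain $\gamma(W_1'.g_1)\prec\gamma(W_1.g_1)$; since these two walks again end with the same edge $g_1$ (and hence have the same arrival time), the edge $g_2$ extends both, and a further application gives $\gamma(W_1'.g_1.g_2)\prec\gamma(W_1.g_1.g_2)$; an easy induction on the length of $W_2$ then yields $\gamma(W_1'.W_2)\prec\gamma(W_1.W_2)=\gamma(W)$. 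This contradicts the assumed minimality of $W$ among the $sf$-walks, proving the fact. The degenerate case $e=f$, in which $W_2$ is empty and $W_1=W$, is immediate.

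The only delicate point is the first observation: checking that replacing the prefix does not break the temporal feasibility of the suffix. This is exactly where the common last edge $e$ — and hence the common arrival time — of $W_1$ and $W_1'$ is used, and it is what makes the edge-by-edge application of the walk extension property legitimate at every step. Everything else is a routine induction driven by the isotonicity (equivalently, walk extension) property assumed in the statement.
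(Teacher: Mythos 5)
Your proof is correct and follows essentially the same route as the paper's: decompose $W$ into the prefix $W_1$ ending with $e$ and the suffix $W_2$, assume a strictly cheaper $se$-walk exists, and propagate the strict inequality edge by edge along $W_2$ via isotonicity to contradict the minimality of $W$. The only difference is that you explicitly justify the temporal feasibility of attaching $W_2$ to the cheaper prefix (both prefixes end with $e$, hence share the same arrival time), a point the paper's proof uses implicitly; this is a welcome clarification, not a divergence.
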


\begin{proof}
Let $W_{1}$ be the prefix of $W$ up to the temporal edge $e$ and let us prove that $W_{1}$ has minimum cost among the $se$-walks. To this aim, let $\langle e_{1}e_{2}\cdots e_{k}\rangle$, with $e_{k}=f$ be the suffix $W_{2}$ of $W$ following the temporal edge $e$. Suppose that $W_{1}$ has not minimum cost and that there exists another $se$-walk such that $\gamma(X)\prec\gamma(W_1)$. By the isotonicity of $\cost$ it follows that $\gamma(X.e_{1})\prec\gamma(W_1.e_{1})$. By continuing in this way, we have that $\gamma(X.e_{1}.e_{2}.\cdots.e_{k})\prec\gamma(W_1.e_{1}.e_{2}.\cdots.e_{k})=\gamma(W)$, contradicting the fact that $W$ has minimum cost among the $sf$-walks. The fact thus follows.
\end{proof}

\begin{table*}
\centering
\begin{adjustbox}{width=\textwidth}
\begin{tabular}{||l||c|c|c|c||}
\cline{2-5}
\multicolumn{1}{c||}{} & $\mathcal{C}$ & $\gamma(e)$ & $c_{1}\oplus c_{2}$ & $c_{1}\preceq c_{2}$ \\
\hline
\hline
All & $\{0\}$ & $0$ & $0$ & $\mathtt{true}$ \\
\hline
Shortest & $\mathbf{N}$ & $1$ & $c_{1}+c_{2}$ & $c_{1}\leq c_{2}$ \\
\hline
Latest & $\mathbf{Z}$ & $-\mathrm{dep}(e)$ & $c_{1}$ & $c_{1}\leq c_{2}$ \\
\hline
Shortest latest & $\mathbf{Z}\times\mathbf{N}$ & $(-\mathrm{dep}(e), 1)$ & $(c_{1}[1], c_{1}[2]+c_{2}[2])$ & $c_{1}[1]<c_{2}[1]\ \mathbf{or}\ (c_{1}[1]= c_{2}[1]\ \mathbf{and}\ c_{1}[2]\leq c_{2}[2])$ \\
\hline
\end{tabular}
\end{adjustbox}
\caption{The 4 cost structures used in this paper}
\label{tab:coststructures}
\end{table*}

\begin{table*}
\centering
\begin{tabular}{||l||c|c||}
\cline{2-3}
\multicolumn{1}{c||}{} & $\mathcal{C}^{F}$ & $c_{1}\preceq^{F}c_{2}$ \\
\hline
\hline
Natural & $\mathbf{Z}$ & $c_{1}\leq c_{2}$   \\
\hline
Lexicographic & $\mathbf{Z}\times\mathbf{N}$ & $c_{1}[1]< c_{2}[1]\ \mathbf{or}\ (c_{1}[1]= c_{2}[1]\ \mathbf{and}\ c_{1}[2]\leq c_{2}[2])$ \\
\hline
\end{tabular}
\caption{The 2 target cost structures along with the $\tcf$ functions used in this paper}
\label{tab:targetcoststructures}
\end{table*}

In this paper, we will consider the four cost structures shown in Table~\ref{tab:coststructures}. It is easy to verify that all of them satisfy the isotonicity property. For example, let us prove that the shortest latest cost structure satisfies the isotonicity property. Suppose that $c_{1}, c_{2}\in\mathbf{Z}\times\mathbf{N}$ satisfy $c_{1}\prec c_{2}$: this implies that either $c_{1}[1]>c_{2}[1]$ or $c_{1}[1]=c_{2}[1]\wedge c_{1}[2]<c_{2}[2]$. For any $c\in\mathbf{Z}\times\mathbf{N}$, $c_{1}\oplus c=(c_{1}[1], c_{1}[2]+c[2])$ and $c_{2}\oplus c=(c_{2}[1], c_{2}[2]+c[2])$. If $c_{1}[1]>c_{2}[1]$, then $c_{1}\oplus c\prec c_{2}\oplus c$. Otherwise (that is, $c_{1}[1]=c_{2}[1]$ and $c_{1}[2]<c_{2}[2]$), $c_{1}[2]+c[2]<c_{2}[2]+c[2]$ and, thus, $c_{1}\oplus c\prec c_{2}\oplus c$. 

Similar to the algorithms introduced in Section~\ref{sec:algorithms}, the general algorithm consists of three phases. Given a temporal graph $G=(V,E,\beta)$ and a cost structure $(\mathcal{C},\gamma,\oplus,\preceq)$ among the ones in Table~\ref{tab:coststructures}, in the first  phase (that is, the forward phase), the algorithm counts, for any source node $s$ and for any temporal edge $e$, the number of $se$-walks which are optimal with respect to the walk cost function $\gamma$. In the other two phases (that is, the intermediate and the backward phase), the algorithm makes use of a \textit{target} (or \textit{final}) \textit{cost structure} $(\mathcal{C}^{F},\preceq^{F})$, where $\mathcal{C}^{F}$ is the set of possible \textit{target cost values} and $\preceq^{F}$ is a \textit{target cost total order} with $\preceq^{F}\ \subseteq\ \mathcal{C}^{F}\times \mathcal{C}^{F}$. For any two elements $c_{1}$ and $c_{2}$ of $\mathcal{C}^{F}$, we say that $c_{1}=^{F}c_{2}$ if $c_{1}\preceq^{F}c_{2}$ and $c_{2}\preceq^{F}c_{1}$ both hold. We also define the relation $\prec^{F}$ between the elements of $\mathcal{C}^{F}$ as $c_{1}\prec^{F}c_{2}$ if and only if $c_{1}\preceq^{F}c_{2}$ and $c_{1}\neq^{F}c_{2}$. 

Costs and target costs are related through a function $\tcf: E\times \mathcal{C} \rightarrow \mathcal{C}^{F}$ that associates to a temporal edge and a cost, a corresponding target cost. This function needs to satisfy the following \textit{increasing} property: for any $c_1,c_2\in \mathcal{C}$ such that $c_1\prec c_2$ and for any $e\in E$, we have that $\tcf(e,c_1) \prec^{F} \tcf(e,c_2)$. In this paper, we will consider the two target cost structures shown in Table~\ref{tab:targetcoststructures}, which together with a combination of one of the cost structures of Table ~\ref{tab:coststructures} and an appropriate $\tcf$ function will allow us to deal all optimality criteria. For example, \fa\ walks can be modeled by using the latest cost structure of Table~\ref{tab:coststructures} that associates to a walk the opposite of its departure time as a cost with later time being considered as lower cost: for that, it suffices to define the cost of a temporal edge as the opposite of its departure time and a combination function $\oplus$ that returns its first argument. The duration of a walk $W$ with cost $c$ and ending with a temporal edge $e=(u,v,\tau,\lambda)$ is then obtained by using the natural target cost (see Table~\ref{tab:targetcoststructures}) and the target cost function $\tcf(e,c)=\arr(e)+c$, since $\arr(e)=\tau+\lambda$ equals $\arr(W)$ and $c=-\dep(W)$. As shown in Table~\ref{tab:walkmeasures}, any of the walk optimality criteria defined in Section~\ref{sec:preliminaries} can be modelled by an appropriate combination of the introduced cost structures, target cost structures, and target functions.

Given a cost structure $\cost=(\mathcal{C},\gamma,\oplus,\preceq)$ and a target cost structure $\tcost=(\mathcal{C}^{F},\preceq^{F})$ along with a target cost function $\tcf$, we say that a temporal $st$-walk $W$ is \textit{$\tcost$-optimal} if, for any $st$-walk $X$, we have $\tcf(e,\gamma(W))\preceq^{F}\tcf(f,\gamma(X))$, where $e$ and $f$ are the last temporal edge of $W$ and $X$, respectively. We also say that a temporal $se$-walk $W$ is $\cost$-optimal if, for any $se$-walk $X$, we have $\gamma(W)\preceq\gamma(X)$. Note that a $\cost$-optimal $se$-walk $X$ not necessarily is a $\tcost$-optimal $st$-walk where $t$ is the head of $e$. The opposite, instead, is true as shown by the following which is a generalization of Fact~\ref{fact:sfo_toptimalimpliescoptimal}.

\begin{fact}\label{fact:toptimalimpliescoptimal}
Given a temporal graph ${G}=(V,{E},\beta)$, a cost structure $(\mathcal{C},\gamma,\oplus,\preceq)$, a target cost structure $(\mathcal{C}^{F},\preceq^{F})$, and a $\tcf$ function, let $W$ be a $\tcost$-optimal $st$-walk (for some $s,t\in V$) and let $e$ be the last temporal edge of $W$. Then, $W$ is a $\cost$-optimal $se$-walk.
\end{fact}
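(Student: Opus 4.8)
The plan is to mirror the proof of Fact~\ref{fact:sfo_toptimalimpliescoptimal}, but to replace the concrete ``shortest/foremost'' reasoning with the abstract \emph{increasing} property of the target cost function $\tcf$. The whole argument is a short proof by contradiction, so I would not expect a substantial obstacle; the only point requiring care is to arrange the competitors so that the increasing property can be invoked with a \emph{fixed} edge argument, which is the form in which it is stated.

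First I would record the elementary observation that makes the two notions of optimality comparable: since $W$ is an $st$-walk whose last temporal edge is $e$, the head of $e$ is exactly $t$. Consequently every $se$-walk $X$ also terminates at $t$, i.e.\ every $se$-walk is in particular an $st$-walk, and its last temporal edge is again $e$. This is what lets me compare $W$ against an arbitrary competitor $se$-walk by means of the $\tcost$-optimality of $W$.

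Next I would assume, for contradiction, that $W$ is \emph{not} a $\cost$-optimal $se$-walk. Then there is an $se$-walk $X$ with $\gamma(X)\prec\gamma(W)$. Applying the increasing property of $\tcf$ to the fixed edge $e$ and the costs $\gamma(X)\prec\gamma(W)$ yields $\tcf(e,\gamma(X))\prec^{F}\tcf(e,\gamma(W))$. On the other hand, by the first step $X$ is an $st$-walk whose last edge is $e$, so the $\tcost$-optimality of $W$ forces $\tcf(e,\gamma(W))\preceq^{F}\tcf(e,\gamma(X))$. These two relations are contradictory: together they would give $\tcf(e,\gamma(W))=^{F}\tcf(e,\gamma(X))$ while simultaneously $\tcf(e,\gamma(X))\neq^{F}\tcf(e,\gamma(W))$. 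Hence no such $X$ exists, $W$ is $\cost$-optimal among the $se$-walks, and the fact follows.

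The essential ingredient is the increasing property of $\tcf$; without it a strictly cheaper $se$-prefix could nonetheless produce a worse (or incomparable) target cost, breaking the implication. Because the two walks being compared share the same terminal edge $e$, I only ever invoke the increasing property for a single edge, so no additional structural assumption (such as the isotonicity of $\oplus$) is needed for this direction — which is precisely why, unlike Fact~\ref{fact:prefixoptimality}, this is the ``easy'' implication, the converse being false as already noted just before the statement.
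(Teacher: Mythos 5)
Your proof is correct and follows essentially the same route as the paper's: assume a cheaper $se$-walk $X$ exists, note that $X$ is also an $st$-walk ending with $e$, and apply the increasing property of $\tcf$ at the fixed edge $e$ to contradict the $\tcost$-optimality of $W$. Your version merely spells out more explicitly the two observations the paper leaves implicit (that every $se$-walk terminates at $t$, and that the two resulting $\prec^{F}$/$\preceq^{F}$ relations clash), which is a fair elaboration rather than a different argument.
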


\begin{proof}
Suppose that $W$ is not $\cost$-optimal and that there exists another $se$-walk $X$ (and, hence, $st$-walk) such that $\gamma(X)\prec\gamma(W)$. By the increasing property of the $\tcf$ function it follows that $\tcf(e,\gamma(X))\prec^{F}\tcf(e,\gamma(W))$, contradicting the fact that $W$ is a $\tcost$-optimal $st$-walk. The fact thus follows.
\end{proof}

\begin{table}[ht]
\centering
    \begin{tabular}{l|l|l|c}
Optimality criterion & Cost & Target cost & $\tcf(e, c)$ \\
\hline
Shortest & Shortest & Natural & $c$ \\
Foremost & All & Natural & $\mathrm{arr}(e)$ \\
Latest & Latest & Natural & $c$ \\
Fastest & Latest & Natural & $\mathrm{arr}(e)+c$ \\
Shortest foremost & Shortest & Lexicographic & $(\mathrm{arr}(e), c)$\\
Shortest latest & Shortest latest & Lexicographic & $c$\\
Shortest fastest & Shortest latest & Lexicographic & $(\mathrm{arr}(e)+c[1], c[2])$\\
\end{tabular}

\caption{Optimality criteria in terms of cost and target cost structures. In the last column of the table, the semantic of $c$ is the following: number of hops for \sh\ and for \sfo, the opposite of the departure time for latest and for \fa\, the array whose first entry is the opposite of the departure time and whose second entry is the number of hops for shortest latest and \sfa.}
\label{tab:walkmeasures}
\end{table}

Moreover, all the definitions concerning the \sh\ and the \sfo\ betweenness can be appropriately adapted to the any other optimality criterion case and all the results proved for the \sfo\ betweenness can be proved also for the corresponding betweenness.

\subsection{The general algorithm}

The pseudo-code of the general algorithm for any of optimality criteria which fits in our framework is shown in Algorithm~\ref{alg:restless}. Note that the pseudo-code uses the symbol $\infty^{\cost}$ and $\infty^{\tcost}$ to denote the natural infinite value of a cost structure $\cost$ and a target cost structure $\tcost$. The pseudo-code also uses the symbol $0^{\tcost}$ to denote the natural minimum value of a target cost structure $\tcost$.  Moreover, note the strong similarity of the pseudo-code with Algorithm~\ref{alg:sfo_restless}: indeed, the two pseudo-codes are almost the same apart from the use in Algorithm~\ref{alg:restless} of the components of the cost and target cost structures. Even in this case, we do not include the control structure for dealing with numerical approximation problems, which is not needed if big integer and big rational data structures are used.

\begin{algorithm*}[ht]
\small
\Input{temporal graph $G=(V,E,\beta)$ (represented by $E^{\mathrm{dep}}$ and $E^{\mathrm{arr}}$), $s\in V$, cost structure $\cost=(\mathcal{C},\gamma,\oplus,\preceq)$, target cost structure $\tcost=(\mathcal{C}^{F},\preceq^{F})$, and target cost function $\tcf$}
\Output{$s$-temporal betweenness $b_{s,e}$ of each $e\in E$ w.r.t. \cost, \tcost, and $\tcf$}

Compute the lists $E^{\mathrm{dep}}_{\mathrm{node}}$ and $E^{\mathrm{arr}}_{\mathrm{dep}};$\\
\lForEach{$v\in V$}{$l[v]:=1;r[v]:=0;\intervs{}[v]:=\emptyset;$}
\lForEach{$e\in E$}{$L[e]:=\left|E^{arr}\right|+1;R[e]:=0;C[e]:=\infty^{\cost};\Sigma[e]:=0;$}
\ForEach{$e = (u,v,\tau,\lambda)\in E^{\mathrm{arr}}$}{\label{alg4:forwardstart}
    \lIf{$E^{\mathrm{arr}}_{\mathrm{dep}}[e]\geq l[u]$}{$\processcosts(u,E^{\mathrm{arr}}_{\mathrm{dep}}[e]);$}
    \lIf{$u=s$}{\leIf*{$\gamma(e)\prec C[e]$}{$C[e]:=\gamma(e); \Sigma[e]:=1;$}{\lIf*{$\gamma(e)=C[e]$}{$\Sigma[e]:=\Sigma[e]+1;$}}}
    \If{$C[e]\neq\infty^{\cost}$}{
            $a:=l[v];$ \lWhile{$a\leq|E^{\mathrm{dep}}_{\mathrm{node}}[v]|\wedge\dep(E^{\mathrm{arr}}[E^{\mathrm{dep}}_{\mathrm{node}}[v][a]]) < \tau+\lambda$}{$a:=a+1;$}
            $b:=r[v];$ \lWhile{$b<|E^{\mathrm{dep}}_{\mathrm{node}}[v]|\wedge\dep(E^{\mathrm{arr}}[E^{\mathrm{dep}}_{\mathrm{node}}[v][b+1]]) \leq \tau+\lambda+\beta$}{$b:=b+1;$}
            $\processcosts(v,a-1); l_{c} := \max(a, r[v]+1);$\\
            \lWhile{$|\intervs{}[v]|>0\wedge C[e] \prec \mathrm{last}(\intervs{}[v]).c$}{$Q := \mathrm{poplast}(\intervs{}[v])$; $l_{c} := Q.l$;  \lForEach*{$f\in Q.P$}{$R[f] := a-1;$}}
            \lIf{$|\intervs{}[v]|>0\wedge \mathrm{last}(\intervs{}[v]).c = C[e]$}{$Q := \mathrm{last
            }(\intervs{}[v])$; $Q.r := b$; $Q.\eta := Q.\eta+\Sigma[e]$; $Q.P := Q.P\cup\{e\}$; $L[e] := Q.l;$}
            \lElse{\lIf*{$l_{c}\leq b$}{$L[e] := l_{c}$; $\mathrm{pushlast
            }(\intervs{}[v], (l_{c}, b, C[e], \{e\},\Sigma[e]))$;}}\label{alg4:forwardend}
            $R[e]:=b$; $r[v]:=b$;
    }
}
\lForEach{$v\in V$}{$c^{*}[v]=\infty^{\Theta};\sigma^{*}[v]=0;\delta[v]:=0;$}
$c^{*}[s]:=0^{\Theta};\sigma^{*}[s]:=1;$ \lForEach{$e\in E$}{$\Sigma^{*}[e]:=0;b[e]:=0;$}
\lForEach{$e = (u,v,\tau,\lambda)\in E^{\mathrm{arr}}$\label{alg4:pre-backward1}}{\lIf*{$(C[e]<\infty\wedge\tcf(e,C[e])\prec^{F}c^{*}[v])$}{$c^{*}[v]:=\tcf(e,C[e]);$}}
\lForEach{$e = (u,v,\tau,\lambda)\in E^{\mathrm{arr}}$\label{alg4:pre-backward2}}{\lIf*{$\tcf(e,C[e])=^{F}c^{*}[v]$}{$\Sigma^{*}[e]:=\Sigma[e]$; $\sigma^{*}[v]:=\sigma^{*}[v]+\Sigma[e];$}}
\ForEach{$e = (u,v,\tau,\lambda)\in \mathrm{reverse}(E^{\mathrm{arr}})$}{\label{alg4:backwardstart}
    \If{$0<L[e]\leq R[e]$}{
        \lForEach{$f\in E_{\mathrm{node}}^{\mathrm{dep}}[v][\max(l[v],R[e]+1):r[v]]$}{\lIf*{$v\neq s\vee C[f] =
        C[e]\oplus\gamma(f)$}{$\delta[v] := \delta[v] - b[f]/\Sigma[f];$}}
        $r[v]:=R[e];$ \lForEach{$f\in E_{\mathrm{node}}^{\mathrm{dep}}[v][L[e]:\min(R[e],l[v]-1)]$}{\lIf*{$v\neq s\vee C[f]=C[e]\oplus\gamma(f)$}{$\delta[v] := \delta[v] + b[f]/\Sigma[f];$}}
        $l[v]:=L[e];b[e]:=\Sigma[e]\delta[v];$ 
    }
    \lIf*{$\Sigma^{*}[e]>0$}{$b[e]:=b[e]+\Sigma^{*}[e]/\sigma^{*}[v];$}
\label{alg4:backwardend}}
\Return $b$ 
\BlankLine
\SetArgSty{textbf}
\Finalize{$v,j$}:\\
\Indp
    \While{$|\intervs{}[v]|>0\wedge\mathrm{first}(\intervs{}[v]).l\leq j$}{
        $Q := \mathrm{first}(\intervs{}[v]);ql:=Q.l; q\eta:=Q.\eta;$\\
        \While{$|Q.P|>0\wedge R[\mathrm{first}(Q.P)]\leq j$}{
            $e:=\mathrm{popfirst}(Q.P)$; \lForEach*{$f\in E_{\mathrm{node}}^{\mathrm{dep}}[v][ql:R[e]]$}{$C[f]:=Q.c\oplus\gamma(f);\Sigma[f]:=q\eta;$}\\
            $q\eta:=Q.\eta-\Sigma[e];ql:=R[e]+1;$ \lForEach*{$f\in E_{\mathrm{node}}^{\mathrm{dep}}[v][ql:\min(j,Q.r)]$}{$C[f]:=Q.c\oplus\gamma(f);\Sigma[f]:=q\eta;$}
        }
        \leIf{$j\geq Q.r$}{$\mathrm{popfirst}(\intervs{}[v]);$}{$Q.l:=j+1;Q.\eta:=q\eta;
        \mathbf{break};$}
    }
    $l[v]:=j+1;$
\caption{compute $b_{s,e}$ of all temporal edges}
\label{alg:restless}
\end{algorithm*}

\section{Comparing \sfo\ BMNR and Fast times}
\label{sec:experiment1sfb}

\begin{center}
\begin{tabular}{@{}lrrr@{}}\toprule
\textbf{Temporal graph} & \multicolumn{1}{c}{$\mathbf{t}_{\mathbf{\textsc{BMNR}}}$} & \multicolumn{1}{c}{$\mathbf{t}_{\mathbf{\textsc{Fast}}}$} & \multicolumn{1}{c}{$\mathbf{t}_{\mathbf{\textsc{BMNR}}}/\mathbf{t}_{\mathbf{\textsc{Fast}}}$} \\\midrule
\texttt{Infectious} & 3158.12 & 1595.56 & 1.98 \\
\texttt{Digg reply} & 1201.86 & 507.74 & 2.37 \\
\texttt{Facebook wall} & 3402.02 & 1354.76 & 2.51 \\
\texttt{Slashdot reply} & 4518.64 & 1730.93 & 2.61 \\
\texttt{SMS} & 12501.27 & 4773.57 & 2.62 \\
\texttt{Wiki elections} & 526.94 & 125.10 & 4.21 \\
\texttt{College msg} & 246.30 & 27.95 & 8.81 \\
\texttt{Topology} & 8926.13 & 930.49 & 9.59 \\
\texttt{Hypertext 2009} & 85.17 & 1.09 & 78.42 \\
\texttt{High school 2011} & 134.05 & 1.63 & 82.34 \\
\texttt{High school 2012} & 402.18 & 3.71 & 108.54 \\
\texttt{Primary school} & 1938.72 & 17.29 & 112.16 \\
\texttt{Email EU} & 12233.57 & 91.78 & 133.30 \\
\texttt{Hospital ward} & 204.56 & 1.14 & 178.88 \\
\texttt{High school 2013} & 7329.39 & 38.94 & 188.21 \\
\bottomrule
\end{tabular}

\end{center}

\section{Ranking correlations}
\label{sec:rankingcorrelation}

In the Tables~\ref{tbl:tfab}-\ref{tbl:9999}, we show the different correlations (that is, Kendall, weighted Kendall, and intersection correlation) between the rankings produced by the execution of our algorithm (that is, Algorithm~\ref{alg:restless}) on the networks of the first dataset, for different optimality criteria (that is, fastest, foremost, shortest, shortest fastest, and shortest foremost) and for different values of $\beta$ (that is, $\beta=300,600,1200,2400,\infty$). In the Tables~\ref{tbl:300pt}-\ref{tbl:9999pt}, we show the same correlations  between the rankings produced by the execution of the algorithm on the public transport networks, for different values of $\beta$ and for different pairs of optimality criteria.

\begin{table*}

\caption{$\beta=\infty$ and Kendall, weighted Kendall, and intersection correlation}
\label{tbl:9999pt}
\end{table*}

\clearpage
\bibliographystyle{plainurl}
\bibliography{references.bib}

\begin{thebibliography}{10}

\bibitem{Bavelas50}
Alex Bavelas.
\newblock Communication patterns in task-oriented groups.
\newblock {\em The journal of the acoustical society of America},
  22(6):725--730, 1950.

\bibitem{Becker2023}
Ruben Becker, Pierluigi Crescenzi, Antonio Cruciani, and Bojana Kodric.
\newblock Proxying betweenness centrality rankings in temporal networks.
\newblock In {\em 21st International Symposium on Experimental Algorithms},
  volume 265 of {\em LIPIcs}, pages 6:1--6:22, 2023.

\bibitem{TSBProxyWebSite}
Ruben Becker, Pierluigi Crescenzi, Antonio Cruciani, and Bojana Kodric.
\newblock {TSBProxy}.
\newblock \url{https://github.com/piluc/TSBProxy}, last checked on October 13,
  2023.

\bibitem{Beres_2018}
Ferenc B{\'e}res, R{\'o}bert P{\'a}lovics, Anna Ol{\'a}h, and Andr{\'a}s~A
  Bencz{\'u}r.
\newblock Temporal walk based centrality metric for graph streams.
\newblock {\em Applied network science}, 3(1):32:1--32:26, 2018.

\bibitem{LAWWebSite}
Paolo Boldi, Massimo Santini, and Sebastiano Vigna.
\newblock {Crawdad}.
\newblock \url{https://law.di.unimi.it}, last checked on December 31, 2023.

\bibitem{Borassi2019}
Michele Borassi and Emanuele Natale.
\newblock {KADABRA} is an adaptive algorithm for betweenness via random
  approximation.
\newblock {\em {ACM} J. Exp. Algorithmics}, 24(1):1.2:1--1.2:35, 2019.

\bibitem{Brandes_2001}
Ulrik Brandes.
\newblock A faster algorithm for betweenness centrality.
\newblock {\em Journal of mathematical sociology}, 25(2):163--177, 2001.

\bibitem{BrunelliCV2021}
Filippo Brunelli, Pierluigi Crescenzi, and Laurent Viennot.
\newblock On computing {Pareto} optimal paths in weighted time-dependent
  networks.
\newblock {\em Inf. Process. Lett.}, 168:106086, 2021.

\bibitem{BrunelliV2022}
Filippo Brunelli and Laurent Viennot.
\newblock Minimum-cost temporal walks under waiting-time constraints in linear
  time.
\newblock {\em CoRR}, abs/2211.12136, 2022.
\newblock \href {https://arxiv.org/abs/2211.12136} {\path{arXiv:2211.12136}},
  \href {https://doi.org/10.48550/arXiv.2211.12136}
  {\path{doi:10.48550/arXiv.2211.12136}}.

\bibitem{XuanFJ03}
Binh{-}Minh Bui{-}Xuan, Afonso Ferreira, and Aubin Jarry.
\newblock Computing shortest, fastest, and foremost journeys in dynamic
  networks.
\newblock {\em Int. J. Found. Comput. Sci.}, 14(2):267--285, 2003.

\bibitem{BussMNR20}
Sebastian Bu{\ss}, Hendrik Molter, Rolf Niedermeier, and Maciej Rymar.
\newblock Algorithmic aspects of temporal betweenness.
\newblock In Rajesh Gupta, Yan Liu, Jiliang Tang, and B.~Aditya Prakash,
  editors, {\em {KDD} '20: The 26th {ACM} {SIGKDD} Conference on Knowledge
  Discovery and Data Mining, Virtual Event, CA, USA, August 23-27, 2020}, pages
  2084--2092. {ACM}, 2020.
\newblock \href {https://doi.org/10.1145/3394486.3403259}
  {\path{doi:10.1145/3394486.3403259}}.

\bibitem{Casteigts2012}
Arnaud Casteigts, Paola Flocchini, Walter Quattrociocchi, and Nicola Santoro.
\newblock Time-varying graphs and dynamic networks.
\newblock {\em {IJPEDS}}, 27(5):387--408, 2012.

\bibitem{Casteigts2021}
Arnaud Casteigts, Anne{-}Sophie Himmel, Hendrik Molter, and Philipp Zschoche.
\newblock Finding temporal paths under waiting time constraints.
\newblock {\em Algorithmica}, 83(9):2754--2802, 2021.

\bibitem{SocioPatternsWebSite}
Ciro Cattuto and Alain Barrat.
\newblock {SocioPatterns}.
\newblock \url{https://www.sociopatterns.org/}, last checked on October 4,
  2022.

\bibitem{Crescenzi2019}
Pierluigi Crescenzi, Cl{\'{e}}mence Magnien, and Andrea Marino.
\newblock Approximating the temporal neighbourhood function of large temporal
  graphs.
\newblock {\em Algorithms}, 12(10):211, 2019.

\bibitem{Crescenzi2020}
Pierluigi Crescenzi, Cl{\'{e}}mence Magnien, and Andrea Marino.
\newblock Finding top-$k$ nodes for temporal closeness in large temporal
  graphs.
\newblock {\em Algorithms}, 13(9):211, 2020.

\bibitem{Cruciani2024}
Antonio Cruciani.
\newblock Mantra: Temporal betweenness centrality approximation through
  sampling, 2024.
\newblock \href {https://arxiv.org/abs/2304.08356} {\path{arXiv:2304.08356}}.

\bibitem{GraphsWebSite}
James Fairbanks, Mathieu Besan{\c{c}}on, Sch{\"o}lly Simon, J{\'u}lio Hoffiman,
  Nick Eubank, and Stefan Karpinski.
\newblock {JuliaGraphs/Graphs.jl: an optimized graphs package for the Julia
  programming language}.
\newblock \url{https://github.com/JuliaGraphs/Graphs.jl}, last checked on
  February 4, 2024.

\bibitem{Falzon2018}
Lucia Falzon, Eric Quintane, John Dunn, and Garry Robins.
\newblock Embedding time in positions: Temporal measures of centrality for
  social network analysis.
\newblock {\em Social Networks}, 54:168--178, 2018.

\bibitem{Ferreira2004}
Afonso Ferreira.
\newblock Building a reference combinatorial model for manets.
\newblock {\em {IEEE} Netw.}, 18(5):24--29, 2004.

\bibitem{Foschini2014}
Luca Foschini, John Hershberger, and Subhash Suri.
\newblock On the {Complexity} of {Time}-{Dependent} {Shortest} {Paths}.
\newblock {\em Algorithmica}, 68(4):1075--1097, 2014.

\bibitem{Freeman77}
Linton~C. Freeman.
\newblock A set of measures of centrality based on betweenness.
\newblock {\em Sociometry}, 40(1):35--41, March 1977.
\newblock URL: \url{http://dx.doi.org/10.2307/3033543}, \href
  {https://doi.org/10.2307/3033543} {\path{doi:10.2307/3033543}}.

\bibitem{Ghanem_2017}
Marwan Ghanem, Florent Coriat, and Lionel Tabourier.
\newblock Ego-betweenness centrality in link streams.
\newblock In {\em Proceedings of the 2017 {IEEE/ACM} International Conference
  on Advances in Social Networks Analysis and Mining 2017, Sydney, Australia,
  July 31 - August 03, 2017}, pages 667--674. {ACM}, 2017.

\bibitem{Holme_2015}
Petter Holme.
\newblock Modern temporal network theory: a colloquium.
\newblock {\em The European Physical Journal B}, 88:234, 2015.

\bibitem{Kendall38}
Maurice~G Kendall.
\newblock A new measure of rank correlation.
\newblock {\em Biometrika}, 30(1/2):81--93, 1938.

\bibitem{Kujala2018}
R.~Kujala, C.~Weckström, R.~Darst, M.~Madlenocić, and J.~Saramäki.
\newblock A collection of public transport network data sets for 25 cities.
\newblock {\em Sci. Data}, 5:article number: 180089, 2018.

\bibitem{KonectWebSite}
J.~Kunegis.
\newblock {The {KONECT} Project}.
\newblock \url{http://konect.cc}, last checked on October 4, 2022.

\bibitem{Latapy_2018}
Matthieu Latapy, Tiphaine Viard, and Cl{\'{e}}mence Magnien.
\newblock Stream graphs and link streams for the modeling of interactions over
  time.
\newblock {\em Soc. Netw. Anal. Min.}, 8(1):61:1--61:29, 2018.

\bibitem{SnapnetsWebSite}
Jure Leskovec and Andrej Krevl.
\newblock {SNAP Datasets}: {Stanford} large network dataset collection.
\newblock \url{http://snap.stanford.edu/data}, last checked on October 4, 2022.

\bibitem{Lv2019}
Laishui Lv, Kun Zhang, Ting Zhang, Dalal Bardou, Jiahui Zhang, and Ying Cai.
\newblock Pagerank centrality for temporal networks.
\newblock {\em Physics Letters A}, 383(12):1215--1222, 2019.

\bibitem{Michail2016}
O.~Michail.
\newblock An introduction to temporal graphs: An algorithmic perspective.
\newblock {\em Internet Mathematics}, 12(4):239--280, 2016.

\bibitem{Moreno1934}
Jacob Moreno.
\newblock {\em Who Shall Survive? A new approach to the problem of human
  interrelations}.
\newblock Nervous and Mental Disease Publishing Co., 1934.

\bibitem{Oettershagen_2020}
Lutz Oettershagen and Petra Mutzel.
\newblock Efficient top-k temporal closeness calculation in temporal networks.
\newblock In {\em 2020 IEEE International Conference on Data Mining (ICDM)},
  pages 402--411. IEEE, 2020.

\bibitem{Oettershagen_2022}
Lutz Oettershagen, Petra Mutzel, and Nils~M. Kriege.
\newblock Temporal walk centrality: Ranking nodes in evolving networks.
\newblock In {\em {WWW} '22: The {ACM} Web Conference 2022, Virtual Event,
  Lyon, France, April 25 - 29, 2022}, pages 1640--1650. {ACM}, 2022.

\bibitem{RiondatoU18}
Matteo Riondato and Eli Upfal.
\newblock {ABRA:} approximating betweenness centrality in static and dynamic
  graphs with rademacher averages.
\newblock {\em {ACM} Trans. Knowl. Discov. Data}, 12(5):61:1--61:38, 2018.
\newblock \href {https://doi.org/10.1145/3208351} {\path{doi:10.1145/3208351}}.

\bibitem{TemporalNetworkRepositoryWebsite}
Ryan~A. Rossi and Nesreen~K. Ahmed.
\newblock Network repository.
\newblock \url{https://networkrepository.com}, last checked on October 4, 2022.

\bibitem{Rozenshtein_2016}
Polina Rozenshtein and Aristides Gionis.
\newblock Temporal pagerank.
\newblock In {\em Joint European Conference on Machine Learning and Knowledge
  Discovery in Databases}, pages 674--689. Springer, 2016.

\bibitem{Rymar2023}
Maciej Rymar, Hendrik Molter, Andr{\'{e}} Nichterlein, and Rolf Niedermeier.
\newblock Towards classifying the polynomial-time solvability of temporal
  betweenness centrality.
\newblock {\em J. Graph Algorithms Appl.}, 27(3):173--194, 2023.

\bibitem{Santoro_2022}
Diego Santoro and Ilie Sarpe.
\newblock {ONBRA:} rigorous estimation of the temporal betweenness centrality
  in temporal networks.
\newblock In {\em {WWW} '22: The {ACM} Web Conference 2022, Virtual Event,
  Lyon, France, April 25 - 29, 2022}, pages 1579--1588. {ACM}, 2022.

\bibitem{ONBRAWebSite}
Diego Santoro and Ilie Sarpe.
\newblock {ONBRA:} rigorous estimation of the temporal betweenness centrality
  in temporal networks.
\newblock \url{https://github.com/iliesarpe/onbra}, last checked on January 22,
  2024.

\bibitem{Santoro_2011}
Nicola Santoro, Walter Quattrociocchi, Paola Flocchini, Arnaud Casteigts, and
  Fr{\'{e}}d{\'{e}}ric Amblard.
\newblock Time-varying graphs and social network analysis: Temporal indicators
  and metrics.
\newblock {\em CoRR}, abs/1102.0629, 2011.

\bibitem{PeriodicWebSite}
David Schoch.
\newblock Periodic table of network centrality.
\newblock \url{http://schochastics.net/sna/periodic.html}, last checked on
  February 5, 2024.

\bibitem{Schoch2017}
David Schoch, Thomas~W. Valente, and Ulrik Brandes.
\newblock Correlations among centrality indices and a class of uniquely ranked
  graphs.
\newblock {\em Soc. Networks}, 50:46--54, 2017.

\bibitem{Simard_2023}
Fr{\'{e}}d{\'{e}}ric Simard, Cl{\'{e}}mence Magnien, and Matthieu Latapy.
\newblock Computing betweenness centrality in link streams.
\newblock {\em J. Graph Algorithms Appl.}, 27(3):195--217, 2023.

\bibitem{Sobrinho2005}
Jo{\~{a}}o~L. Sobrinho.
\newblock An algebraic theory of dynamic network routing.
\newblock {\em {IEEE/ACM} Trans. Netw.}, 13(5):1160--1173, 2005.

\bibitem{Griff2010}
Jo{\~{a}}o~L. Sobrinho and Timothy~G. Griffin.
\newblock Routing in equilibrium.
\newblock In {\em 19th International Symposium on Mathematical Theory of
  Networks and System}, pages 941--947, 2010.

\bibitem{Tsalouchidou_2020}
Ioanna Tsalouchidou, Ricardo Baeza{-}Yates, Francesco Bonchi, Kewen Liao, and
  Timos Sellis.
\newblock Temporal betweenness centrality in dynamic graphs.
\newblock {\em Int. J. Data Sci. Anal.}, 9(3):257--272, 2020.

\bibitem{Vigna15}
Sebastiano Vigna.
\newblock A weighted correlation index for rankings with ties.
\newblock In Aldo Gangemi, Stefano Leonardi, and Alessandro Panconesi, editors,
  {\em Proceedings of the 24th International Conference on World Wide Web,
  {WWW} 2015, Florence, Italy, May 18-22, 2015}, pages 1166--1176. {ACM}, 2015.
\newblock \href {https://doi.org/10.1145/2736277.2741088}
  {\path{doi:10.1145/2736277.2741088}}.

\bibitem{Wu_2016}
Huanhuan Wu, James Cheng, Yiping Ke, Silu Huang, Yuzhen Huang, and Hejun Wu.
\newblock Efficient algorithms for temporal path computation.
\newblock {\em {IEEE} Trans. Knowl. Data Eng.}, 28(11):2927--2942, 2016.

\bibitem{Zhang2024}
Tianming Zhang, Yunjun Gao, Jie Zhao, Lu~Chen, Lu~Jin, Zhengyi Yang, Bin Cao,
  and Jing Fan.
\newblock Efficient exact and approximate betweenness centrality computation
  for temporal graphs.
\newblock In {\em Proceedings of the ACM on Web Conference 2024}, page
  2395–2406, 2024.

\end{thebibliography}

\end{document}